\documentclass[12pt]{article}
\usepackage{array}
\newcolumntype{P}[1]{>{\centering\arraybackslash}p{#1}}
\newcolumntype{M}[1]{>{\centering\arraybackslash}m{#1}}

\usepackage{amsmath,amsthm,amsfonts}
\usepackage{enumerate}
\usepackage{natbib}
\usepackage{url} 
\DeclareMathOperator*{\plim}{plim}
\usepackage{graphicx,psfrag,epsf}

\usepackage{setspace}
 \DeclareGraphicsExtensions{.eps, .ps}
\usepackage{soul,color}
\usepackage{booktabs}
\usepackage{threeparttable}
\usepackage{amssymb}

\usepackage{newfloat}
\usepackage{multirow}
\usepackage{arydshln}
\usepackage{xcolor}
\usepackage{hyperref}

\graphicspath{{./art/}}

\usepackage{algorithm}
\usepackage{algpseudocode}

\algrenewcommand\algorithmicrequire{\textbf{Input:}}
\algrenewcommand\algorithmicensure{\textbf{Output:}}

\usepackage{tikz}
\usetikzlibrary{arrows.meta,
	calc,
	positioning,
	quotes,
    decorations.pathreplacing}
\usetikzlibrary{shapes.geometric}

\def\E{\mathbb{E}}

\newtheorem{theorem}{Theorem}[section]

\newtheorem{lemma}[theorem]{Lemma}
\newtheorem{assumption}[theorem]{Assumption}
\newtheorem{remark}[theorem]{Remark}

\newtheorem{proposition}[theorem]{Proposition}

\newcommand{\indep}{\perp \!\!\! \perp}
\newcommand{\R}{\mathbb{R}}

\newcommand{\Cov}{\mathrm{Cov}}
\newcommand{\Var}{\mathrm{Var}}
\newcommand{\Tr}{\mathrm{Tr}}

\newcommand{\vertiii}[1]{{\left\vert\kern-0.25ex\left\vert\kern-0.25ex\left\vert #1 
    \right\vert\kern-0.25ex\right\vert\kern-0.25ex\right\vert}}

\newcommand\independent{\protect\mathpalette{\protect\independenT}{\perp}}
\def\independenT#1#2{\mathrel{\rlap{$#1#2$}\mkern2mu{#1#2}}}
\newcommand\given{\,|\,}

\numberwithin{equation}{section}

\newcommand{\blind}{1}

\begin{document}

\def\spacingset#1{\renewcommand{\baselinestretch}%
{#1}\small\normalsize}

\spacingset{1}

\if1\blind
{
  \title{\bf Conditional Independence Testing in Time Series}

  \author{
    Jieru Shi\\
    Department of Statistical Science, University College London, UK\\
    \texttt{jieru.shi@ucl.ac.uk}\\[0.8em]
    and\\[0.8em]
    Rajen D. Shah\\
Statistical Laboratory,\\
Department of Pure Mathematics and Mathematical Statistics,\\
University of Cambridge, UK\\
\texttt{r.shah@statslab.cam.ac.uk}
  }

  \date{}
  \maketitle
}
\fi

\if0\blind
{
  \bigskip
  \bigskip
  \bigskip
  \begin{center}
    {\LARGE\bf Conditional Independence Testing in Time Series}
  \end{center}
  \medskip
}
\fi

\bigskip
\begin{abstract}
We consider the problem of testing Granger causality in time series, specifically, whether the future outcome $Y_{t+1}$ and the exposure history $\bar X_t$ are conditionally independent given the history of $\bar Y_t$ and a set of confounding variables $\bar Z_t$ up to time $t$. 
This testing procedure distinguishes true causal effects from associations driven by common external processes, supporting reliable decision-makings in applications such as finance, neuroscience, and climate science.
While traditional approaches assume a linear vector autoregressive (VAR) model and are vulnerable to misspecification, we instead address a model-free version of the problem.
We propose nonlinearly regressing both the outcome and exposure on the joint history of $Y$ and $Z$, and calculating a test statistic based on the sample covariance of residuals, we call the Generalised Temporal Covariance Measure (GTCM). To account for heteroscedasticity and improve power against local alternatives, we incorporate variance weights and employ a data-adaptive test based on polynomial lag expansions. 
The type I error control of the test relies on the relatively weak assumption that user-chosen regression procedures estimate conditional means at a sufficiently fast rate that is slow enough to accommodate nonparametric settings. By further assuming stability of the regression procedures and weak dependence in the time series, we can utilise the entire dataset to estimate the conditional means without splitting the time series into subsets.
\end{abstract}



\noindent%
{\it Keywords:}  Algorithm Stability; Causal Discovery; Weak Dependence; Double Machine Learning; Generalised Covariance Measure; Granger Causality.
\vfill


\newpage
\spacingset{1.45}

\section{Introduction}

It is of great interest to understand the causal structure of multivariate time series. For instance, in neuroscience, one may ask whether electrical activity in one brain region drives subsequent responses elsewhere during a cognitive task. In macroeconomics, one may ask whether changes in central bank interest rates affect future inflation, or whether both variables are responding to common macroeconomic conditions.

A framework for addressing these questions is provided by the notion of a structural causal models (SCMs) \citep{peters2017elements}. To fix ideas, suppose we have series $(X_t, Y_t, Z_t) \in \R \times \R \times \R^{d_Z}$ for $t=1,\ldots,K$. Here \(X_t\) denotes the source process, \(Y_t\) the target process, and \(Z_t\) collects other observed processes used for adjustment. We are interested in whether the past of \(X\) carries additional information about \(Y_{t+1}\) beyond the observed history of \(Y\) and \(Z\).

\sloppy An SCM in particular expresses \(Y_{t+1}\) through a structural equation of the form $Y_{t+1}
=
f_t(\bar Y_t,\bar X_t,\bar Z_t,\varepsilon_t),$ where \(\bar Y_t=(Y_t,Y_{t-1},\ldots,Y_1)\), and \(\bar X_t\) and \(\bar Z_t\) are defined analogously. Here \(\{\varepsilon_t\}_{t=1}^K\) are exogenous noise variables. If, in fact,
\begin{equation}
\label{eq:SCM_CI}
Y_{t+1} = f_t(\bar Y_t,\bar Z_t,\varepsilon_t)
\end{equation}
for all \(t\),
we may 
interpret this as the absence of a directed causal effect from \(X\) to
\(Y\); see Figure~\ref{fig:causal_diagram} for a causal directed acyclic graph where this is the case.\footnote{This interpretation relies on the usual causal sufficiency condition: after conditioning on the observed history of \(Y\) and \(Z\), there are no unmeasured common causes of the relevant history of \(X\) and \(Y_{t+1}\).}

\begin{figure}[htbp]
    \centering
   \begin{tikzpicture}[
        scale=0.65, 
        transform shape,
        node distance=1.5cm and 1.5cm,
        main_node/.style={
            circle, 
            draw=black, 
            thick, 
            minimum size=1.2cm, 
            inner sep=0pt, 
            font=\Large\bfseries,
            fill=white
        },
        conditioned_node/.style={
            main_node,
            fill=gray!15 
        },
        black_arrow/.style={
            -{Stealth[length=2.5mm, width=1.8mm]}, 
            thick, 
            draw=black,
            shorten >= 2pt,
            shorten <= 2pt
        },
        hypothesis_arrow/.style={
            black_arrow,
            draw=red!80!black,
            dashed
        },
        question_mark_text/.style={
            font=\huge\bfseries, 
            text=red!80!black, 
            above,
            yshift=2pt
        }
    ]

        \node[conditioned_node] (yt_1) {$Y_{t-1}$};
        \node[conditioned_node] (yt)   [above=of yt_1] {$Y_{t}$};
        \node[main_node]        (yt1)  [above=of yt]   {$Y_{t+1}$};

        \node[conditioned_node] (zt_1) [right=of yt_1] {$Z_{t-1}$};
        \node[conditioned_node] (zt)   [above=of zt_1] {$Z_{t}$};

        \node[main_node] (xt_1) [right=of zt_1] {$X_{t-1}$};
        \node[main_node] (xt)   [above=of xt_1] {$X_{t}$};

        \draw[black_arrow] (yt_1) -- (yt);
        \draw[black_arrow] (yt) -- (yt1);
        \draw[black_arrow] (zt_1) -- (zt);
        \draw[black_arrow] (xt_1) -- (xt);

        \draw[black_arrow] (yt_1) to[bend left=45] (yt1);

        \draw[black_arrow] (zt_1) -- (yt);
        \draw[black_arrow] (zt) -- (yt1);

        \draw[black_arrow] (zt_1) -- (xt);
        \draw[black_arrow] (xt_1) -- (zt);

    \end{tikzpicture}
    \caption{Local edge-testing problem for the directed link \(X_t\to Y_{t+1}\), after adjustment for $\bar Y_t$ and $\bar Z_t$.}
    \label{fig:causal_diagram}
\end{figure}
The structural restriction \eqref{eq:SCM_CI} implies the key conditional independence relation
\begin{equation}
\label{eq:CI_basic}
Y_{t+1}
\independent
\bar X_t
\given
\{\bar Y_t,\bar Z_t\}.
\end{equation}

It is therefore of great interest to test this conditional independence using the available data. The null hypothesis that \eqref{eq:CI_basic} holds is sometimes known as Granger non-causality, though the original paper \citep{granger1969investigating} introducing the notion of Granger causality assumed stationary linear autoregressive models: one compares the residual variance of the best linear predictor of \(Y\) with and without the past of \(X\). An issue with relying on such parametric models is that in contemporary datasets of interest, they may provide a poor representation of the data-generating process. As a result, misspecification can lead both to invalid Type I error control and to loss of power against alternatives outside the chosen parametric form.

On the other hand, it is known \citep{shah2020hardness} that testing conditional independence without imposing any restrictions is fundamentally hard when the conditioning variable is continuous.
The time series setting considered in this paper carries with it the additional challenge posed by temporal dependence, as well as the possibility that the joint distribution of $(X_t, Y_t, Z_t)$ may change across time. While much of the literature in time series analysis assumes stationarity, which would preclude this latter possibility, in many settings, such a restriction would be implausible.

In this work, we consider the problem of testing \eqref{eq:CI_basic}.
Our approach is based on the generalised covariance measure (GCM)
\citep{shah2020hardness}. For i.i.d. observations, the GCM targets conditional
independence statements of the form \(X \independent Y \given Z\) by regressing
\(X\) and \(Y\) separately on \(Z\), and forming a test statistic from the
empirical covariance of the resulting residuals. An attraction of the GCM is that its validity relies primarily on sufficiently accurate estimation of the two conditional mean functions, which can be performed using any chosen machine learning method.
A related approach, an instance of what has come to be known as double or debiased machine learning \citep{chen2022debiased}, can be used to estimate a target parameter in a partially linear model with i.i.d.\ data. In such settings, cross-fitting, a form of sample-splitting, is typically used to control certain so-called cross terms or empirical process terms. In the case of the GCM, such cross-fitting is in fact unnecessary as the terms can be controlled under the null of condititonal independence.

However, neither this argument nor the promise of independence afforded by cross-fitting carries over directly to our setting where we observe a single realisation of a multivariate time series $\{(X_t,Y_t,Z_t)\}_{t=1}^K$ exhibiting serial dependence and potentially time-varying conditional mean functions. 
A central theoretical contribution of this paper is demonstrating that the 
cross terms are nevertheless asymptotically negligible under weak temporal dependence and block-stability conditions on the regression methods.
This justifies regression on the full series using flexible machine learning methods while guaranteeing valid Type I error control, without requiring strict stationarity or sample-splitting schemes.

On the methodological side, to detect delayed and nonlinear effects, we propose computing test statistics corresponding to multiple lags and transformations of the $X$ series, which we aggregate in a Wald-type statistic. We further introduce an approach that aims to adapt to the unknown required lag length and complexity of transformations needed.
Finally, to account for conditional heteroscedasticity, a common feature of time series innovations as in ARCH/GARCH processes, we incorporate adaptive weighting to maximise testing power under non-constant variance.

\subsection{Related work}

Granger causality asks whether the history of one process improves prediction of another after conditioning on the remaining observed history \citep{granger1969investigating,granger1980testing}. Its interpretation depends on both the conditioning information and the predictive target. Noncausality in conditional mean is weaker than distributional noncausality expressed through conditional independence \citep{eichler2012graphical,shojaie2022granger}, and neither notion alone identifies an interventional effect in the presence of unmeasured common causes. We formulate a distributional conditional-independence null and test residual moment restrictions implied by that null.

Classical Granger procedures test restrictions on lag coefficients in finite-order vector autoregressions \citep{lutkepohl2005new}. These tests primarily concern linear predictability and, under correct conditional-mean specification, conditional-mean noncausality. Extensions address integrated and cointegrated systems \citep{toda1995statistical}, high-dimensional VARs through post-selection inference \citep{hecq2023granger}, and temporal instability through recursive or quantile-based procedures \citep{shi2018change,mayer2024quantile}. These methods retain relatively structured models for the relevant predictive relationships.

A substantial nonparametric literature instead tests nonlinear forms of Granger noncausality. Early distribution-based procedures include \citet{hiemstra1994testing} and the corrected test of \citet{diks2006new}. Subsequent methods use characteristic functions, Hellinger distance, copulas, conditional moments, and projections \citep{su2007consistent,su2008hellinger,bouezmarni2012copula,
nishiyama2011consistent,zhou2022projective,song2021nonparametric}. Information-theoretic approaches based on transfer entropy provide another route and coincide with variance-based Granger causality under Gaussianity \citep{schreiber2000measuring,barnett2009granger}. More recently, neural methods have been proposed for temporal graph learning and nonlinear prediction \citep{tank2022neural,nauta2019causal,cheng2023neural}. Of particular relevance, \citet{hui2025deep} develop a neural, doubly robust test of Granger noncausality in conditional mean for stationary mixing processes, using multiplier-bootstrap calibration without sample splitting. Its target and stationarity assumptions differ from those considered here.

Conditional-independence tests also form local components of causal-discovery procedures. PCMCI, PCMCI+ and LPCMCI estimate lagged or contemporaneous graphs under different assumptions on temporal ordering and latent confounding \citep{runge2019detecting,runge2020discovering,gerhardus2020high}. Their guarantees depend both on the graphical assumptions and on the local conditional-independence tests. In particular, testing one source lag conditional on the remaining lags and jointly testing an entire source history correspond to different null hypotheses.

For general conditional independence, kernel methods extend ideas related to
HSIC \citep{gretton2005measuring,gretton2007kernel,zhang2012kernel}, while nearest-neighbour conditional mutual information provides another nonparametric approach \citep{runge2018conditional}. Without structural restrictions, conditional-independence testing with continuous conditioning variables is subject to fundamental impossibility results \citep{shah2020hardness}. The generalised covariance measure (GCM) instead tests residual moment restrictions under suitable regression accuracy conditions \citep{shah2020hardness}. Weighted GCMs target cancellation of conditional residual covariances \citep{scheidegger2022weighted}, while the projected covariance measure learns directions sensitive to conditional-mean dependence \citep{lundborg2022projected}. Model-X procedures provide a complementary approach when the conditional distribution of the predictor is known or estimable \citep{candes2018panning,niu2024reconciling,shaer2023model}. These developments are closely related to the orthogonal-score perspective of double/debiased machine learning \citep{chernozhukov2018double}.

Several recent contributions address residual-based testing under dependence. \citet{flaxman2015gaussian} use Gaussian-process adjustment for dependent observations, while \citet{malinsky2019learning} consider nonstationary linear VARs with stochastic trends. Particularly close to our setting is the dynamic GCM of \citet{wieck2025conditional}, which uses time-varying regression and Gaussian calibration of residual partial sums for a single nonlinear, nonstationary trajectory, including Granger-type applications. Separately, \citet{chen2022debiased} show that, for independent observations, learner stability can justify debiased inference without sample splitting.

The proposed GTCM combines a Granger-history formulation with flexible residual adjustment, block-stability conditions for nuisance estimation on the full observed trajectory, predictable precision weighting, and calibrated aggregation over structured lag--feature subsets. The theory allows time-varying conditional laws and controls the additional terms arising from reuse of the same dependent trajectory for nuisance estimation and testing.

\subsection{Notation}
\label{sec:notation}

Let \(\{(X_t,Y_t,Z_t)\}_{t=1}^K\) denote the observed multivariate time series, with \(X_t,Y_t\in\mathbb R\) and \(Z_t\in\mathbb R^{d_Z}\). We use information-set notation to distinguish past, local and future observations in the dependence conditions below. Write \(H_t^{Y,Z}:=\{\bar Y_t,\bar Z_t\}\) for the history of \(Y\) and \(Z\) up to time \(t\), and \(H_t^{X,Y,Z}:=\{\bar X_t,\bar Y_t,\bar Z_t\}\); conditioning on these information sets is equivalent to conditioning on the corresponding history vectors. For future observations, let \(A_t^{Y,Z}:=\{Y_{t+1},\ldots,Y_K,Z_{t+1},\ldots,Z_K\}\) and \(A_t^{X,Y,Z}:=\{X_{t+1},\ldots,X_K,Y_{t+1},\ldots,Y_K,Z_{t+1},\ldots,Z_K\}\). Finally, for a finite window \([t,t+l]\), write \(Y_{t:(t+l)}:=(Y_t,\ldots,Y_{t+l})\) and \(L_{t:(t+l)}^{Y,Z}:=\{Y_{t:(t+l)},Z_{t:(t+l)}\}\), with analogous notation for other collections of processes.

\section{Method}

\subsection{The generalised temporal covariance measure}
\label{sec:gtcm}

We propose the \emph{generalised temporal covariance measure} (GTCM) for testing Granger-type conditional independence in a single realization of a dependent multivariate time series with time-varying conditional laws. Our starting point is the generalised covariance measure (GCM) of \citet{shah2020hardness}. For independent observations,
\begin{equation*}
U\independent V\given C
~\Rightarrow~
\E\left[
{U-\E(U\given C)}
{V-\E(V\given C)}
\right]
=0.
\end{equation*}
The GCM tests this moment condition using the empirical average of the product of residuals obtained by separately regressing \(U\) and \(V\) on \(C\). In the present setting, the null hypothesis is
\begin{equation}
H_0:
\quad
Y_{t+1}\independent \bar X_t
\given H_t^{Y,Z},
\label{eq:null_hyp-multi}
\end{equation}
where \(\bar X_t\) denotes the relevant history of the candidate source process \(X\), and \(H_t^{Y,Z}\) denotes the conditioning history. To construct a finite-dimensional moment condition, let \(\phi(\bar X_t)\in\mathbb R^d\) be a fixed dictionary of features of the source history. The dictionary may contain, for example, lagged values, interactions or nonlinear transformations.

Extending the GCM to this setting requires allowing the nuisance regressions to vary over time. In the independent-observation setting, the conditional-mean functions are common across observations, whereas here the relevant conditional moments may change with \(t\). We therefore define
\begin{equation}
f_t(H_t^{Y,Z})
=
\E[Y_{t+1}\given H_t^{Y,Z}],
\quad
g_t(H_t^{Y,Z})
=
\E[\phi(\bar X_t)\given H_t^{Y,Z}],
\label{eq:time_indexed_nuisances}
\end{equation}
with corresponding population residuals
\begin{equation}
\varepsilon_t
=
Y_{t+1}-f_t(H_t^{Y,Z}),
\quad
\boldsymbol{\xi}_t
=
\phi(\bar X_t)-g_t(H_t^{Y,Z}).
\label{eq:population_residuals}
\end{equation}
We take the feature dimension \(d\) to be fixed. The time indexing of \(f_t\) and \(g_t\) allows the relevant conditional means to evolve over the trajectory, so that such variation is accommodated under the null rather than interpreted as evidence against it.
Under \eqref{eq:null_hyp-multi}, conditional independence and the definitions
in \eqref{eq:population_residuals} imply
$$
    \E[
        \varepsilon_t\boldsymbol{\xi}_t
        \given H_t^{Y,Z}]
    =0.
$$
Consequently, for any suitable scalar weight \(W_t\) measurable with respect
to \(H_t^{Y,Z}\),
\begin{equation}
    \E\!\left[
        W_t\varepsilon_t\boldsymbol{\xi}_t
    \right]
    =0.
    \label{eq:vector_moment}
\end{equation}
Thus, under the null, the residual products have mean zero.
The weights determine the relative contribution of each time point;
Section~\ref{sec:weights} discusses their choice. In implementation, we replace the conditional means by estimates $\hat f_t$ and $\hat g_t$, giving
\begin{equation}
    \hat\varepsilon_t
    =
    Y_{t+1}-\hat f_t(H_t^{Y,Z}),
    \quad
    \hat{\boldsymbol{\xi}}_t
    =
    \phi(\bar X_t)-\hat g_t(H_t^{Y,Z}).
    \label{eq:estimated_residuals}
\end{equation}
Let \(\hat W_t\) estimate \(W_t\), with \(\hat W_t=W_t\)
when the weight is known. The GTCM score is
\begin{equation}
    T^{(K)}
    =
    \frac{1}{\sqrt K}
    \sum_{t=1}^K
    \hat W_t
    \hat\varepsilon_t
    \hat{\boldsymbol{\xi}}_t
    \in\mathbb R^d.
    \label{eq:vector_score}
\end{equation}

Each coordinate of \(T^{(K)}\) measures the remaining association between the outcome and one source feature after adjustment. Setting \(W_t=\hat W_t\equiv1\) gives the unweighted GTCM. With linear autoregressions for \(f_t\) and \(g_t\), this is a residual-covariance analogue of the classical Granger test. More flexible regressions can be used without changing the construction.

To studentise \(T^{(K)}\), we estimate its limiting covariance matrix by the
empirical covariance of the estimated score contributions,
\begin{equation}
\begin{split}
    \hat\Sigma
    =
    \frac{1}{K}\sum_{t=1}^K
    &\Big(
        \hat W_t\hat\varepsilon_t\hat{\boldsymbol{\xi}}_t
        -
        \frac{1}{K}\sum_{s=1}^K
        \hat W_s\hat\varepsilon_s\hat{\boldsymbol{\xi}}_s
    \Big) 
    \Big(
        \hat W_t\hat\varepsilon_t\hat{\boldsymbol{\xi}}_t
        -
        \frac{1}{K}\sum_{s=1}^K
        \hat W_s\hat\varepsilon_s\hat{\boldsymbol{\xi}}_s
    \Big)^{\top}.
    \label{eq:covariance_estimator}
\end{split}
\end{equation}
Under the conditions in Section~\ref{sec:theory}, $\hat\Sigma\xrightarrow{p}\Sigma$, where $\Sigma$ is the limiting covariance matrix of $T^{(K)}$. For fixed $d$, we aggregate these coordinates using
\begin{equation}
    Q^{(K)}
    =
    \{T^{(K)}\}^{\top}
    \hat\Sigma^{-1}
    T^{(K)},
    \label{eq:quadratic_stat}
\end{equation}
where \(\hat\Sigma\) estimates the null covariance matrix of \(T^{(K)}\). The quadratic form accounts for differences in scale and for dependence between the coordinates of the score.

The GCM argument does not extend to \eqref{eq:vector_score} without further justification. The residual products are temporally dependent and need not be identically distributed, while both nuisance regressions are estimated from the same trajectory used to form the test statistic. Standard sample splitting therefore does not yield independent training and evaluation samples. At the same time, fitting separate models using only the available conditioning history can be unstable at early time points and may discard substantial information. We therefore allow the nuisance estimators to be fitted using the full observed trajectory.

Section~\ref{sec:theory} establishes validity under moment conditions on the population residuals \(\varepsilon_t\) and \(\boldsymbol{\xi}_t\), conditions controlling temporal dependence, and stability conditions for the nuisance estimators. Under these conditions, replacing the population conditional means by their estimates is asymptotically negligible. The same conclusion holds when the predictable weights \(W_t\) are replaced by \(\hat W_t\), subject to an additional weight-estimation condition. Hence the test score in \eqref{eq:vector_score} has the same asymptotic null distribution as the corresponding oracle score constructed from the population residuals and weights.
\[
    T^{(K)}
    \xrightarrow{d}
    \mathcal N_d(0,\Sigma),
    ~
    Q^{(K)}
    \xrightarrow{d}
    \chi_d^2
\]
under \eqref{eq:null_hyp-multi}. The GTCM test therefore rejects the null at
asymptotic level \(\alpha\) when
\[
    Q^{(K)}>\chi^2_{d,1-\alpha}.
\]
The construction is summarised in Algorithm~\ref{alg:GTCM}.

\subsection{Precision weighting}
\label{sec:weights}

Any weight \(W_t\) measurable with respect to \(H_t^{Y,Z}\) preserves the null
moment restriction in \eqref{eq:vector_moment}; the corresponding regularity
conditions are given in Assumption~\ref{ass:population_weights}. Different
choices of \(W_t\), however, can lead to different variances and local power.
Weighting has also been considered for the GCM by
\citet{scheidegger2022weighted}, where functions of the conditioning
variables are used to reduce cancellation between conditional residual covariances of opposite signs. Our objective is instead to use predictable weights to improve efficiency under time-varying conditional response variance.

Such heteroscedasticity is common in financial, physiological and economic
time series. Without weighting, periods with large conditional residual
variance contribute on the same scale as more stable periods and may inflate
the variance of the aggregated score relative to their contribution to the
local mean shift. Motivated by weighted partial linear regression, we consider
the population precision weight
\begin{equation}
    W_t^\star
    =
    \big\{
        \E[
            \varepsilon_t^2
            \given H_t^{Y,Z}
        ]
    \big\}^{-1}.
    \label{eq:optimal_weight}
\end{equation}
Assume additionally that
\(\E[\varepsilon_t^2\given H_t^{Y,Z}]\geq c_3^2\) almost surely for
some \(c_3>0\), uniformly in \(t\). Together with the conditional
upper-moment bound in Assumption~\ref{ass:errors}(i), This ensures that \(W_t^\star\) satisfies the boundedness condition in Assumption~\ref{ass:population_weights}. For the local alternatives in Proposition~\ref{prop:vector-precision-optimal}, \(W_t^\star\) is the optimal scalar predictable weight when the conditional variance of \(\varepsilon_t\) is determined by \(H_t^{Y,Z}\). It therefore assigns greater weight to periods with smaller conditional response variance.

In practice, \(W_t^\star\) is replaced by an estimate \(\hat W_t\).
Appendix~\ref{app:weight-assumptions} gives conditions under which this
substitution is asymptotically negligible.

\subsection{Adaptive testing over feature subsets}
\label{subsec:max_statistic}

A rich feature dictionary allows sensitivity to delayed and nonlinear departures from the null, but including many inactive coordinates can reduce power. Conversely, restricting attention to a small dictionary risks excluding the relevant signal. We therefore consider a prespecified collection \(\mathcal S\) of structured, nonempty subsets of \([d]=\{1,\ldots,d\}\). For polynomial lag dictionaries, for example, the candidate subsets may vary jointly in lag depth and polynomial degree.

For each subset $\pmb{s}\in\mathcal S$, let $T_{\pmb{s}}^{(K)}$ denote the corresponding subvector of the GTCM score and let $\hat\Sigma_{\pmb{s}}$ denote the associated covariance submatrix. Define
\begin{equation}
    Q_{\pmb{s}}^{(K)}
    =
    (T_{\pmb{s}}^{(K)})^\top
    \hat\Sigma_{\pmb{s}}^{-1}
    T_{\pmb{s}}^{(K)} .
\end{equation}
Because the candidate subsets may have different dimensions, we place their quadratic statistics on a common scale using their asymptotic null tail probabilities,
\begin{equation}
    p_{\pmb{s}}^{(K)}
    = 1-
    F_{\chi^2_{|\pmb{s}|}}
    \big(
        Q_{\pmb{s}}^{(K)}
    \big).
\end{equation}
We then define
\begin{equation}
    M^{(K)}
    =
    \max_{\pmb{s}\in\mathcal S}
    \big(-\log p_{\pmb{s}}^{(K)}\big),
    \label{eq:max_evidence_stat}
\end{equation}
so that the test adapts to the subset yielding the strongest departure from the null after standardisation for subset dimension.

Calibration of \(M^{(K)}\) must account for both the adaptive subset search and the dependence among the subset statistics. Under the null, \(T^{(K)}\) is asymptotically \(\mathcal N_d(0,\Sigma)\), so Algorithm~\ref{alg:bootstrap_subset} approximates its joint null distribution by drawing from \(\mathcal N_d(0,\hat\Sigma)\). For each draw, the subset statistics and their maximum over \(\mathcal S\) are recomputed in the same way as for the observed data. Comparing the observed maximum with the resulting simulated distribution therefore calibrates the adaptive search while preserving the dependence among overlapping subsets.

The resulting procedure separates construction of the score from adaptation over the feature dictionary. Algorithm~\ref{alg:GTCM} forms the residual score for the conditional-independence null, whereas Algorithm~\ref{alg:bootstrap_subset} calibrates the maximum over candidate lag--basis subsets.

\begin{remark}[Connection with causal discovery in time series]
In a time-indexed causal graph, \eqref{eq:null_hyp-multi} corresponds to testing whether the source history contains additional information about \(Y_{t+1}\) after conditioning on \(H_t^{Y,Z}\). GTCM may therefore be used as a conditional-independence test within constraint-based procedures such as PC, with temporal ordering restricting the admissible edge directions. 
\end{remark}

\section{Theory}
\label{sec:theory}

We establish the null distribution and local power of the GTCM when the same observed trajectory is used for nuisance estimation and testing. Unlike standard cross-fitted i.i.d. arguments \citep{chernozhukov2018double}, temporal dependence means that sample splitting does not separate nuisance-training observations from the score contributions. Our theory therefore permits full-trajectory nuisance fitting and controls the resulting reuse through conditions on nuisance prediction error, temporal dependence, and stability of the fitted regressions under local block deletion. These conditions ensure that the effect of using the same dependent observations for nuisance estimation and testing is asymptotically negligible.

\subsection{Prediction errors and residual-weighted prediction}

We first impose conditions on the accuracy of the nuisance regressions. Because the GTCM score is formed from products of residuals, prediction error in one regression enters the score multiplied by the residual from the other. Ordinary prediction error alone is therefore insufficient to characterize the effect of nuisance estimation, particularly under temporal dependence. We therefore consider both ordinary and residual-weighted prediction errors. For \(q\in\{f,g\}\), define the empirical mean-squared prediction error and the corresponding residual-weighted prediction error by
\begin{equation*}
    A_q=\frac1K\sum_{t=1}^K
\big\|q_t(H_t^{Y,Z})-\hat q_t(H_t^{Y,Z})\big\|_2^2, ~B_g
=\frac1K\sum_{t=1}^K\varepsilon_t^2
\big\|g_t(H_t^{Y,Z})-\hat g_t(H_t^{Y,Z})\big\|_2^2.
\end{equation*}
Define \(B_f\) by replacing \(g\) with \(f\) and
\(\varepsilon_t^2\) with \(\|\boldsymbol\xi_t\|_2^2\).

\begin{assumption}[Prediction error and residual-weighted prediction]
\label{ass:errors}
The following conditions hold.
\begin{enumerate}
\item[(i)] There exist constants \(c_1,c_2>0\) and \(\delta>0\)
such that, uniformly in \(t\),
\[
\E\big[|\varepsilon_t|^{2+\delta}\given H_t^{X,Y,Z}\big]\le c_1^2~~\text{almost surely},
\quad
\E\big[\|\boldsymbol\xi_t\|_2^4\big]<c_2.
\]

\item[(ii)] For each \(q\in\{f,g\}\), the prediction errors satisfy
\begin{equation}
\label{eq:Kmsemse}
A_q=o_p(1),\quad
A_fA_g=o_p(K^{-1}),\quad
\E[A_q^2]=O(1).
\end{equation}

\item[(iii)] For each \(q\in\{f,g\}\), the residual-weighted prediction
errors satisfy
\begin{equation}
\label{eq:residual-weighted-errors}
(l_K+r_K)\E[B_q]=o(1), \quad \E[B_q^2]=O(1).
\end{equation}
\end{enumerate}
\end{assumption}

Part~(i) rules out degeneracy of the response residual and imposes mild
moment conditions while allowing its conditional variance to vary with the
history. In part~(ii), \(A_q=o_p(1)\) requires each nuisance regression to
be consistent in mean-squared prediction error. The product condition
\(A_fA_g=o_p(K^{-1})\) controls the leading second-order interaction between
the two nuisance estimation errors and allows their rates to trade off. For
example, it holds when both regressions have root mean-squared prediction
error \(o_p(K^{-1/4})\), while faster convergence of one permits slower
convergence of the other. This is the usual product-rate condition arising
in orthogonal and GCM-based inference \citep{shah2020hardness}.

The residual-weighted errors \(B_g\) and \(B_f\) reflect how nuisance estimation errors enter the score: an error in \(g_t\) is multiplied by \(\varepsilon_t\), whereas an error in \(f_t\) is multiplied by \(\boldsymbol\xi_t\). Condition~\eqref{eq:residual-weighted-errors} therefore controls prediction error after accounting for the corresponding residual contribution and the temporal dependence window, while the second-moment bound in the same condition controls larger residual-weighted
errors.

\subsection{Weak temporal dependence}

Prediction accuracy alone does not control dependence between score contributions across time. We therefore introduce backward and forward window lengths \(l_K,r_K\in\mathbb N^+\) to separate a local temporal neighbourhood around time \(t\) from observations treated as distant, with \(l_K+r_K=O(\log K)\). The following assumption requires sufficiently distant observations to contribute asymptotically negligible additional information and ensures stabilization of the second-order quantities governing the score.

\begin{assumption}[Weak temporal dependence]
\label{ass:mixing}
Let \(W_t\) satisfy Assumption~\ref{ass:population_weights}.
The following conditions hold.
\begin{enumerate}
\item[(i)] \textbf{Local block dependence.}
The local conditional-independence relations are
\[
\phi(\bar X_t)\indep H^{Y,Z}_{t-l_K}
\given L^{Y,Z}_{(t-l_K+1):t},\quad
Y_{t+1}\indep H^{Y,Z}_{t-l_K}
\given L^{Y,Z}_{(t-l_K+1):t}.
\]
\item[(ii)] \textbf{Distant information about the response residual.}
\begin{equation}
\label{eq:epsilon-mixing}
\sum_{t=1}^K
\E\big[
\E[\varepsilon_t\given H_t^{X,Y,Z},A_{t+r_K}^{X,Y,Z}]^2
\big]=o(K^{-1}).
\end{equation}

\item[(iii)] \textbf{Distant information about the source residual.}
\begin{equation}
\label{eq:xi-mixing}
\sum_{t=1}^K
\E\Big[
\max_{\substack{1\le s\le K:\\|s-t|\ge l_K+r_K}}
\big\|
\E\big[\boldsymbol\xi_t\given
H_t^{Y,Z},\phi(\bar X_s),A_{t+r_K}^{Y,Z}\big]
\big\|_2^2
\Big]=o(K^{-1}),
\end{equation}

\item[(iv)] \textbf{Variance stabilization.}
There exists a finite positive-definite matrix
\(\Sigma\in\mathbb R^{d\times d}\) such that
\begin{equation}
\label{eq:oracle-unconditional-covariance-limit}
\frac1K\sum_{t=1}^K
\E\big[W_t^2\varepsilon_t^2
\boldsymbol\xi_t\boldsymbol\xi_t^\top\big]
\longrightarrow\Sigma.
\end{equation}
\sloppy
Moreover for $\mathcal V_t
:=
W_t^2\E[\varepsilon_t^2\given H_t^{X, Y,Z}]
\operatorname{vec}(\boldsymbol\xi_t\boldsymbol\xi_t^\top),$ we have
\begin{equation}
\label{eq:2-mixing}
    \lim_{h\to\infty}
\sup_{t \geq 1}
{\left\|
\Cov(\mathcal V_t,\mathcal V_{t+h})
\right\|_F}
=0 .
\end{equation}
\end{enumerate}
\end{assumption}

Under \(H_0\), the oracle score contributions satisfy \(\E[W_t\varepsilon_t\boldsymbol\xi_t\given H_t^{X,Y,Z}]=0\) and therefore form a martingale-difference sequence. Part~(iv), together with Assumption~\ref{ass:errors}, ensures stabilization of the score covariance and the regularity required for the martingale central limit theorem \citep{dvoretzky1972asymptotic}. Parts~(i)--(iii) instead control the dependence introduced by nuisance estimation on the full trajectory. Part~(i) uses \(l_K\) to isolate dependence on the remote past through a local backward block, while parts~(ii)--(iii) use \(r_K\) to ensure that sufficiently distant future observations contain asymptotically negligible additional information about the current residuals; the remaining local dependence is handled by the stability conditions below.

\begin{remark}
\label{rem:example-mixing}
Appendix~\ref{app:mixing-example} verifies conditions~(ii) and~(iii) for two
standard dependent processes: a jointly stationary Gaussian AR(\(p\)) process
and a joint MA(\(q\)) process for \(\{X_t,Y_t,Z_t\}_{t=1}^K\). These examples
show that the assumptions are satisfied by familiar classes of linear time
series.
\end{remark}

\subsection{Stability of the nuisance regressions}
\label{sec:stability}

The weak dependence conditions control the influence of observations sufficiently separated from a given score contribution, but nearby observations may remain strongly dependent and are reused in nuisance estimation. Such reuse can induce additional dependence between the fitted nuisance functions and the score. We therefore impose stability of the regression algorithms under deletion of local temporal blocks, using the separation length \(r_K\) introduced above.

For \(t=1,\ldots,K\), define the forward deletion block
\begin{equation}
    \Delta_t=\{t+1,\ldots,t+r_K\}\cap\{1,\ldots,K+1\}.
\end{equation}
For \(q\in\{f,g\}\), let \(\hat q^{-t}\) and
\(\hat q^{-t,t'}\) denote fits obtained after deleting the observations
in \(\Delta_t\) and \(\Delta_t\cup\Delta_{t'}\), respectively.
Deletion removes every training row whose response or predictors use a
deleted observation. For randomized learners, the full and deleted fits
are constructed using the same underlying randomization, so that their
difference reflects only the change in the training sample. In particular,
\(\hat q^{-t,t}=\hat q^{-t}\). Define
\begin{align*}
S_q
&=\frac1K\sum_{t=1}^K
\big\|\hat q_t(H_t^{Y,Z})-\hat q_t^{-t}(H_t^{Y,Z})\big\|_2^2,
~D_q
=\frac1K\sum_{t=1}^K\max_{1\le t'\le K}
\big\|\hat q_t^{-t}(H_t^{Y,Z})
      -\hat q_t^{-t,t'}(H_t^{Y,Z})\big\|_2^2.
\end{align*}
For the source regression, define the residual-weighted deletion errors
\begin{align*}
\tilde S_g
&=\frac1K\sum_{t=1}^K\varepsilon_t^2
\big\|\hat g_t(H_t^{Y,Z})-\hat g_t^{-t}(H_t^{Y,Z})\big\|_2^2,
~\tilde D_g=\frac1K\sum_{t=1}^K\varepsilon_t^2
\max_{1\le t'\le K}
\big\|\hat g_t^{-t}(H_t^{Y,Z})
      -\hat g_t^{-t,t'}(H_t^{Y,Z})\big\|_2^2.
\end{align*}
Define \(\tilde S_f,\tilde D_f\) by replacing \(g\) with \(f\) and
\(\varepsilon_t^2\) with \(\|\boldsymbol\xi_t\|_2^2\).
For the scalar regression \(f\), the Euclidean norm reduces to absolute
value. Thus \(S_q\) measures the average squared change in the fitted value
at \(H_t^{Y,Z}\) after deleting \(\Delta_t\), while \(D_q\) measures the
largest additional change induced by a second block deletion.
The quantities \(\tilde S_q\) and \(\tilde D_q\) give the corresponding
changes after weighting by the residual factor appearing in the score.
We suppress their dependence on \(K\).

\begin{assumption}[Block stability of nuisance estimators]
\label{ass:stability}
For each \(q\in\{f,g\}\),
\begin{equation}
\label{eq:stability}
S_q=o_p(K^{-1}),\quad
(l_K+r_K)\E[\tilde S_q]+K\E[\tilde D_q]=o(1),
\end{equation}
and
\begin{equation}
\label{eq:stability-moments}
\E[S_q^2+D_q^2+\tilde S_q^{\,2}+\tilde D_q^{\,2}]
=O(1).
\end{equation}
\end{assumption}

Assumption~\ref{ass:stability} is a block-deletion analogue of algorithmic
stability. For independent data, leave-one-out stability controls the effect
of reusing observations for nuisance estimation and inference
\citep{barber2021predictive,chen2022debiased}; here a local block is removed
because nearby observations may remain strongly dependent. The condition
\(S_q=o_p(K^{-1})\) requires the average change in fitted values induced by
the first deletion to be negligible on the scale relevant for the score.

The residual-weighted terms distinguish the two deletion steps. In
\(\tilde S_q\), the full fit may use observations from the local block
\(\Delta_t\), and can therefore remain more strongly associated with the
residual at time \(t\). In \(\tilde D_q\), both fits already exclude
\(\Delta_t\) and differ only through an additional deletion, providing
greater temporal separation. The factors \(l_K+r_K\) and \(K\) reflect,
respectively, the local and separated pairs arising in the score expansion.
The second-moment condition \eqref{eq:stability-moments} controls occasional
larger deletion perturbations.

Temporal separation and algorithmic stability are complementary. Increasing
\(r_K\) reduces the dependence remaining outside the deleted block, while
requiring stability after deletion of a larger local neighbourhood. The
resulting condition is related to stability under nested training samples
\citep[Definitions~3.1 and~3.4]{liang2025algorithmic}, but is formulated here
for contiguous block deletions and averaged squared changes in fitted values.

\begin{remark}
\label{rem:example-stability}
Appendix~\ref{app:stability-example} verifies \eqref{eq:stability} for ordinary least squares and kernel time-varying coefficient regression with \(O(\log K)\) adjustment lags. The results hold under the stated moment and Gram-matrix conditions, together with bandwidth and local regularity conditions for the kernel estimator.
\end{remark}

\subsection{Null distribution and size control}
\label{sec:null_limits}

The preceding conditions give the following justification
for the chi-squared test in Section~\ref{sec:gtcm}.

\begin{theorem}[Null distribution of the GTCM statistic]
\label{thm:vector_gtcm}
Suppose that the null hypothesis \eqref{eq:null_hyp-multi} holds, \(d\) is
fixed, and Assumptions~\ref{ass:mixing}, \ref{ass:stability},
\ref{ass:errors} and \ref{ass:estimated_weights} hold. Then the score and the
covariance estimator in \eqref{eq:covariance_estimator} satisfy
\begin{equation}
    \label{eq:martingaleCLT}
    T^{(K)}
    \xrightarrow{d}
    \mathcal N_d(0,\Sigma),\quad
    \hat\Sigma\xrightarrow{p}\Sigma.
\end{equation}
Consequently,
\begin{equation}
    Q^{(K)}
    =
    \{T^{(K)}\}^{\top}
    \hat\Sigma^{-1}
    T^{(K)}
    \xrightarrow{d}
    \chi_d^2.
    \label{eq:martingaleCLT2}
\end{equation}
In particular, for every \(\alpha\in(0,1)\),
\begin{equation}
    \Pr\left\{
        Q^{(K)}>\chi^2_{d,1-\alpha}
    \right\}
    \rightarrow
    \alpha.
    \label{eq:gtcm_size}
\end{equation}
\end{theorem}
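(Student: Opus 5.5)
The plan is to decompose the score into an oracle martingale term plus remainder terms. First I handle the oracle term by a martingale CLT, then show each remainder is $o_p(1)$ using Assumptions~\ref{ass:errors}, \ref{ass:mixing} and \ref{ass:stability}. I first treat known weights $\hat W_t=W_t$ and pass to $\hat W_t$ at the end via Assumption~\ref{ass:estimated_weights}. Writing $\hat\varepsilon_t=\varepsilon_t+(f_t-\hat f_t)$ and $\hat{\boldsymbol\xi}_t=\boldsymbol\xi_t+(g_t-\hat g_t)$ (all evaluated at $H_t^{Y,Z}$), we get
\[
T^{(K)}=\underbrace{\tfrac{1}{\sqrt K}\sum_t W_t\varepsilon_t\boldsymbol\xi_t}_{T_0}
+\underbrace{\tfrac{1}{\sqrt K}\sum_t W_t\varepsilon_t(g_t-\hat g_t)}_{R_g}
+\underbrace{\tfrac{1}{\sqrt K}\sum_t W_t(f_t-\hat f_t)\boldsymbol\xi_t}_{R_f}
+\underbrace{\tfrac{1}{\sqrt K}\sum_t W_t(f_t-\hat f_t)(g_t-\hat g_t)}_{R_{fg}}.
\]
The product term is immediate. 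Boundedness of $W_t$ and Cauchy--Schwarz give $\|R_{fg}\|\lesssim\sqrt{K A_fA_g}=o_p(1)$ by Assumption~\ref{ass:errors}(ii).

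For $T_0$, under $H_0$ we have $\E[W_t\varepsilon_t\boldsymbol\xi_t\mid H_t^{X,Y,Z}]=0$, since $W_t$ and $\boldsymbol\xi_t$ are $H_t^{X,Y,Z}$-measurable and $\E[\varepsilon_t\mid H_t^{X,Y,Z}]=\E[\varepsilon_t\mid H_t^{Y,Z}]=0$ by conditional independence. So $T_0$ is a normalised sum of martingale differences with respect to the filtration $\mathcal F_t=\sigma(H_{t+1}^{X,Y,Z})$. By Cram\'er--Wold I reduce to scalar projections $a^\top T_0$ and apply the martingale CLT of \citet{dvoretzky1972asymptotic}, which needs two ingredients. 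The first is the conditional Lindeberg condition. It follows from the $2+\delta$ conditional moment of $\varepsilon_t$, the fourth moment of $\boldsymbol\xi_t$ and bounded weights, via a truncation argument applied to $\E[|\varepsilon_t|^{2+\delta}\mid H_t^{X,Y,Z}]\|\boldsymbol\xi_t\|^{2+\delta}$, or alternatively a H\"older split. The second is convergence of the conditional variance $K^{-1}\sum_t\mathcal V_t\to\operatorname{vec}\Sigma$ in probability. Its mean converges by \eqref{eq:oracle-unconditional-covariance-limit}. For the variance, the covariance decay \eqref{eq:2-mixing} together with uniformly bounded second moments gives $\Var(K^{-1}\sum_t\mathcal V_t)\to0$, by splitting the double sum into $|s-t|\le h$ and $|s-t|>h$ and then letting $h\to\infty$. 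Hence $T_0\to\mathcal N_d(0,\Sigma)$.

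The main obstacle is showing $R_g,R_f=o_p(1)$ without sample splitting; $R_f$ is symmetric to $R_g$, so I describe $R_g$. I bound $\E\|R_g\|^2=K^{-1}\sum_{t,t'}\E[W_tW_{t'}\varepsilon_t\varepsilon_{t'}\langle g_t-\hat g_t,g_{t'}-\hat g_{t'}\rangle]$, inserting the deleted fits. In each summand I replace $\hat g_t$ by $\hat g_t^{-t}$ and $\hat g_{t'}$ by $\hat g_{t'}^{-t,t'}$. The replacement errors are controlled by $\tilde S_g$ and $\tilde D_g$ through Cauchy--Schwarz, and also by $S_g=o_p(K^{-1})$ where only a probability bound is needed. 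The factors $l_K+r_K$ and $K$ in \eqref{eq:stability} match the numbers of near pairs and far pairs respectively. For near pairs $|t-t'|<l_K+r_K$ I apply Cauchy--Schwarz directly, which yields the bound $(l_K+r_K)\E[B_g]+(l_K+r_K)\E[\tilde S_g]=o(1)$. For far pairs the fit $\hat g^{-t,t'}$ no longer uses the blocks $\Delta_t$ and $\Delta_{t'}$, and the local block dependence in Assumption~\ref{ass:mixing}(i) lets me treat $g_t$ and $g_{t'}$ as functions of local windows. Conditioning on $H_t^{X,Y,Z}$ and $A_{t+r_K}^{X,Y,Z}$ then shows that $\varepsilon_t$ enters only through $\E[\varepsilon_t\mid H_t^{X,Y,Z},A_{t+r_K}^{X,Y,Z}]$, whose squared sum is $o(K^{-1})$ by \eqref{eq:epsilon-mixing}. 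I expect this step to need the most bookkeeping, in particular for the factor involving $\varepsilon_{t'}$ when $t'<t$: there I would condition symmetrically, using that $t$ lies in the future of $t'$ beyond $r_K$. The moment bounds $\E[B_g^2]$, $\E[\tilde S_g^2]$ and $\E[\tilde D_g^2]=O(1)$ justify converting the $o_p$ rates into expectation bounds by uniform integrability. For $R_f$, the same argument applies with \eqref{eq:xi-mixing} used in place of \eqref{eq:epsilon-mixing}; the max over $s$ there is what absorbs the dependence of $\hat f$ on distant $\phi(\bar X_s)$.

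Finally, estimated weights add the term $K^{-1/2}\sum_t(\hat W_t-W_t)\hat\varepsilon_t\hat{\boldsymbol\xi}_t$, which is $o_p(1)$ by Assumption~\ref{ass:estimated_weights}. For $\hat\Sigma\to_p\Sigma$, I expand $\hat W_t\hat\varepsilon_t\hat{\boldsymbol\xi}_t$ in the same way. The oracle part converges to $\Sigma$ by the conditional-variance argument above, together with a weak law for the martingale part $W_t^2\varepsilon_t^2\boldsymbol\xi_t\boldsymbol\xi_t^\top-\mathcal V_t$, which follows from the $2+\delta$ moments. The centering term is $O_p(K^{-1})$ since $T^{(K)}=O_p(1)$. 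The cross terms are bounded by Cauchy--Schwarz via $A_q$ and $B_q$, which are $o_p(1)$. The continuous mapping theorem with positive-definite $\Sigma$ then gives $Q^{(K)}\to_d\chi^2_d$, and continuity of the $\chi^2_d$ distribution function gives \eqref{eq:gtcm_size}.
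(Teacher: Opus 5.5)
Your treatment of the oracle martingale term, the product term $R_{fg}$, the covariance estimator and the final $\chi^2_d$ step agrees with the paper. The gap is in $R_g$ (and symmetrically $R_f$). It lies in your claim that the errors from inserting deleted fits are ``controlled by $\tilde S_g$ and $\tilde D_g$ through Cauchy--Schwarz''. The assumptions give $(l_K+r_K)\E[B_g+\tilde S_g]=o(1)$, but nothing like $K\E[B_g+\tilde S_g]=O(1)$. A bare Cauchy--Schwarz over order-$K^2$ pairs produces exactly such a factor $K$.

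This has two consequences.
\begin{enumerate}
\item[(a)] The deletion $\hat g_t\to\hat g_t^{-t}$ cannot be made inside $\E\|R_g\|^2$, since it costs $K\E[\tilde S_g]$. It has to be split off at the level of the score, where the bound $C_W(K^{-1}\sum_t\varepsilon_t^2)^{1/2}(KS_g)^{1/2}=o_p(1)$ applies. Only the once-deleted sum is then squared.
\item[(b)] More seriously, take your asymmetric choice, with $\hat g_t^{-t}$ in the $t$-factor and $\hat g_{t'}^{-t,t'}$ in the $t'$-factor. Your leading far-pair term is then fine, since projecting $\varepsilon_t$ works there. The problem is the second-deletion error
\[
\frac1K\sum_{|t-t'|\ge l_K+r_K}W_tW_{t'}\varepsilon_t\varepsilon_{t'}\big\{g_{t,j}-\hat g_{t,j}^{-t}\big\}\big\{\hat g_{t',j}^{-t,t'}-\hat g_{t',j}^{-t'}\big\}.
\]
Cauchy--Schwarz bounds this only by $C\{K\E[B_g+\tilde S_g]\}^{1/2}\{K\E[\tilde D_g]\}^{1/2}$, which need not vanish. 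The product-rate condition allows $A_f,A_g\asymp K^{-1/2-\epsilon}$, and then $K\E[B_g]$ can diverge. Nor can this term be projected. Here $\hat g_t^{-t}$ still uses $\Delta_{t'}$ and $\hat g_{t'}^{-t'}$ still uses $\Delta_t$, so neither $\varepsilon_t$ nor $\varepsilon_{t'}$ can be conditioned out.
\end{enumerate}

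The missing idea is to insert the \emph{common} twice-deleted fit $\hat g^{-t,t'}$ into \emph{both} factors, which gives the three sums of \eqref{eq:exposure-pair-expansion}. In the leading and mixed sums every fit then omits $\Delta_{t'}$, so $\varepsilon_{t'}$ can be projected onto $(H_{t'}^{X,Y,Z},A_{t'+r_K}^{X,Y,Z})$. For $t>t'$ this uses Assumption~\ref{ass:mixing}(i) to localise $f_t,g_t$, and Assumption~\ref{ass:estimated_weights}(i) for $W_t$. H\"older with exponents $4,2,4$ then gives bounds equal to $\{K\sum_t\E[\E[\varepsilon_t\mid H_t^{X,Y,Z},A_{t+r_K}^{X,Y,Z}]^2]\}^{1/2}$ times fourth roots of $\E[(B_g+\tilde S_g+\tilde D_g)^2]$ and of $\E[(A_g+S_g+D_g)^2]$ or $\E[D_g^2]$. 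These require only bounded moments. Only the product of two deletion differences is handled by bare Cauchy--Schwarz, and that is precisely where $K\E[\tilde D_g]=o(1)$ is used.

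For $R_f$, you project $\boldsymbol\xi_{t'}$ onto $(H_{t'}^{Y,Z},\phi(\bar X_t),A_{t'+r_K}^{Y,Z})$. The max over $s$ in \eqref{eq:xi-mixing} accounts for the partner residual's $\phi(\bar X_t)$ in this conditioning set. It does not account for any dependence of $\hat f$ on $X$, since $\hat f$ uses only $Y,Z$.

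Likewise, the estimated-weight term is not $o_p(1)$ merely by citing Assumption~\ref{ass:estimated_weights}. The sum $K^{-1/2}\sum_t(\hat W_t^{-t}-W_t)\varepsilon_t\boldsymbol\xi_t$ needs the same symmetric deletion and projection with $\hat W^{-t,t'}$. The interactions $(\hat W_t-W_t)\varepsilon_t(g_t-\hat g_t)$ and $(\hat W_t-W_t)\boldsymbol\xi_t(f_t-\hat f_t)$ need their own expansion, and this is where $\E[D_W^2]=o(K^{-2})$ enters.
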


Theorem~\ref{thm:vector_gtcm} shows that, under the stated conditions, estimating the conditional means and precision weights from the full trajectory does not affect the limiting null distribution. Thus, valid asymptotic inference can be obtained without sample splitting or sequential refitting of the nuisance regressions.

The adaptive procedure in Section~\ref{subsec:max_statistic} additionally maximises over lag--feature subsets, so the chi-squared limit no longer applies directly. Its Monte Carlo calibration reproduces this selection step for each simulated score, and Proposition~\ref{prop:adaptive_calibration} establishes the asymptotic validity of the test.

\subsection{Local alternatives and asymptotic power}
\label{subsec:local_power}

For fixed alternatives with a non-zero limiting residual
moment, the signal in \(T^{(K)}\) grows at order \(\sqrt K\) and power
tends to one. To describe sensitivity to weaker signals, we consider
alternatives whose magnitude decreases at rate \(K^{-1/2}\). At this
scale, signal and sampling variability remain comparable, giving a
non-trivial limiting power curve.

Consider a response with a baseline component
\(m_t(H_t^{Y,Z})\) and a small additional contribution from the source
features:
\begin{equation*}
Y_{t+1}
=
m_t(H_t^{Y,Z})
+
\frac{\tau^\top\phi(\bar X_t)}{\sqrt K}
+
\zeta_t,
\quad
\E[\zeta_t\given H_t^{Y,Z},\bar X_t]=0,
\end{equation*}
where \(\tau\in\mathbb R^d\) fixes the direction and relative strength
of the source contribution. Subtracting the conditional mean
\(f_t(H_t^{Y,Z})\) removes the baseline and the part of the source
contribution already predictable from the adjustment history, giving
\begin{equation*}
\varepsilon_t
=
Y_{t+1}-f_t(H_t^{Y,Z})
=
\frac{\tau^\top\{\phi(\bar X_t)-g_t(H_t^{Y,Z})\}}{\sqrt K}
+\zeta_t
=
\frac{\boldsymbol\xi_t^\top\tau}{\sqrt K}+\zeta_t.
\end{equation*}
This data-generating process implies the sequence of local alternatives
\begin{equation}
\label{eq:local_alt}
    H_{1,K}(\tau):
    ~
    \E[
        \varepsilon_t
        \given
        H_t^{Y,Z},\bar X_t
    ]
    =
    \frac{\boldsymbol{\xi}_t^\top\tau}{\sqrt K}.
\end{equation}
The residual source component therefore shifts the mean of
\(T^{(K)}\) by an amount that remains of constant order as \(K\) grows.
Since the alternative changes with \(K\), the population quantities also
depend on \(K\); this dependence is suppressed in the notation.
\begin{theorem}[Local asymptotic power]
\label{thm:local_power}
Fix \(\tau\in\mathbb R^d\) and consider the sequence
\(H_{1,K}(\tau)\) in \eqref{eq:local_alt}. Suppose, uniformly along this
sequence, that Assumptions~\ref{ass:errors}, \ref{ass:mixing},
\ref{ass:stability}, and~\ref{ass:estimated_weights} continue to hold with
\(\zeta_t\) in place of \(\varepsilon_t\) wherever applicable. Then
\begin{equation}
\label{eq:local_score_expansion}
T^{(K)}
=
\frac{1}{\sqrt K}\sum_{t=1}^K
W_t\zeta_t\boldsymbol{\xi}_t
+
\Big(
\frac{1}{K}\sum_{t=1}^K
W_t\boldsymbol{\xi}_t\boldsymbol{\xi}_t^\top
\Big)\tau
+
o_p(1).
\end{equation}
Suppose also that \eqref{eq:local-drift-mean-limit} holds
for a finite matrix \(\Gamma\). Proposition~\ref{prop:Sigma_xi} then gives \(\Gamma=\plim_{K\to\infty}K^{-1}\sum_{t=1}^K
W_t\boldsymbol\xi_t\boldsymbol\xi_t^\top\). Then
\begin{equation}
\label{eq:local_score_limit}
T^{(K)}
\xrightarrow{d}
\mathcal N_d(\Gamma\tau,\Sigma),\quad
\hat\Sigma\xrightarrow{p}\Sigma.
\end{equation}
Consequently,
\begin{equation}
\label{eq:local_chisq_limit}
Q^{(K)}
\xrightarrow{d}
\chi_d^2\{\lambda(\tau)\},
\quad
\lambda(\tau)
=
\tau^\top\Gamma^\top\Sigma^{-1}\Gamma\tau.
\end{equation}
Therefore, for every $\alpha\in(0,1)$,
\begin{equation}
\label{eq:local_power_curve}
\lim_{K\to\infty}
\Pr_{H_{1,K}(\tau)}
\big(
Q^{(K)}>\chi^2_{d,1-\alpha}
\big)
=
\Pr \big(
\chi_d^2\{\lambda(\tau)\}>
\chi^2_{d,1-\alpha}
\big).
\end{equation}
In particular, if \(\Gamma\tau\neq0\), then
\(\lambda(\tau)>0\), and the limiting power is strictly greater than
\(\alpha\).
\end{theorem}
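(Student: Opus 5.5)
The plan is to reduce the local-alternative score to the null-type analysis of Theorem~\ref{thm:vector_gtcm}, with $\zeta_t$ playing the role of $\varepsilon_t$, and to track the extra drift term explicitly. First I decompose the estimated score. Writing $\hat\varepsilon_t=\varepsilon_t+(f_t-\hat f_t)$ and $\hat{\boldsymbol\xi}_t=\boldsymbol\xi_t+(g_t-\hat g_t)$, and using $\varepsilon_t=\zeta_t+K^{-1/2}\boldsymbol\xi_t^\top\tau$, I obtain
\begin{align*}
T^{(K)}={}&\frac{1}{\sqrt K}\sum_{t=1}^K\hat W_t\zeta_t\hat{\boldsymbol\xi}_t
+\frac{1}{K}\sum_{t=1}^K\hat W_t\hat{\boldsymbol\xi}_t\boldsymbol\xi_t^\top\tau\\
&+\frac{1}{\sqrt K}\sum_{t=1}^K\hat W_t(f_t-\hat f_t)\hat{\boldsymbol\xi}_t .
\end{align*}
The sum of the first and third pieces has exactly the structure treated in the proof of Theorem~\ref{thm:vector_gtcm}, except that the response residual is now $\zeta_t$. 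There is one subtlety: the $f$-error term is defined relative to $f_t$, and $f_t$ still satisfies $Y_{t+1}-f_t=\varepsilon_t$. I therefore rewrite $\hat\varepsilon_t=\zeta_t+(f_t+K^{-1/2}\boldsymbol\xi_t^\top\tau-\hat f_t)$. Only $\zeta_t$ is a martingale difference, with $\E[\zeta_t\mid H_t^{X,Y,Z}]=0$.

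Step one replays the null argument verbatim, since the assumptions are imposed uniformly with $\zeta_t$ in place of $\varepsilon_t$. The pieces are $\hat W_t\to W_t$ by Assumption~\ref{ass:estimated_weights}, the product term bounded through $A_fA_g=o_p(K^{-1})$, and the cross terms $K^{-1/2}\sum W_t\zeta_t(g_t-\hat g_t)$ and $K^{-1/2}\sum W_t(f_t-\hat f_t)\boldsymbol\xi_t$ handled via deletion of the blocks $\Delta_t$. That block-deletion step uses the stability quantities $S_q,\tilde S_q,\tilde D_q$, the local block independence in Assumption~\ref{ass:mixing}(i), and the distant-information conditions (ii)--(iii). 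Together these show that the leading term is $K^{-1/2}\sum W_t\zeta_t\boldsymbol\xi_t+o_p(1)$.

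Step two handles the drift. I show that $K^{-1}\sum_t\hat W_t\hat{\boldsymbol\xi}_t\boldsymbol\xi_t^\top\tau-K^{-1}\sum_tW_t\boldsymbol\xi_t\boldsymbol\xi_t^\top\tau=o_p(1)$ by Cauchy--Schwarz. This uses $A_g=o_p(1)$, the bound $\E\|\boldsymbol\xi_t\|^4<c_2$, boundedness of $W_t$, and weight consistency. The $O(K^{-1/2})$ shift absorbed into the $f$-error term is also negligible: it contributes $K^{-1}\sum W_t\boldsymbol\xi_t^\top\tau\,(g_t-\hat g_t)=o_p(1)$, by the same Cauchy--Schwarz argument. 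Together these give \eqref{eq:local_score_expansion}.

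For the limit, the martingale CLT \citep{dvoretzky1972asymptotic} applies to $K^{-1/2}\sum W_t\zeta_t\boldsymbol\xi_t$ exactly as under the null. Assumption~\ref{ass:mixing}(iv) supplies the covariance $\Sigma$, since $\E[\varepsilon_t^2]$ and $\E[\zeta_t^2]$ differ by $O(K^{-1})$. The drift matrix converges in probability to $\Gamma$ by Proposition~\ref{prop:Sigma_xi} under \eqref{eq:local-drift-mean-limit}, and Slutsky's lemma yields $\mathcal N_d(\Gamma\tau,\Sigma)$.

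For $\hat\Sigma\xrightarrow{p}\Sigma$, I reuse the null consistency argument. The only changes are the added $O_p(K^{-1/2})$ perturbations of each score contribution and of its centring mean. These change $\hat\Sigma$ by $O_p(K^{-1/2})$ times moments of $\|\boldsymbol\xi_t\|^2$, using the fourth-moment bound. The continuous mapping theorem then gives the noncentral $\chi^2_d\{\lambda(\tau)\}$ limit. Since $\chi^2_{d,1-\alpha}$ is a continuity point of that limit, the power limit \eqref{eq:local_power_curve} follows. Strict monotonicity of the noncentral $\chi^2$ tail in $\lambda$, with $\Sigma^{-1}\succ0$, gives power strictly above $\alpha$ whenever $\Gamma\tau\ne0$.

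The main obstacle is step one, specifically controlling the cross terms when the response residual is split as $\zeta_t$ plus drift. The stability and mixing assumptions are stated for the residual entering the score. So I must check that every use of $\E[\varepsilon_t\mid\cdot]$-type conditions in the null proof is invoked only for $\zeta_t$, and that the drift part is always routed through the cruder Cauchy--Schwarz bounds above. My first attempt is to treat the drift everywhere as a deterministic-order $K^{-1/2}$ perturbation; I only fall back on the block-deletion machinery if a term resists this.
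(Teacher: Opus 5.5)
Your proposal is correct and follows essentially the same route as the paper. The paper likewise writes $\varepsilon_t=\zeta_t+K^{-1/2}\boldsymbol\xi_t^\top\tau$ and applies the null remainder bounds of Appendix~\ref{app:nuisance-remainders} with $\zeta_t$ in place of $\varepsilon_t$, giving $K^{-1/2}\sum_t W_t\zeta_t\boldsymbol\xi_t+o_p(1)$ for the centred part; it then controls $K^{-1}\sum_t(\boldsymbol\xi_t^\top\tau)(\hat W_t\hat{\boldsymbol\xi}_t-W_t\boldsymbol\xi_t)$ by Cauchy--Schwarz through $A_W$ and $A_g$, and for $\hat\Sigma$ bounds the extra term by $2C_W^2\|\tau\|_2^2\{K^{-2}\sum_t\|\boldsymbol\xi_t\|_2^4+A_gK^{-1}\sum_t\|\boldsymbol\xi_t\|_2^2\}=o_p(1)$ (so only $o_p(1)$, not your stated $O_p(K^{-1/2})$, once $\hat{\boldsymbol\xi}_t\neq\boldsymbol\xi_t$, but $o_p(1)$ is all Slutsky needs).
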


The expansion in \eqref{eq:local_score_expansion} separates the stochastic
fluctuation of the centred oracle score from the deterministic local mean
shift. The matrix \(\Gamma\) maps a departure represented by \(\tau\) into
the limiting residual moment, so that $\Gamma\tau\neq 0$ is precisely the detectability condition for the chosen feature dictionary. 

The noncentrality parameter \(\lambda(\tau)\) also describes the role of weighting. The weights affect both the mean shift through \(\Gamma\) and the variability of the score through \(\Sigma\); their effect on power is therefore determined by the resulting signal-to-noise ratio. The adaptive statistic addresses a distinct source of power loss. When the departure is concentrated in a small subset of lag--feature directions, aggregation over the full dictionary may include many coordinates carrying little signal. Maximisation over structured subsets is intended to retain sensitivity to such alternatives, with its null calibration justified in Proposition~\ref{prop:adaptive_calibration}.

\section{Simulation study}
\label{sec:simulation}

The simulations separate two potential sources of power gain: precision weighting under heteroscedasticity, and adaptive aggregation when the signal concentrates only on a small part of a broad dictionary. We consider three data-generating processes that vary the complexity of the outcome history and the timing and form of the exposure effect. Every scenario uses 1000 Monte Carlo replications, \(K=2500\), and nominal level \(0.05\). We omit sequence $Z$ here for simplicity, so the null hypothesis is
\begin{equation}
\label{eq:sim-null}
H_0:~
Y_{t+1} \indep \bar X_t
\given \bar Y_t.
\end{equation}
The primary design combines a nonlinear shared history with a
linear exposure effect at lag 1:
\begin{align}
Y_t={}&0.45Y_{t-1}-0.10Y_{t-2}
 +0.15\tanh(2Y_{t-1}Y_{t-2})
 +\beta X_{t-1}+\varepsilon_t,
\label{eq:sim-shared-y}\\
X_t={}&0.20X_{t-1}-0.05X_{t-2}
 +0.45\tanh(2Y_tY_{t-1})+u_t.
\label{eq:sim-shared-x}
\end{align}
Under the null (i.e., $\beta=0$), the same nonlinear function of the outcome history predicts both \(X_t\) and \(Y_{t+1}\). The design therefore requires the nuisance regressions to account for this shared dependence: if it is inadequately learned, the residuals remain correlated and can lead to spurious rejection of $H_0$.

The innovations are independent, with
\(u_t\sim N(0,0.20)\) and, conditional on the past,
\(\varepsilon_t\sim N(0,\sigma_t^2)\). The variance \(\sigma_t^2\) changes
with the state of the series, and
\(R_\sigma=\max_t\sigma_t^2/\min_t\sigma_t^2\in\{4,8,16\}\).
We set \(\beta=\delta/\sqrt K\), where
\(\delta\in\{0,2.5,3.5\}\) indexes the strength of the alternative and
\(\delta=0\) gives the null. Further details appear in Supplementary
Appendix~
\ref{sec:simulation-common-details}.

For the primary design, the specified-alternative benchmark uses only the active linear term \(X_{t-1}\). We compare this with tests based on a 40-dimensional dictionary comprising polynomial terms of degrees one to five over eight exposure lags. Each procedure is implemented both unweighted and with estimated precision weighting. Generalised random forests \citep{athey2019generalized} are used to estimate the nuisance regressions and precision weights. The adaptive procedure selects among candidate subsets of $\phi(\bar X_t) = \{ X_t^r, X_{t-1}^r,\ldots,X_{t-7}^r: r=1,\ldots,5\},$ where each subset contains all terms up to a given lag depth and polynomial degree, corresponding to a leading rectangle of the lag--degree grid. Calibration uses 1999 Gaussian draws. further implementation details are given in Supplementary Appendix~\ref{sec:simulation-common-details}.

Figure~\ref{fig:simulation-shared-combined}(a) illustrates both sources of power gain. The 40-dimensional test loses power by aggregating over many inactive coordinates, much of which is recovered by the adaptive procedure. Precision weighting yields a further gain that increases with \(R_\sigma\). All six procedures are well calibrated under the null.

For comparison with commonly used Granger-causality procedures, we benchmark the proposed GTCM tests against the classical linear Granger \(F\)-test \citep{granger1969investigating}, a linear VAR Granger test \citep{lutkepohl2005new}, an HC3-robust dynamic linear Wald test \citep{mackinnon1985some}, and the weighted GCM with GAM nuisance regressions \citep{scheidegger2022weighted}. Figure~\ref{fig:simulation-shared-combined}(b) uses the data-generating process in \eqref{eq:sim-shared-y}--\eqref{eq:sim-shared-x} with the mild variance ratio \(R_\sigma=4\). The proposed procedures maintain nominal size, whereas the competing methods exhibit substantial over-rejection under the null. Their rejection rates under the alternative therefore do not represent size-controlled power.

\begin{figure}[htbp]
\centering
\includegraphics[width=0.99\linewidth]
{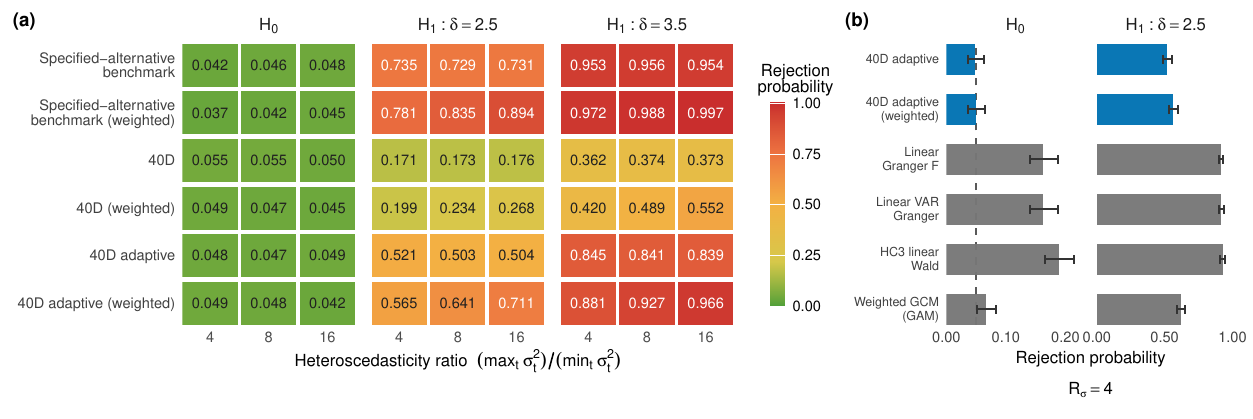}
\caption{Rejection rates for the nonlinear shared-history
design, based on 1000 Monte Carlo replications. Panel (a) shows the six
proposed procedures; columns within each
block give \(R_\sigma=4,8,16\). The specified-alternative benchmark uses the
known linear lag-1 coordinate, 40D uses the full dictionary, and
``weighted'' denotes estimated precision weighting. Panel (b) compares the
two adaptive procedures with four competitors at \(R_\sigma=4\), under
\(H_0\) and under \(H_1\) with \(\delta=2.5\). In panel (b), whiskers are
exact 95\% binomial confidence intervals, and the dashed line in the null
panel marks the nominal level.}
\label{fig:simulation-shared-combined}
\end{figure}

To assess robustness beyond the primary design, we consider two additional data-generating processes. The nonlinear distributed-signal design spreads the exposure effect over lags $1--3$ and uses a nonlinear signal that is not represented exactly by the polynomial dictionary, thereby assessing power under feature misspecification. The delayed linear-signal design places the effect only at lags \(4\) and \(5\), so that the adaptive procedure must search beyond several earlier lags containing no signal. Both designs otherwise follow the primary experiment. Full specifications and additional results are given in Supplementary Appendix~\ref{sec:simulation-additional-results}.

Figure~\ref{fig:simulation-additional-dgps} reports results for the two additional designs at \(R_\sigma=16\). The results reinforce the main findings from the primary experiment: both precision weighting and adaptive aggregation yield clear power gains across these alternative signal structures, while all proposed procedures remain well calibrated under the null. Results across the full \(R_\sigma\)-grid are given in Supplementary Appendix~\ref{sec:simulation-additional-results}.

\begin{figure}[htbp]
\centering
\includegraphics[width=0.98\linewidth]
{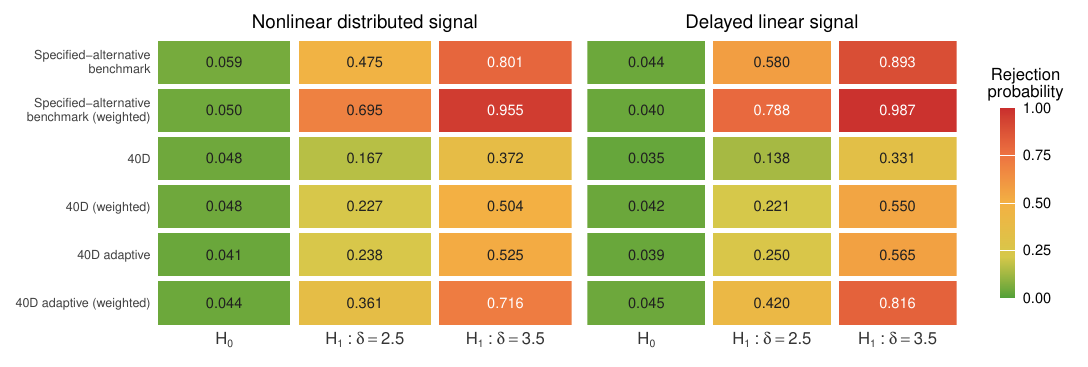}
\caption{Empirical rejection rates for the nonlinear
distributed-signal and delayed linear-signal designs at \(R_\sigma=16\),
based on 1000 Monte Carlo replications. The specified-alternative benchmark
uses the active signal coordinates for each design, whereas the 40D
procedures use the full lag--degree dictionary. ``Weighted'' denotes
estimated precision weighting.}
\label{fig:simulation-additional-dgps}
\end{figure}

\section{Wearable activity and short-term glucose dynamics}
\label{sec:ohiot1dm}

We apply GTCM to longitudinal glucose and wearable data from individuals with Type I diabetes. Physical activity is associated with short-term changes in glucose, but it is less clear whether wearable activity measurements provide information beyond that already contained in routinely recorded clinical and behavioural history. We therefore test whether recent wearable-measured activity is informative about the next five-minute glucose change after conditioning on recent glucose, insulin delivery, carbohydrate intake, sleep, and time of day.

We analyse the six participants in each of the 2018 and 2020 releases of the OhioT1DM study \citep{marling2018ohiot1dm,marling2020ohiot1dm}. Each participant was followed for eight weeks while using an insulin pump and a continuous glucose monitor (CGM). The activity measurements differ importantly between releases: the 2018 participants wore a Basis Peak fitness band that recorded five-minute step counts, whereas the 2020 participants wore an Empatica device that recorded acceleration rather than step counts. For the latter cohort, we use five-minute mean acceleration magnitude constructed from the available one-minute measurements. The two cohorts are therefore analysed separately. Let $G_t$ denote the CGM measurement and let $X_t$ denote the corresponding cohort-specific activity measurement at five-minute grid time \(t\). Define
\begin{equation}
    Y_t = G_t-G_{t-1},
\end{equation}
whenever the two consecutive CGM measurements are observed, so that $Y_{t+1}$ represents the subsequent five-minute glucose change. Insulin, meal, and sleep records are aligned to the same grid before the histories are constructed.

The conditioning history includes recent glucose dynamics, insulin delivery, carbohydrate intake, sleep and circadian timing. Activity history is represented by a 40-dimensional polynomial dictionary constructed from the current activity measurement and recent lags. After removing observations with incomplete required histories, the participant-specific sample sizes range from $K=11{,}484$ to $13{,}888$ in 2018 and from $K=3{,}231$ to $10{,}194$ in 2020.

For cohort-level inference, we combine the six participant-specific \(p\)-values using Fisher's method for each GTCM construction. In the 2018 cohort, the combined \(p\)-values are \(3.79\times10^{-37}\), \(1.07\times10^{-58}\), \(5.19\times10^{-17}\), and \(4.71\times10^{-18}\) for the full unweighted, full precision-weighted, adaptive unweighted, and adaptive precision-weighted procedures, respectively; the corresponding values in 2020 are \(0.001851\), \(0.000280\), \(0.000239\), and \(1.61\times10^{-5}\). We do not report comparisons with standard Granger procedures, as Figure~\ref{fig:simulation-shared-combined}(b) shows that several such methods can be miscalibrated under nonlinear shared histories.

We further assess the sensitivity of these findings to measurement noise by adding independent Gaussian noise to the glucose-change response and repeating the participant- and cohort-level analyses over $500$ repetitions. Table~\ref{tab:ohiot1dm_noise} shows a progressive attenuation of the evidence as the noise level increases, with substantially greater robustness in the 2018 cohort. At c=2, the cohort-level test rejects in 97.2\% of repetitions for 2018 and 10.0\% for 2020; under the strongest contamination, both rejection rates approach the nominal level. This analysis quantifies how rapidly the empirical evidence deteriorates as the response becomes less informative.

\begin{table}[htbp]
\centering
\caption{Cohort-level evidence and attenuation under noise added to the glucose-change response}
\label{tab:ohiot1dm_noise}
\begin{threeparttable}
\begin{tabular}{lcccc}
\toprule
& \multicolumn{2}{c}{2018 Basis cohort}
& \multicolumn{2}{c}{2020 Empatica cohort} \\
\cmidrule(lr){2-3}\cmidrule(lr){4-5}
Noise multiplier \(c\)
& Rejection rate & Median \(p\)
& Rejection rate & Median \(p\) \\
\midrule
Observed data
    & --    & \(4.71\times10^{-18}\)
    & --    & \(1.61\times10^{-5}\) \\
1
    & 1.000 & \(1.15\times10^{-11}\)
    & 0.326 & 0.115 \\
2
    & 0.972 & \(1.42\times10^{-4}\)
    & 0.100 & 0.327 \\
4
    & 0.326 & 0.121
    & 0.064 & 0.480 \\
8
    & 0.080 & 0.397
    & 0.048 & 0.528 \\
\bottomrule
\end{tabular}

\begin{tablenotes}[flushleft]
\footnotesize
\item \textit{Note:}
The observed-data row reports the single cohort-level global-null test, for
which a rejection rate is not defined. For \(c>0\), the rejection rate is
the proportion of 500 noise repetitions in which the cohort-level global null
is rejected at level 0.05, and the median is taken over the corresponding
500 Fisher-combined \(p\)-values. The multiplier \(c\) denotes the standard
deviation of the added noise relative to the observed standard deviation of
the glucose-change response.
\end{tablenotes}
\end{threeparttable}
\end{table}

Both the observed-data and noise-contamination analyses concern incremental temporal information conditional on the recorded history, rather than a causal effect of physical activity on glucose. Supplementary Appendix~\ref{sec:ohiot1dm_supplement} reports participant-level results and two additional checks. Analyses over randomly located windows containing \(20\%\), \(40\%\), \(60\%\) and \(80\%\) of each record assess how the evidence accumulates with increasing observation length and whether it is confined to particular periods. Independent pseudo-exposures provide a data-based calibration check, assessing whether the proposed procedure maintains Type-I error control in these observed time series when the null holds by construction.

\section*{Supplementary material}
\label{SM}

{Appendices~\ref{app:null-proof} and
\ref{app:local-power-proof} contain the proofs of
Theorems~\ref{thm:vector_gtcm} and \ref{thm:local_power}, respectively.
The examples in Remarks~\ref{rem:example-mixing} and
\ref{rem:example-stability} are developed in
Appendices~\ref{app:mixing-example} and \ref{app:stability-example}, with
additional technical results in Appendices~\ref{app:power-results} and
\ref{app:complementary-lemmas}. Simulation details and comparisons are given
in Appendix~\ref{app:implementation}, and additional OhioT1DM analyses in
Appendix~\ref{sec:ohiot1dm_supplement}. Code for the simulation studies and
the real-data analysis is available at~\url{https://github.com/Herashi/Granger-causality}.}

\appendix

\section{Algorithms}

\begin{algorithm}[htbp]
\caption{Generalised temporal covariance measure}
\label{alg:GTCM}
\begin{algorithmic}[1]

\Require $\{(Y_t,X_t,Z_t):t=1,\ldots,K+1\}$, 
         $H_t^{Y,Z}$, $\phi$, $\alpha$;
         weighting option (unweighted or precision);
         bounds $0<c_W<C_W<\infty$

\State Fit $\hat f_t$ and $\hat g_t$ using all $K$ observations

\For{$t=1,\ldots,K$}
    \State $\hat\varepsilon_t
        \gets Y_{t+1}-\hat f_t(H_t^{Y,Z})$
    \State $\hat{\boldsymbol\xi}_t
        \gets \phi(\bar X_t)-\hat g_t(H_t^{Y,Z})$
\EndFor

\If{the weighting option is unweighted}
    \State $\hat W_t\gets 1$, $t=1,\ldots,K$
\Else
    \State Fit
    $\hat\sigma_t^2
        \gets \widehat{\E}(\hat\varepsilon_t^2\given H_t^{Y,Z})$
    \Statex \hspace{\algorithmicindent}
        using all $K$ observations
    \State $\hat W_t
        \gets
        \{\max(C_W^{-1},
        \min(c_W^{-1},\hat\sigma_t^2))\}^{-1}$,
        $t=1,\ldots,K$
\EndIf

\State $T^{(K)}
    \gets K^{-1/2}
    \sum_{t=1}^K
    \hat W_t\hat\varepsilon_t\hat{\boldsymbol\xi}_t$

\State Compute $\hat\Sigma$ using \eqref{eq:covariance_estimator}

\State $Q^{(K)}
    \gets
    (T^{(K)})^\top
    \hat\Sigma^{-1}
    T^{(K)}$

\State Reject $H_0$ if
    $Q^{(K)}>\chi^2_{d,1-\alpha}$

\Ensure $T^{(K)}$, $\hat\Sigma$, and $Q^{(K)}$

\end{algorithmic}
\end{algorithm}

\begin{algorithm}[htbp]
\caption{Monte Carlo calibration of the adaptive maximum GTCM}
\label{alg:bootstrap_subset}
\begin{algorithmic}[1]

\Require $T^{(K)}$, $\hat\Sigma$,
         $\mathcal S\subseteq 2^{\{1,\ldots,d\}}$, and $B$

\For{each $\pmb{s}\in\mathcal S$}
    \State Extract
        $T_{\pmb{s}}^{(K)}$
        and $\hat\Sigma_{\pmb{s}}$
    \State $Q_{\pmb{s}}^{(K)}
        \gets
        (T_{\pmb{s}}^{(K)})^\top
        \hat\Sigma_{\pmb{s}}^{-1}
        T_{\pmb{s}}^{(K)}$
    \State $p_{\pmb{s}}^{(K)}
        \gets
        1-F_{\chi^2_{|\pmb{s}|}}
        \{Q_{\pmb{s}}^{(K)}\}$
\EndFor

\State $M^{(K)}
    \gets
    \max_{\pmb{s}\in\mathcal S}
    \{-\log p_{\pmb{s}}^{(K)}\}$

\State $\hat{\pmb{s}}
    \gets
    \arg\max_{\pmb{s}\in\mathcal S}
    \{-\log p_{\pmb{s}}^{(K)}\}$

\For{$b=1,\ldots,B$}
    \State Draw $T_b\sim\mathcal N_d(0,\hat\Sigma)$
    \For{each $\pmb{s}\in\mathcal S$}
        \State Extract
            $T_{b,\pmb{s}}$
            and $\hat\Sigma_{\pmb{s}}$
        \State $Q_{b,\pmb{s}}
            \gets
            (T_{b,\pmb{s}})^\top
            \hat\Sigma_{\pmb{s}}^{-1}
            T_{b,\pmb{s}}$
        \State $p_{b,\pmb{s}}
            \gets
            1-F_{\chi^2_{|\pmb{s}|}}
            \{Q_{b,\pmb{s}}\}$
    \EndFor
    \State $M_b
        \gets
        \max_{\pmb{s}\in\mathcal S}
        \{-\log p_{b,\pmb{s}}\}$
\EndFor

\State $p_{\mathrm{adj}}
    \gets
    (B+1)^{-1}
    \left\{
        1+\sum_{b=1}^B
        \mathbf 1(M_b\geq M^{(K)})
    \right\}$

\Ensure $p_{\mathrm{adj}}$ and $\hat{\pmb{s}}$

\end{algorithmic}
\end{algorithm}

\section{Conditions on the weights}
\label{app:weight-assumptions}

We distinguish regularity of the population weighting rule from accuracy and
stability of its estimator. The former ensures that weighting preserves a
well-behaved score, while the latter controls the additional error introduced
when the weights are estimated from the same trajectory.

\begin{assumption}[Population weight regularity]
\label{ass:population_weights}
The weights \(W_t\) satisfy the following conditions.
\begin{enumerate}
    \item[(i)] There exist constants \(0<c_W<C_W<\infty\) such that, for all
    \(t\), \(W_t\) is \(H_t^{Y,Z}\)-measurable and
    \(c_W\leq W_t\leq C_W\) almost surely.

    \item[(ii)] For
\(\mathcal V_t
=W_t\operatorname{vec}(\boldsymbol\xi_t\boldsymbol\xi_t^\top)\),
condition~\eqref{eq:2-mixing} holds.
\end{enumerate}
\end{assumption}

Assumption~\ref{ass:population_weights} requires the predictable weights to
remain bounded away from zero and infinity, and imposes the same weak
second-order dependence used for covariance stabilization.

For the estimated weights, block deletion is defined as in
Section~\ref{sec:stability}. The quantities \(\hat W_t^{-t}\) and
\(\hat W_t^{-t,t'}\) are obtained by refitting the complete weight-estimation
procedure after deleting the corresponding blocks, with
\(\hat W_t^{-t,t}=\hat W_t^{-t}\). All estimates are evaluated at the original
history \(H_t^{Y,Z}\). For precision weighting, this refitting includes the
outcome regression, the residuals used to fit the conditional variance, and
any associated preprocessing or tuning.

Use the definitions of \(A_q,B_q,S_q,D_q,\tilde S_q,\tilde D_q\)
in Section~\ref{sec:theory} with \(q=W\), taking
\(W_t,\hat W_t,\hat W_t^{-t},\hat W_t^{-t,t'}\) as the corresponding
evaluated values. For \(B_W,\tilde S_W,\tilde D_W\), the
residual multiplier is \(\varepsilon_t^2\|\boldsymbol\xi_t\|_2^2\).
Thus \(A_W,S_W,D_W\) measure ordinary weight error and deletion stability,
whereas \(B_W,\tilde S_W,\tilde D_W\) measure the same errors after weighting
by the squared norm of the oracle score contribution.

\begin{assumption}[Estimated weight accuracy and stability]
\label{ass:estimated_weights}
Let \(W_t\) satisfy Assumption~\ref{ass:population_weights}. With the
windows \(l_K,r_K\) of Assumption~\ref{ass:mixing}, the following
conditions hold.
\begin{enumerate}
    \item[(i)] \textit{Boundedness and information.} Almost surely, for
    every \(K\) and \(1\leq t,t'\leq K\),
    \[
    c_W\leq\hat W_t,\,\hat W_t^{-t},\,\hat W_t^{-t,t'}\leq C_W.
    \]
    The weight procedure uses response and adjustment observations.
    Whenever \(|s-t|\geq l_K+r_K\), the weights
    \(W_s,W_t,\hat W_s^{-s,t},\hat W_t^{-t},\hat W_t^{-s,t}\)
    are measurable with respect to
    \(H_t^{Y,Z},A_{t+r_K}^{Y,Z}\), together with any independent
    fitting randomness.

    \item[(ii)] \textit{Accuracy and block stability.} The fitted weights satisfy
    \begin{equation}
    \label{eq:weight-stability-moments}
    A_W=o_p(1),\quad S_W=o_p(K^{-1}), \quad \E[D_W^2]=o(K^{-2}).
    \end{equation}

    \item[(iii)] \textit{Residual-weighted accuracy.}
    The fitted weights satisfy
    \begin{equation}
    \label{eq:weight-consistency}
(l_K+r_K)\E[B_W+\tilde S_W]
+K\E[\tilde D_W]=o(1).
    \end{equation}

    \item[(iv)] \textit{Residual-weighted moment control.}
    The second moments satisfy
    \begin{equation}
    \label{eq:weight-weighted-second-moments}
    \E[B_W^2+\tilde S_W^{\,2}+\tilde D_W^{\,2}]=O(1).
    \end{equation}

\end{enumerate}
\end{assumption}

Assumption~\ref{ass:estimated_weights} parallels the conditions imposed on
the nuisance regressions. Part~(ii) requires consistency of the fitted
weights and stability under local block deletion, while boundedness in
part~(i) provides uniform control of the weights. Parts~(iii)--(iv) impose
the corresponding residual-weighted conditions, with multiplier
\(\varepsilon_t^2\|\boldsymbol\xi_t\|_2^2\), reflecting how weight-estimation
errors enter the score through the oracle residual product. The rates in
\eqref{eq:weight-consistency} parallel the residual-weighted prediction
condition in \eqref{eq:residual-weighted-errors} and the deletion rates in
\eqref{eq:stability}, while the second-moment bounds in
\eqref{eq:weight-weighted-second-moments} correspond to those in
\eqref{eq:residual-weighted-errors} and \eqref{eq:stability-moments}.
The information condition in part~(i), together with block deletion, provides
the separation needed to control interactions between weight estimation and
temporally distant score terms. The condition \(A_W=o_p(1)\) additionally
ensures consistency of the estimated weighting rule.

For precision weighting, \(W_t=W_t^\star\) remains time-varying through
the conditional variance in \eqref{eq:optimal_weight}. Under \(H_0\),
Assumption~\ref{ass:errors}(i) gives
\[
c_1^{-4/(2+\delta)}\leq W_t^\star\leq c_2^{-2},\quad
\E[W_t^\star\varepsilon_t^2\given H_t^{Y,Z}]=1.
\]
Conditional independence further gives
\[
\E\big[(W_t^\star)^2\varepsilon_t^2
\boldsymbol\xi_t\boldsymbol\xi_t^\top\given H_t^{X,Y,Z}\big]
=W_t^\star\boldsymbol\xi_t\boldsymbol\xi_t^\top.
\]
Hence the population precision weights satisfy the required boundedness
provided \(c_W\) and \(C_W\) are chosen to contain this range, and the above
identities simplify the corresponding covariance expressions.

The fitted precision weights are obtained from an estimated conditional
variance and remain subject to Assumption~\ref{ass:estimated_weights}.
In practice, \(\hat\sigma_t^2\) may be truncated to
\([C_W^{-1},c_W^{-1}]\) before inversion. Since the reciprocal map is
Lipschitz on this interval, accuracy and deletion bounds for the variance
estimator transfer directly to the weights, including their residual-weighted
counterparts. In particular,
\[
|\hat W_t-W_t^\star|
\leq C_W^2\big|\hat\sigma_t^2-
\E[\varepsilon_t^2\given H_t^{Y,Z}]\big|.
\]
The same Lipschitz bound applies to the differences between deleted
variance fits.

\bibliographystyle{chicago}
\bibliography{paper-ref}

\bigskip

\newpage
\appendix

\section{Proof of Theorem \ref{thm:vector_gtcm}}
\label{app:null-proof}

\subsection{Vector decomposition and oracle martingale limit}

Substituting the residual and weight decompositions into $T^{(K)}$ and
collecting terms gives the following decomposition:
\begin{equation}
\label{eq:hatR_decompose}
\begin{aligned}
T^{(K)}
={}&
\underbrace{\frac{1}{\sqrt K}\sum_{t=1}^K
W_t\varepsilon_t\boldsymbol\xi_t}_{\mathrm{I}}
+
\underbrace{\frac{1}{\sqrt K}\sum_{t=1}^K
W_t\varepsilon_t
\big\{g_t(H_t^{Y,Z})-\hat g_t(H_t^{Y,Z})\big\}}_{\mathrm{II}}
\\
&+
\underbrace{\frac{1}{\sqrt K}\sum_{t=1}^K
W_t\boldsymbol\xi_t
\big\{f_t(H_t^{Y,Z})-\hat f_t(H_t^{Y,Z})\big\}}_{\mathrm{III}}
\\
&+
\underbrace{\frac{1}{\sqrt K}\sum_{t=1}^K
W_t
\big\{f_t(H_t^{Y,Z})-\hat f_t(H_t^{Y,Z})\big\}
\big\{g_t(H_t^{Y,Z})-\hat g_t(H_t^{Y,Z})\big\}}_{\mathrm{IV}}
\\
&+
\underbrace{\frac{1}{\sqrt K}\sum_{t=1}^K
(\hat W_t-W_t)\varepsilon_t\boldsymbol\xi_t}_{\mathrm{V}}
\\
&+
\underbrace{\frac{1}{\sqrt K}\sum_{t=1}^K
(\hat W_t-W_t)\varepsilon_t
\big\{g_t(H_t^{Y,Z})-\hat g_t(H_t^{Y,Z})\big\}}_{\mathrm{VI}}
\\
&+
\underbrace{\frac{1}{\sqrt K}\sum_{t=1}^K
(\hat W_t-W_t)\boldsymbol\xi_t
\big\{f_t(H_t^{Y,Z})-\hat f_t(H_t^{Y,Z})\big\}}_{\mathrm{VII}}
\\
&+
\underbrace{\frac{1}{\sqrt K}\sum_{t=1}^K
(\hat W_t-W_t)
\big\{f_t(H_t^{Y,Z})-\hat f_t(H_t^{Y,Z})\big\}
\big\{g_t(H_t^{Y,Z})-\hat g_t(H_t^{Y,Z})\big\}}_{\mathrm{VIII}} .
\end{aligned}
\end{equation}
For Term~\(\mathrm I\), appealing to Assumption~\ref{ass:errors}(i) and a
 martingale central limit theorem we obtain a Gaussian limit (see Lemma~\ref{lem:leading-term}). We show that the remaining terms are $o_P(1)$. For Terms~\(\mathrm{II}\)--\(\mathrm{III}\), we
use the weighted nuisance-error bounds in Assumption \ref{ass:errors}(iii), the projection conditions in
Assumption~\ref{ass:mixing}(ii)--(iii), and block stability in Assumption \ref{ass:stability}. Control of Terms
\(\mathrm{IV}\) and \(\mathrm{VIII}\) follow from the product
rate condition in Assumption \ref{ass:errors}(ii). Terms~\(\mathrm V\)--\(\mathrm{VII}\) use boundedness, consistency
and block stability of the estimated weights stated in Assumption \ref{ass:estimated_weights}, together with the corresponding
controls for the population-weighted terms sated in Assumption \ref{ass:population_weights}. 
We first establish the multivariate limit of term \(\mathrm{I}\), and then
control the remaining terms coordinatewise.

\begin{lemma}[Oracle martingale limit]
\label{lem:leading-term}
Under \(H_0\), Assumptions
\ref{ass:errors}(i), \ref{ass:mixing}(iv) and \ref{ass:population_weights} imply
\begin{equation}
\label{eq:leading-null}
\frac1{\sqrt K}\sum_{t=1}^K W_t\varepsilon_t\boldsymbol\xi_t
\xrightarrow{d}\mathcal N_d(0,\Sigma).
\end{equation}
Under the local alternatives in Theorem~\ref{thm:local_power}, the same
conclusion holds with \(\zeta_t\) in place of \(\varepsilon_t\), provided
the corresponding variance conditions and
$\E\big[|\zeta_t|^{2+\delta}\given H_t^{X,Y,Z}\big]
\leq c_1^2$ hold uniformly along the sequence.
\end{lemma}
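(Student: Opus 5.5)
We reduce to a scalar martingale CLT via the Cramér--Wold device. Fix $a\in\mathbb R^d$ and set $\eta_{t}:=K^{-1/2}W_t\varepsilon_t\,a^\top\boldsymbol\xi_t$. With the filtration $\mathcal F_t:=\sigma(H_{t+1}^{X,Y,Z})$, the variable $\eta_t$ is $\mathcal F_t$-measurable, since $Y_{t+1}$ enters through $\varepsilon_t$. The factors $W_t$ and $\boldsymbol\xi_t$ are $H_t^{X,Y,Z}$-measurable. Under $H_0$, $Y_{t+1}\indep\bar X_t\given H_t^{Y,Z}$ gives $\E[\varepsilon_t\given H_t^{X,Y,Z}]=\E[\varepsilon_t\given H_t^{Y,Z}]=0$. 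Hence $\E[\eta_t\given\mathcal F_{t-1}]=0$, and $\{\eta_t\}$ is a martingale difference array. It suffices to verify the two conditions of the martingale CLT \citep{dvoretzky1972asymptotic}: (a) the conditional variance $\sum_t\E[\eta_t^2\given\mathcal F_{t-1}]\xrightarrow{p}a^\top\Sigma a$, and (b) a conditional Lindeberg condition. Since $a$ is arbitrary, this yields \eqref{eq:leading-null}.

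For (a), write $V_t:=W_t^2\E[\varepsilon_t^2\given H_t^{X,Y,Z}](a^\top\boldsymbol\xi_t)^2=(a\otimes a)^\top\mathcal V_t$, with $\mathcal V_t$ as in Assumption~\ref{ass:mixing}(iv). Then $\sum_t\E[\eta_t^2\given\mathcal F_{t-1}]=K^{-1}\sum_tV_t$. Its mean is $K^{-1}\sum_t\E[W_t^2\varepsilon_t^2(a^\top\boldsymbol\xi_t)^2]$, which converges to $a^\top\Sigma a$ by \eqref{eq:oracle-unconditional-covariance-limit}. For the variance, bound
\[
\Var\Big(K^{-1}\sum_t V_t\Big)\le K^{-2}\sum_{t,s}|\Cov(V_t,V_s)|.
\]
Split the pairs by $|t-s|\le h$ versus $|t-s|>h$. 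For $|t-s|>h$, each covariance is at most $\|a\|^4\sup_t\|\Cov(\mathcal V_t,\mathcal V_{t+|t-s|})\|_F$, which is small for large $h$ by \eqref{eq:2-mixing}. For $|t-s|\le h$, use Cauchy--Schwarz with the uniform bound $\Var(V_t)\lesssim\E\|\boldsymbol\xi_t\|^4$. This is finite because $W_t\le C_W$ and $\E[\varepsilon_t^2\given H_t^{X,Y,Z}]\le c_1^{4/(2+\delta)}$ by Jensen and Assumption~\ref{ass:errors}(i). The near pairs therefore contribute $O(h/K)$. Letting $K\to\infty$ and then $h\to\infty$ shows the variance vanishes, so Chebyshev gives (a).

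For (b), use a Lyapunov-type bound with exponent $2+\delta'$ for a suitable $\delta'\le\delta$. Conditional on $H_t^{X,Y,Z}$,
\[
\E|\eta_t|^{2+\delta}\le K^{-1-\delta/2}C_W^{2+\delta}c_1^2\,\E|a^\top\boldsymbol\xi_t|^{2+\delta}.
\]
We only control the fourth moment of $\boldsymbol\xi_t$, so the exponent must be chosen to stay within it. Take $\delta'=\min(\delta,2)$, and note that Assumption~\ref{ass:errors}(i) with exponent $2+\delta$ implies the bound for $2+\delta'$ by conditional Jensen. Then $\E|a^\top\boldsymbol\xi_t|^{2+\delta'}\le(\E|a^\top\boldsymbol\xi_t|^4)^{(2+\delta')/4}$ is uniformly bounded. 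Summing over $t$ gives $O(K^{-\delta'/2})\to0$, and this unconditional Lyapunov bound implies the conditional Lindeberg condition in $L^1$.

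The main obstacle I expect is step (a), namely converting the weak covariance-decay condition \eqref{eq:2-mixing} into an $L^2$ law of large numbers without stationarity; the split-sum argument above should handle it. For the local alternative statement, the same argument applies verbatim with $\zeta_t$ in place of $\varepsilon_t$. The martingale property follows from $\E[\zeta_t\given H_t^{Y,Z},\bar X_t]=0$, and the moment bound for $\zeta_t$ and the corresponding variance conditions are assumed along the sequence.
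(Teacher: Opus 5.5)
Your proposal is correct and follows essentially the same route as the paper. Both arguments establish the martingale-difference property under $H_0$, show that the conditional covariance stabilises by bounding the variance of $K^{-1}\sum_t\mathcal V_t$ via the covariance decay in \eqref{eq:2-mixing} together with \eqref{eq:oracle-unconditional-covariance-limit}, verify a Lyapunov condition after reducing $\delta$ to at most $2$ so that only the fourth moment of $\boldsymbol\xi_t$ is needed, and then apply the martingale CLT of \citet{dvoretzky1972asymptotic}. The differences are cosmetic: you use the Cram\'er--Wold device and an explicit near/far split of the covariance sum, where the paper works with the vectorised conditional covariance directly and uses a Ces\`aro-type average, and you specify the filtration $\sigma(H_{t+1}^{X,Y,Z})$ more precisely than the paper does.
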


\begin{proof}
Under \(H_0\),
\(W_t\boldsymbol\xi_t\) is
\(H_t^{X,Y,Z}\)-measurable and so
for \(s<t\),
\begin{align*}
\E[W_t\varepsilon_t\boldsymbol\xi_t\given H_t^{X,Y,Z}]
&=W_t\boldsymbol\xi_t\E[\varepsilon_t\given H_t^{X,Y,Z}]=0,\\
\E[W_tW_s\varepsilon_t\varepsilon_s\boldsymbol\xi_t\boldsymbol\xi_s^\top]
&=\E\big[W_s\varepsilon_s
\E[W_t\varepsilon_t\boldsymbol\xi_t\given H_t^{X,Y,Z}]
\boldsymbol\xi_s^\top\big]=0.
\end{align*}
Thus $W_t\varepsilon_t\boldsymbol\xi_t$ is a martingale different sequence relative to the filtration given by $H_t^{X,Y,Z}$. 
Recall
\[
\mathcal V_t =
\operatorname{vec}\big(
\E[W_t^2\varepsilon_t^2\boldsymbol\xi_t\boldsymbol\xi_t^\top
\given H_t^{X,Y,Z}]\big) = W_t^2\E[\varepsilon_t^2\given H_t^{Y,Z}]
\operatorname{vec}(\boldsymbol\xi_t\boldsymbol\xi_t^\top).
\]
Using Assumption~\ref{ass:mixing}(iv),
\begin{align*}
\E\Big\|\frac1K\sum_{t=1}^K(\mathcal V_t-\E\mathcal V_t)\Big\|_2^2
&=\frac1{K^2}\sum_{t=1}^K\sum_{s=1}^K
\Tr\{\Cov(\mathcal V_t,\mathcal V_s)\}\\
&\leq\frac1K\max_{1\leq t\leq K}\E\|\mathcal V_t\|_2^2
+\frac{2d}{K^2}\sum_{h=1}^{K-1}(K-h)
\max_{1\leq t\leq K-h}\|\Cov(\mathcal V_t,\mathcal V_{t+h})\|_F\\
&\leq\frac1K\max_{1\leq t\leq K}\E\|\mathcal V_t\|_2^2
+\frac{2d}{K}\sum_{h=1}^{K-1}
\max_{1\leq t\leq K-h}\|\Cov(\mathcal V_t,\mathcal V_{t+h})\|_F\\
& \rightarrow0.
\end{align*}
Hence
\begin{equation}
\label{eq:oracle-conditional-covariance-stabilization}
\frac1K\sum_{t=1}^K
\big\{
\E[W_t^2\varepsilon_t^2\boldsymbol\xi_t\boldsymbol\xi_t^\top
\given H_t^{X,Y,Z}]
-\E[W_t^2\varepsilon_t^2\boldsymbol\xi_t\boldsymbol\xi_t^\top]
\big\}\xrightarrow{p}0,
\end{equation}
and \eqref{eq:oracle-unconditional-covariance-limit} gives
\begin{equation}
\label{eq:oracle-conditional-covariance-limit}
\frac1K\sum_{t=1}^K
\E[W_t^2\varepsilon_t^2\boldsymbol\xi_t\boldsymbol\xi_t^\top
\given H_t^{X,Y,Z}]
\xrightarrow{p}\Sigma.
\end{equation}

To check the conditional Lyapunov condition, reduce \(\delta\) if necessary so that
\(0<\delta\leq2\), adjusting the finite upper-moment constant of Assumption~\ref{ass:errors}(i). By
\eqref{eq:null_hyp-multi} and 
Assumption
\ref{ass:errors}(i),
\begin{align*}
\max_{1\leq t\leq K}\E\|\boldsymbol\xi_t\|_2^4
&=\max_{1\leq t\leq K}\E\|\operatorname{vec}(\boldsymbol\xi_t\boldsymbol\xi_t^\top)\|_2^2
=O(1),\\
\E[|\varepsilon_t|^{2+\delta}\|\boldsymbol\xi_t\|_2^{2+\delta}]
&=\E\big[
\E[|\varepsilon_t|^{2+\delta}\given H_t^{X,Y,Z}]
\|\boldsymbol\xi_t\|_2^{2+\delta}
\big]\\
&\leq c_1^2\E \big[\|\boldsymbol\xi_t\|_2^{2+\delta}\big]\\
& \leq c_1^2\E\big[\|\boldsymbol\xi_t\|_2^4\big]^{(2+\delta)/4}.
\end{align*}
Consequently,
\begin{align*}
&\E\Big[
\frac1{K^{1+\delta/2}}\sum_{t=1}^K
\E[\|W_t\varepsilon_t\boldsymbol\xi_t\|_2^{2+\delta}
\given H_t^{X,Y,Z}]
\Big]\\
&\quad=\frac1{K^{1+\delta/2}}\sum_{t=1}^K
\E[\|W_t\varepsilon_t\boldsymbol\xi_t\|_2^{2+\delta}]\\
&\quad\leq\frac{C_W^{2+\delta}c_1^2}{K^{\delta/2}}
\max_{1\leq t\leq K}\big(\E\|\boldsymbol\xi_t\|_2^4\big)^{(2+\delta)/4}\\
& \quad\rightarrow0.
\end{align*}
Markov's inequality gives conditional Lyapunov and hence conditional
Lindeberg. This and \eqref{eq:oracle-conditional-covariance-limit} thus mean the conditions for the martingale central limit theorem \citep{dvoretzky1972asymptotic} are satisfied and prove \eqref{eq:leading-null}.

\end{proof}

\subsection{Coordinatewise control of the nuisance remainders}
\label{app:nuisance-remainders}

Fix \(j\in\{1,\ldots,d\}\). We use the quantities
\(A_q,B_q,S_q,D_q,\tilde S_q,\tilde D_q\) defined in
Section~\ref{sec:theory}, with \(q\in\{f,g\}\). All fitted values are evaluated at their original histories. For later reference, we first establish several useful bounds. The three-term decomposition
\begin{equation*}
    \begin{split}
q_t(H_t^{Y,Z})-\hat q_t^{-t,t'}(H_t^{Y,Z})
={}&q_t(H_t^{Y,Z})-\hat q_t(H_t^{Y,Z})\\
&+\hat q_t(H_t^{Y,Z})-\hat q_t^{-t}(H_t^{Y,Z})\\
&+\hat q_t^{-t}(H_t^{Y,Z})-\hat q_t^{-t,t'}(H_t^{Y,Z})
\end{split}
\end{equation*}
and the squared triangle inequality
give
\begin{equation}
\label{eq:deleted-prediction-envelope}
\frac1K\sum_{t=1}^K\max_{1\leq t'\leq K}
\|q_t(H_t^{Y,Z})-\hat q_t^{-t,t'}(H_t^{Y,Z})\|_2^2
\leq3(A_q+S_q+D_q).
\end{equation}
The corresponding residual-weighted bounds are
\begin{equation}
\label{eq:deleted-residual-envelope}
\begin{aligned}
\frac1K\sum_{t=1}^K\varepsilon_t^2
\max_{1\leq t'\leq K}
\|g_t(H_t^{Y,Z})-\hat g_t^{-t,t'}(H_t^{Y,Z})\|_2^2
&\leq3(B_g+\tilde S_g+\tilde D_g),\\
\frac1K\sum_{t=1}^K\|\boldsymbol\xi_t\|_2^2
\max_{1\leq t'\leq K}
|f_t(H_t^{Y,Z})-\hat f_t^{-t,t'}(H_t^{Y,Z})|^2
&\leq3(B_f+\tilde S_f+\tilde D_f).
\end{aligned}
\end{equation}
Assumptions~\ref{ass:stability} and \ref{ass:errors} give
\begin{equation}
\label{eq:deleted-error-moments}
\begin{aligned}
&\E[(A_q+S_q+D_q)^2]
+\E[(B_q+\tilde S_q+\tilde D_q)^2]\\
&\quad\leq3\big\{\E[A_q^2]+\E[B_q^2]+\E[S_q^2]+\E[D_q^2]
+\E[\tilde S_q^2]+\E[\tilde D_q^2]\big\}\\
&\quad=O(1),
\end{aligned}
\end{equation}
and
\begin{equation}
\label{eq:deletion-first-moments}
\begin{aligned}
&\E[S_q]+\E[D_q]+\E[\tilde S_q]+\E[\tilde D_q]\\
&\quad\leq\{\E[S_q^2]\}^{1/2}+\{\E[D_q^2]\}^{1/2}
+\{\E[\tilde S_q^2]\}^{1/2}+\{\E[\tilde D_q^2]\}^{1/2}\\
&\quad=O(1),
\end{aligned}
\end{equation}
These results will be invoked directly in the remainder of the proof without further comment.
\begin{lemma}[Exposure-feature regression remainder]
\label{lem:exposure-remainder-moments}
Under \(H_0\), suppose Assumptions~\ref{ass:mixing},
\ref{ass:stability}, \ref{ass:errors} and
\ref{ass:population_weights} hold. 
Then \(\mathrm{II}=o_p(1)\).
\end{lemma}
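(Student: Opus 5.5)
Fix a coordinate \(j\) and write \(\Delta^g_t:=g_{t,j}(H_t^{Y,Z})-\hat g_{t,j}(H_t^{Y,Z})\), so that \(\mathrm{II}_j=K^{-1/2}\sum_t W_t\varepsilon_t\Delta^g_t\). The plan is to show \(\E[\mathrm{II}_j^2]\to0\), or rather to show this after a truncation that relies on the \(o_p\) parts of the assumptions. The difficulty is that \(\hat g\) is fitted on the whole trajectory, so \(\Delta^g_t\) is not \(H_t^{Y,Z}\)-measurable and the martingale argument of Lemma~\ref{lem:leading-term} does not apply directly. I will therefore replace \(\hat g_t\) by the block-deleted fits and decompose
\[
\mathrm{II}_j=\frac1{\sqrt K}\sum_t W_t\varepsilon_t\{g_{t,j}-\hat g^{-t}_{t,j}\}+\frac1{\sqrt K}\sum_t W_t\varepsilon_t\{\hat g^{-t}_{t,j}-\hat g_{t,j}\}=:\mathrm{II}_a+\mathrm{II}_b .
\]
For \(\mathrm{II}_b\) I will use Cauchy--Schwarz together with boundedness of \(W_t\): \(|\mathrm{II}_b|\le C_W\sqrt K\,(\frac1K\sum\varepsilon_t^2)^{1/2}S_g^{1/2}\). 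Since \(\frac1K\sum\varepsilon_t^2=O_p(1)\) by Assumption~\ref{ass:errors}(i) and \(S_g=o_p(K^{-1})\), this gives \(\mathrm{II}_b=o_p(1)\).

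For \(\mathrm{II}_a\) I expand the second moment as \(K^{-1}\sum_{t,s}\E[W_tW_s\varepsilon_t\varepsilon_s e_te_s]\) with \(e_t:=g_{t,j}-\hat g^{-t}_{t,j}\). I split the pairs into \emph{local} ones, \(|s-t|<l_K+r_K\), and \emph{separated} ones.

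\textbf{Local pairs.} By \(2|ab|\le a^2+b^2\), each row contributes at most \(2(l_K+r_K)\) copies of \(\varepsilon_t^2e_t^2\). Bounding \(e_t^2\le 2(\Delta_t^g)^2+2(\hat g_t-\hat g_t^{-t})^2\), the local part is at most \(C(l_K+r_K)\E[B_g+\tilde S_g]\), which is \(o(1)\) by Assumptions~\ref{ass:errors}(iii) and \ref{ass:stability}.

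\textbf{Separated pairs.} Take \(s<t-(l_K+r_K)\) and pass to the doubly deleted fits via \(e_t=(g_t-\hat g_t^{-t,s})+(\hat g_t^{-s,t}-\hat g_t^{-t})\), and similarly for \(e_s\). The pieces involving the second-deletion differences are bounded by Cauchy--Schwarz, with the \(\tilde D_g\) and \(D_g\) quantities combined with \eqref{eq:deleted-residual-envelope}. This gives a bound of order \(K\E[\tilde D_g]\cdot O(1)\) after using \(\E[(B_g+\tilde S_g+\tilde D_g)^2]=O(1)\), so it vanishes by Assumption~\ref{ass:stability}. The main piece is \(\E[W_tW_s\varepsilon_t\varepsilon_s(g_t-\hat g_t^{-s,t})(g_s-\hat g_s^{-s,t})]\). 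Here the fit \(\hat g^{-s,t}\) omits the blocks \(\Delta_s,\Delta_t\), so it depends on data outside \((t,t+r_K]\) and \((s,s+r_K]\). I condition on \(H_t^{X,Y,Z}\) and \(A_{t+r_K}^{X,Y,Z}\), which makes every factor except \(\varepsilon_t\) measurable; this uses that \(\varepsilon_s\) and \(W_s\) are measurable since \(s+r_K<t\). The factor \(\varepsilon_t\) is then replaced by \(\E[\varepsilon_t\mid H_t^{X,Y,Z},A^{X,Y,Z}_{t+r_K}]\), and Cauchy--Schwarz followed by summation over \(s\) reduces the term to
\[
\Big(\sum_t\E\big[\E[\varepsilon_t\mid\cdot]^2\big]\Big)^{1/2}\Big(\E\Big[\textstyle\sum_t\big(\sum_s|\varepsilon_s e_s|\big)^2 e_t^2\Big]\Big)^{1/2}.
\]
The first factor is \(o(K^{-1/2})\) by \eqref{eq:epsilon-mixing}. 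The second factor should be \(O(K^{1/2})\cdot\)bounded, using \eqref{eq:deleted-prediction-envelope}, \eqref{eq:deleted-residual-envelope} and the second-moment bounds, so the product is \(o(1)\) after the \(K^{-1}\) prefactor. Assumption~\ref{ass:mixing}(i) is what justifies evaluating \(g_t\) and the fits at histories that are unaffected by the conditioning on remote data.

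\textbf{Main obstacle.} I expect the separated-pair term to be the hardest step. Getting the sum over \(s\) to cost only one factor of \(K\) requires pairing \(\max_{t'}\) envelopes with the \(\E[\cdot^2]=O(1)\) moment conditions. There is also measurability bookkeeping: the deleted fit must genuinely be measurable with respect to the conditioning \(\sigma\)-field, which relies on the convention that deletion removes every row using a deleted observation. If the direct \(L^2\) bound fails, I would first truncate on the events \(\{A_g\le1\}\) and \(\{S_g\le K^{-1}\}\), whose probabilities tend to one, and then use Markov's inequality.
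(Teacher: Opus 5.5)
Your architecture matches the paper's: the once-deleted sum plus the \(S_g\) remainder, the nearby/separated split with the bound \(C(l_K+r_K)\E[B_g+\tilde S_g]\), the passage to the common twice-deleted fit, and, for the product of two twice-deleted errors, projection of the later residual onto \(H_t^{X,Y,Z},A_{t+r_K}^{X,Y,Z}\) followed by Cauchy--Schwarz and H\"older. Your count there slips: the second factor is \(O(K^{3/2})\), not \(O(K^{1/2})\). With the \(K^{-1}\) prefactor the term is still \(o(1)\).

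The gap is in the mixed pieces. Write \(a_{t,s}=g_{t,j}(H_t^{Y,Z})-\hat g^{-t,s}_{t,j}(H_t^{Y,Z})\) and \(b_{s,t}=\hat g^{-s,t}_{s,j}(H_s^{Y,Z})-\hat g^{-s}_{s,j}(H_s^{Y,Z})\). You assert that every product containing a second-deletion difference is controlled by Cauchy--Schwarz at order \(K\E[\tilde D_g]\). That holds only for \(b\cdot b\). For \(a\cdot b\), the best unprojected bound is
\[
\frac{C_W^2}{K}\Big(\sum_t|\varepsilon_t|\max_s|a_{t,s}|\Big)\Big(\sum_s|\varepsilon_s|\max_t|b_{s,t}|\Big)\le\sqrt3\,C_W^2K\big\{(B_g+\tilde S_g+\tilde D_g)\tilde D_g\big\}^{1/2}.
\]
Its expectation is only \(K\cdot O(1)\cdot o(K^{-1/2})=o(K^{1/2})\), because \(K\E[\tilde D_g]=o(1)\) gives merely \(\E[\tilde D_g]=o(K^{-1})\). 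Switching to \(D_g\) is worse, since only \(\E[D_g^2]=O(1)\) is assumed. A first-order error times a second-deletion difference, summed over \(K^2\) pairs without cancellation, has no reason to vanish.

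The missing idea is that the mixed terms also need the projection argument. The residual to project is chosen by which factor carries the singly deleted fit, not by temporal order. In \(a_{t,s}b_{s,t}\), the fit \(\hat g^{-s}\) omits only \(\Delta_s\). It is therefore measurable with respect to \(H_s^{X,Y,Z},A_{s+r_K}^{X,Y,Z}\) but not with respect to the \(\sigma\)-field attached to \(t\). So you must project \(\varepsilon_s\), and this is needed for both orderings of \(s\) and \(t\). This yields
\[
C\Big\{K\sum_{t=1}^K\E\big[\E[\varepsilon_t\given H_t^{X,Y,Z},A_{t+r_K}^{X,Y,Z}]^2\big]\Big\}^{1/2}\big\{\E[(B_g+\tilde S_g+\tilde D_g)^2]\big\}^{1/4}\{\E[D_g^2]\}^{1/4}=o(1),
\]
which is exactly why \(D_g\) needs only a bounded second moment.

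When the projected residual is the earlier one, the later-index quantities \(\varepsilon_t\), \(g_t(H_t^{Y,Z})\) and the fitted values evaluated at \(H_t^{Y,Z}\) involve observations in \(\Delta_s\). This is where the local block condition of Assumption~\ref{ass:mixing}(i) is actually used. In your main piece, where the later residual is projected, every factor is already measurable without that condition. That you located its role there is a sign that the case requiring it was skipped.
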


\begin{proof}
First,
\begin{align*}
&\frac1{\sqrt K}\sum_{t=1}^K W_t\varepsilon_t
\big\{g_{t,j}(H_t^{Y,Z})-\hat g_{t,j}(H_t^{Y,Z})\big\}\\
&\quad=\frac1{\sqrt K}\sum_{t=1}^K W_t\varepsilon_t
\big\{g_{t,j}(H_t^{Y,Z})-\hat g_{t,j}^{-t}(H_t^{Y,Z})\big\}\\
&\qquad{}+\frac1{\sqrt K}\sum_{t=1}^K W_t\varepsilon_t
\big\{\hat g_{t,j}^{-t}(H_t^{Y,Z})-\hat g_{t,j}(H_t^{Y,Z})\big\}.
\end{align*}
By the bounded weight assumption, Cauchy--Schwarz and
Assumption~\ref{ass:stability},
\begin{equation}
\label{eq:exposure-first-deletion-bound}
\begin{aligned}
&\Big|\frac1{\sqrt K}\sum_{t=1}^K W_t\varepsilon_t
\big\{\hat g_{t,j}^{-t}(H_t^{Y,Z})
      -\hat g_{t,j}(H_t^{Y,Z})\big\}\Big|\\
&\quad\leq C_W
\Big(\frac1K\sum_{t=1}^K\varepsilon_t^2\Big)^{1/2}
(KS_g)^{1/2}\\
&\quad=o_p(1).
\end{aligned}
\end{equation}

Squaring the remaining term gives
\begin{equation}
\label{eq:exposure-square-expansion}
\begin{aligned}
&\Big(\frac1{\sqrt K}\sum_{t=1}^K W_t\varepsilon_t
\big\{g_{t,j}(H_t^{Y,Z})
      -\hat g_{t,j}^{-t}(H_t^{Y,Z})\big\}\Big)^2\\
&\quad=\frac1K\sum_{t=1}^K\sum_{t'=1}^K
W_tW_{t'}\varepsilon_t\varepsilon_{t'}
\big\{g_{t,j}(H_t^{Y,Z})
      -\hat g_{t,j}^{-t}(H_t^{Y,Z})\big\}\times
\big\{g_{t',j}(H_{t'}^{Y,Z})
      -\hat g_{t',j}^{-t'}(H_{t'}^{Y,Z})\big\}.
\end{aligned}
\end{equation}
For \(|t-t'|<l_K+r_K\), taking expectations and using
\(2|ab|\leq a^2+b^2\) gives
\begin{align*}
&\frac1K\sum_{\substack{1\leq t,t'\leq K:\\
                      |t-t'|<l_K+r_K}}
\E\Big[\Big|W_tW_{t'}\varepsilon_t\varepsilon_{t'}
\big\{g_{t,j}(H_t^{Y,Z})
      -\hat g_{t,j}^{-t}(H_t^{Y,Z})\big\}\times
\big\{g_{t',j}(H_{t'}^{Y,Z})
      -\hat g_{t',j}^{-t'}(H_{t'}^{Y,Z})\big\}\Big|\Big]\\
&\quad\leq\frac{C_W^2}{2K}
\sum_{\substack{1\leq t,t'\leq K:\\
                |t-t'|<l_K+r_K}}
\E\Big[
\varepsilon_t^2
\big\{g_{t,j}(H_t^{Y,Z})
      -\hat g_{t,j}^{-t}(H_t^{Y,Z})\big\}^2+
\varepsilon_{t'}^2
\big\{g_{t',j}(H_{t'}^{Y,Z})
      -\hat g_{t',j}^{-t'}(H_{t'}^{Y,Z})\big\}^2
\Big]\\
&\quad\leq\frac{C_W^2\{2(l_K+r_K)-1\}}K
\sum_{t=1}^K
\E\big[
\varepsilon_t^2
\big\{g_{t,j}(H_t^{Y,Z})
      -\hat g_{t,j}^{-t}(H_t^{Y,Z})\big\}^2
\big]\\
&\quad\leq
2C_W^2\{2(l_K+r_K)-1\}\E[B_g+\tilde S_g]\\
&\quad=o(1).
\end{align*}
For the separated pairs, symmetry and the common twice-deleted fit give
the decomposition
\begin{align}
&\frac1K\sum_{\substack{1\leq t,t'\leq K:\\
                      |t-t'|\geq l_K+r_K}}
W_tW_{t'}\varepsilon_t\varepsilon_{t'}
\big\{g_{t,j}(H_t^{Y,Z})
      -\hat g_{t,j}^{-t}(H_t^{Y,Z})\big\}\times
\big\{g_{t',j}(H_{t'}^{Y,Z})
      -\hat g_{t',j}^{-t'}(H_{t'}^{Y,Z})\big\}\notag\\
&\quad=\frac2K\sum_{\substack{1\leq t<t'\leq K:\\
                            t'-t\geq l_K+r_K}}
W_tW_{t'}\varepsilon_t\varepsilon_{t'}
\big\{g_{t,j}(H_t^{Y,Z})
      -\hat g_{t,j}^{-t,t'}(H_t^{Y,Z})\big\}\times
\big\{g_{t',j}(H_{t'}^{Y,Z})
      -\hat g_{t',j}^{-t,t'}(H_{t'}^{Y,Z})\big\}\notag\\
&\qquad{}+\frac2K\sum_{\substack{1\leq t,t'\leq K:\\
                               |t-t'|\geq l_K+r_K}}
W_tW_{t'}\varepsilon_t\varepsilon_{t'}
\big\{g_{t,j}(H_t^{Y,Z})
      -\hat g_{t,j}^{-t,t'}(H_t^{Y,Z})\big\}\times
\big\{\hat g_{t',j}^{-t,t'}(H_{t'}^{Y,Z})
      -\hat g_{t',j}^{-t'}(H_{t'}^{Y,Z})\big\}\notag\\
&\qquad{}+\frac1K\sum_{\substack{1\leq t,t'\leq K:\\
                               |t-t'|\geq l_K+r_K}}
W_tW_{t'}\varepsilon_t\varepsilon_{t'}
\big\{\hat g_{t,j}^{-t,t'}(H_t^{Y,Z})
      -\hat g_{t,j}^{-t}(H_t^{Y,Z})\big\}\times
\big\{\hat g_{t',j}^{-t,t'}(H_{t'}^{Y,Z})
      -\hat g_{t',j}^{-t'}(H_{t'}^{Y,Z})\big\}.
\label{eq:exposure-pair-expansion}
\end{align}

For the first sum, \(t<t'\). Both evaluation histories lie in
\(H_{t'}^{X,Y,Z}\), and both fits omit \(\Delta_{t'}\).
Thus
\begin{align*}
&\E\Big[
W_tW_{t'}\varepsilon_t\varepsilon_{t'}
\big\{g_{t,j}(H_t^{Y,Z})
      -\hat g_{t,j}^{-t,t'}(H_t^{Y,Z})\big\}
\times
\big\{g_{t',j}(H_{t'}^{Y,Z})
      -\hat g_{t',j}^{-t,t'}(H_{t'}^{Y,Z})\big\}
\given H_{t'}^{X,Y,Z},A_{t'+r_K}^{X,Y,Z}
\Big]\\
&\quad=
W_tW_{t'}\varepsilon_t
\big\{g_{t,j}(H_t^{Y,Z})
      -\hat g_{t,j}^{-t,t'}(H_t^{Y,Z})\big\}
\times
\big\{g_{t',j}(H_{t'}^{Y,Z})
      -\hat g_{t',j}^{-t,t'}(H_{t'}^{Y,Z})\big\}\times
\E[\varepsilon_{t'}\given
H_{t'}^{X,Y,Z},A_{t'+r_K}^{X,Y,Z}].
\end{align*}
Taking expectations, then applying Cauchy--Schwarz over the indices and
H\"older with exponents \(4,2,4\), yields
\begin{align}
&\frac2K\sum_{\substack{1\leq t<t'\leq K:\\
                      t'-t\geq l_K+r_K}}
\Big|\E\Big[
W_tW_{t'}\varepsilon_t\varepsilon_{t'}
\big\{g_{t,j}(H_t^{Y,Z})
      -\hat g_{t,j}^{-t,t'}(H_t^{Y,Z})\big\}\times
\big\{g_{t',j}(H_{t'}^{Y,Z})
      -\hat g_{t',j}^{-t,t'}(H_{t'}^{Y,Z})\big\}
\Big]\Big|\notag\\
&\quad\leq
\frac2K\sum_{\substack{1\leq t<t'\leq K:\\
                      t'-t\geq l_K+r_K}}
\E\Big[\Big|
W_tW_{t'}\varepsilon_t
\big\{g_{t,j}(H_t^{Y,Z})
      -\hat g_{t,j}^{-t,t'}(H_t^{Y,Z})\big\}\notag\\*
&\hspace{44mm}{}\times
\big\{g_{t',j}(H_{t'}^{Y,Z})
      -\hat g_{t',j}^{-t,t'}(H_{t'}^{Y,Z})\big\}
\E[\varepsilon_{t'}\given
H_{t'}^{X,Y,Z},A_{t'+r_K}^{X,Y,Z}]
\Big|\Big]\notag\\
&\quad\leq2C_W^2\E\Bigg[
\Big\{\frac1K\sum_{t=1}^K\varepsilon_t^2
\max_{1\leq t'\leq K}
\big(g_{t,j}(H_t^{Y,Z})
     -\hat g_{t,j}^{-t,t'}(H_t^{Y,Z})\big)^2
\Big\}^{1/2}\notag\\*
&\hspace{44mm}{}\times
\Big\{K\sum_{t=1}^K
\E[\varepsilon_t\given
H_t^{X,Y,Z},A_{t+r_K}^{X,Y,Z}]^2
\Big\}^{1/2}\notag\\*
&\hspace{44mm}{}\times
\Big\{\frac1K\sum_{t=1}^K
\max_{1\leq t'\leq K}
\big(g_{t,j}(H_t^{Y,Z})
     -\hat g_{t,j}^{-t,t'}(H_t^{Y,Z})\big)^2
\Big\}^{1/2}
\Bigg]\notag\\
&\quad\leq6C_W^2
\Big\{K\sum_{t=1}^K
\E\big[
\E[\varepsilon_t\given
H_t^{X,Y,Z},A_{t+r_K}^{X,Y,Z}]^2
\big]\Big\}^{1/2}\notag\\*
&\hspace{44mm}{}\times
\big\{\E[(B_g+\tilde S_g+\tilde D_g)^2]\big\}^{1/4}
\big\{\E[(A_g+S_g+D_g)^2]\big\}^{1/4}\notag\\
&\quad=o(1)\times O(1)\times O(1)\notag\\
&\quad=o(1).
\label{eq:exposure-separated-bound}
\end{align}
The last line uses \eqref{eq:deleted-prediction-envelope},
\eqref{eq:deleted-residual-envelope},
\eqref{eq:deleted-error-moments}
and Assumption~\ref{ass:mixing}(ii).

For the second sum in \eqref{eq:exposure-pair-expansion}, project
\(\varepsilon_{t'}\), the residual attached to the singly deleted fit.
Conditioning gives, for either ordering,
\begin{align*}
&\E\Big[
W_tW_{t'}\varepsilon_t\varepsilon_{t'}
\big\{g_{t,j}(H_t^{Y,Z})
      -\hat g_{t,j}^{-t,t'}(H_t^{Y,Z})\big\}
\times
\big\{\hat g_{t',j}^{-t,t'}(H_{t'}^{Y,Z})
      -\hat g_{t',j}^{-t'}(H_{t'}^{Y,Z})\big\}
\given H_{t'}^{X,Y,Z},A_{t'+r_K}^{X,Y,Z}
\Big]\\
&\quad=
W_tW_{t'}\varepsilon_t
\big\{g_{t,j}(H_t^{Y,Z})
      -\hat g_{t,j}^{-t,t'}(H_t^{Y,Z})\big\}
\times
\big\{\hat g_{t',j}^{-t,t'}(H_{t'}^{Y,Z})
      -\hat g_{t',j}^{-t'}(H_{t'}^{Y,Z})\big\}\times
\E[\varepsilon_{t'}\given
H_{t'}^{X,Y,Z},A_{t'+r_K}^{X,Y,Z}].
\end{align*}
Consequently,
\begin{align}
&\frac2K\sum_{\substack{1\leq t,t'\leq K:\\
                      |t-t'|\geq l_K+r_K}}
\Big|\E\Big[
W_tW_{t'}\varepsilon_t\varepsilon_{t'}
\big\{g_{t,j}(H_t^{Y,Z})
      -\hat g_{t,j}^{-t,t'}(H_t^{Y,Z})\big\}\times
\big\{\hat g_{t',j}^{-t,t'}(H_{t'}^{Y,Z})
      -\hat g_{t',j}^{-t'}(H_{t'}^{Y,Z})\big\}
\Big]\Big|\notag\\
&\quad\leq
\frac2K\sum_{\substack{1\leq t,t'\leq K:\\
                      |t-t'|\geq l_K+r_K}}
\E\Big[\Big|
W_tW_{t'}\varepsilon_t
\big\{g_{t,j}(H_t^{Y,Z})
      -\hat g_{t,j}^{-t,t'}(H_t^{Y,Z})\big\}\notag\\*
&\hspace{34mm}{}\times
\big\{\hat g_{t',j}^{-t,t'}(H_{t'}^{Y,Z})
      -\hat g_{t',j}^{-t'}(H_{t'}^{Y,Z})\big\}
\E[\varepsilon_{t'}\given
H_{t'}^{X,Y,Z},A_{t'+r_K}^{X,Y,Z}]
\Big|\Big]\notag\\
&\quad\leq2C_W^2\E\Bigg[
\Big\{\frac1K\sum_{t=1}^K\varepsilon_t^2
\max_{1\leq t'\leq K}
\big(g_{t,j}(H_t^{Y,Z})
     -\hat g_{t,j}^{-t,t'}(H_t^{Y,Z})\big)^2
\Big\}^{1/2}\notag\\*
&\hspace{34mm}{}\times
\Big\{K\sum_{t=1}^K
\E[\varepsilon_t\given
H_t^{X,Y,Z},A_{t+r_K}^{X,Y,Z}]^2
\Big\}^{1/2}D_g^{1/2}
\Bigg]\notag\\
&\quad\leq2\sqrt3\,C_W^2
\Big\{K\sum_{t=1}^K
\E\big[
\E[\varepsilon_t\given
H_t^{X,Y,Z},A_{t+r_K}^{X,Y,Z}]^2
\big]\Big\}^{1/2}\notag\\*
&\hspace{34mm}{}\times
\big\{\E[(B_g+\tilde S_g+\tilde D_g)^2]\big\}^{1/4}
\{\E[D_g^2]\}^{1/4}\notag\\
&\quad=o(1)\times O(1)\times O(1)\notag\\
&\quad=o(1).
\label{eq:exposure-mixed-bound}
\end{align}

For the third sum, taking expectations and applying Cauchy--Schwarz gives
\begin{align*}
&\frac1K\sum_{\substack{1\leq t,t'\leq K:\\
                      |t-t'|\geq l_K+r_K}}
\E\Big[\Big|
W_tW_{t'}\varepsilon_t\varepsilon_{t'}
\big\{\hat g_{t,j}^{-t,t'}(H_t^{Y,Z})
      -\hat g_{t,j}^{-t}(H_t^{Y,Z})\big\}\times
\big\{\hat g_{t',j}^{-t,t'}(H_{t'}^{Y,Z})
      -\hat g_{t',j}^{-t'}(H_{t'}^{Y,Z})\big\}
\Big|\Big]\\
&\quad\leq\frac{C_W^2}{K}
\E\Big[
\Big(\sum_{t=1}^K|\varepsilon_t|
\max_{1\leq t'\leq K}
\big|\hat g_{t,j}^{-t,t'}(H_t^{Y,Z})
     -\hat g_{t,j}^{-t}(H_t^{Y,Z})\big|
\Big)^2
\Big]\\
&\quad\leq C_W^2
\E\Big[
\sum_{t=1}^K\varepsilon_t^2
\max_{1\leq t'\leq K}
\big|\hat g_{t,j}^{-t,t'}(H_t^{Y,Z})
     -\hat g_{t,j}^{-t}(H_t^{Y,Z})\big|^2
\Big]\\
&\quad\leq C_W^2K\E[\tilde D_g]\\
&\quad=o(1).
\end{align*}

Taking expectations in \eqref{eq:exposure-square-expansion},
and combining the nearby-pair bound with
\eqref{eq:exposure-pair-expansion},
\eqref{eq:exposure-separated-bound},
\eqref{eq:exposure-mixed-bound} and the preceding bound, gives
\begin{align*}
0
&\leq\E\Big[
\Big(\frac1{\sqrt K}\sum_{t=1}^K W_t\varepsilon_t
\big\{g_{t,j}(H_t^{Y,Z})
      -\hat g_{t,j}^{-t}(H_t^{Y,Z})\big\}
\Big)^2
\Big]\\
&\leq
2C_W^2\{2(l_K+r_K)-1\}\E[B_g+\tilde S_g]
+C_W^2K\E[\tilde D_g]+o(1)\\
&=o(1).
\end{align*}
Markov's inequality and
\eqref{eq:exposure-first-deletion-bound} therefore imply
\begin{equation}
\label{eq:termII-converge}
\begin{aligned}
\frac1{\sqrt K}\sum_{t=1}^K W_t\varepsilon_t
\big\{g_{t,j}(H_t^{Y,Z})
      -\hat g_{t,j}(H_t^{Y,Z})\big\}
=o_p(1).
\end{aligned}
\end{equation}
Summing over \(j=1,\ldots,d\) proves the result.
\end{proof}

\begin{lemma}[Outcome-regression remainder]
\label{lem:outcome-remainder-moments}
Under \(H_0\), suppose that Assumptions~\ref{ass:mixing},
\ref{ass:stability}, \ref{ass:errors} and
\ref{ass:population_weights} hold,
then \(\mathrm{III}=o_p(1)\). 
\end{lemma}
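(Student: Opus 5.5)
The plan is to mirror the proof of Lemma~\ref{lem:exposure-remainder-moments}, interchanging the roles of the two regressions. The estimation error is now in \(f\), and it is multiplied by the source residual \(\boldsymbol\xi_t\) rather than by \(\varepsilon_t\). Accordingly, Assumption~\ref{ass:mixing}(iii) replaces Assumption~\ref{ass:mixing}(ii), and the quantities \(B_f,S_f,D_f,\tilde S_f,\tilde D_f\) replace their \(g\)-counterparts.

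Fix a coordinate \(j\) and first split
\begin{equation*}
\frac1{\sqrt K}\sum_{t=1}^K W_t\xi_{t,j}\{f_t-\hat f_t\}
=\frac1{\sqrt K}\sum_{t=1}^K W_t\xi_{t,j}\{f_t-\hat f_t^{-t}\}
+\frac1{\sqrt K}\sum_{t=1}^K W_t\xi_{t,j}\{\hat f_t^{-t}-\hat f_t\},
\end{equation*}
with all fits evaluated at \(H_t^{Y,Z}\). The second piece is bounded by \(C_W(K^{-1}\sum_t\|\boldsymbol\xi_t\|_2^2)^{1/2}(KS_f)^{1/2}=O_p(1)\,o_p(1)\). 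This uses \(\E\|\boldsymbol\xi_t\|_2^4<c_2\) from Assumption~\ref{ass:errors}(i) together with \(S_f=o_p(K^{-1})\) from Assumption~\ref{ass:stability}.

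For the first piece we bound its second moment and expand the square into a double sum over \((t,t')\).
\begin{itemize}
\item \emph{Nearby pairs} (\(|t-t'|<l_K+r_K\)). Use \(2|ab|\le a^2+b^2\), as in Lemma~\ref{lem:exposure-remainder-moments}, to obtain the bound \(2C_W^2\{2(l_K+r_K)-1\}\E[B_f+\tilde S_f]=o(1)\). This follows from Assumptions~\ref{ass:errors}(iii) and \ref{ass:stability}.
\item \emph{Separated pairs}. Insert the common twice-deleted fit \(\hat f^{-t,t'}\), giving the same three-term decomposition as \eqref{eq:exposure-pair-expansion}.
\begin{itemize}
\item The third term (product of two second-deletion changes) is bounded by Cauchy--Schwarz by \(C_W^2K\E[\tilde D_f]=o(1)\).
\item The first and second terms require a projection step. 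Order the pair so that \(t'\) is the index whose residual \(\xi_{t',j}\) we project. We condition on \(\mathcal G:=\{H_{t'}^{Y,Z},\phi(\bar X_t),A_{t'+r_K}^{Y,Z}\}\).
\end{itemize}
\end{itemize}

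The point of this conditioning is that everything except \(\xi_{t',j}\) is \(\mathcal G\)-measurable. The weights are \(H^{Y,Z}\)-measurable. The outcome fits \(\hat f^{-t,t'}\) and \(\hat f^{-t'}\) use only \(Y,Z\) observations, and every training row touching \(\Delta_{t'}=\{t'+1,\ldots,t'+r_K\}\) has been deleted. The other factor \(\xi_{t,j}=\phi_j(\bar X_t)-g_{t,j}(H_t^{Y,Z})\) is measurable with respect to \(\phi(\bar X_t)\) and \(H_{t'}^{Y,Z}\).

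Conditioning therefore replaces \(\xi_{t',j}\) by \(\E[\xi_{t',j}\given H_{t'}^{Y,Z},\phi(\bar X_t),A_{t'+r_K}^{Y,Z}]\). Its square is dominated by the maximum over \(|s-t'|\ge l_K+r_K\) appearing in \eqref{eq:xi-mixing}. Because the index \(s=t\) may lie either before or after \(t'\), this maximum over both directions lets us treat both orderings of the pair at once. This is the essential difference from Lemma~\ref{lem:exposure-remainder-moments}, where the projected residual was \(\varepsilon_{t'}\) with respect to \(H^{X,Y,Z}\).

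Then apply Cauchy--Schwarz over indices and H\"older with exponents \(4,2,4\), exactly as in \eqref{eq:exposure-separated-bound} and \eqref{eq:exposure-mixed-bound}. Combined with the envelopes \eqref{eq:deleted-prediction-envelope}, \eqref{eq:deleted-residual-envelope} and the moment bounds \eqref{eq:deleted-error-moments}, this gives
\begin{equation*}
\Big\{K\sum_{t}\E\big[\max_{s}\|\E[\boldsymbol\xi_t\given\cdots]\|_2^2\big]\Big\}^{1/2}\cdot O(1)=o(1).
\end{equation*}
Markov's inequality then yields \(o_p(1)\) for each coordinate, and summing over \(j\le d\) completes the argument.

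\textbf{Main obstacle.} The step I expect to be delicate is the measurability claim for the deleted outcome fits. Deleting \(\Delta_{t'}\) removes \(Y_{t'+1},\ldots,Y_{t'+r_K}\), but the fitted value \(\hat f^{-t'}_{t'}\) is evaluated at the full history \(H_{t'}^{Y,Z}\). Rows with response index beyond \(t'+r_K\) also use predictors reaching back into the deleted block. We must check that the deletion convention removes these rows, so that the fits depend only on \(H_{t'}^{Y,Z}\), \(A_{t'+r_K}^{Y,Z}\) and independent randomness.

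If the evaluation at early histories causes trouble, I would invoke Assumption~\ref{ass:mixing}(i). Since \(\phi(\bar X_t)\) is independent of \(H^{Y,Z}_{t-l_K}\) given the local block, the conditional mean of \(\boldsymbol\xi_{t'}\) given the enlarged \(\sigma\)-field is governed by the local window. This is why the separation \(l_K+r_K\), rather than \(r_K\) alone, is the threshold used in \eqref{eq:xi-mixing}.
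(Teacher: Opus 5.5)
Your proposal is correct and follows the paper's proof essentially step for step. It uses the same split through the once-deleted fit $\hat f^{-t}$, the same nearby-pair bound $2C_W^2\{2(l_K+r_K)-1\}\E[B_f+\tilde S_f]$, and the same three-term expansion of the separated pairs through $\hat f^{-t,t'}$. It then projects $\boldsymbol\xi_{t',j}$ onto $\{H_{t'}^{Y,Z},\phi(\bar X_t),A_{t'+r_K}^{Y,Z}\}$, controlled by Assumption~\ref{ass:mixing}(iii), and bounds the double-deletion term by $C_W^2K\E[\tilde D_f]$, exactly as the paper does.

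One correction to your closing remark: the delicate case is not an early history but the ordering $t>t'$ in the mixed sum. There $W_t$, $g_t(H_t^{Y,Z})$, $f_t(H_t^{Y,Z})$ and $\hat f_t^{-t,t'}(H_t^{Y,Z})$ all depend on the deleted block $L^{Y,Z}_{(t'+1):(t'+r_K)}$. Your claim that $\xi_{t,j}$ is measurable with respect to $\phi(\bar X_t)$ and $H_{t'}^{Y,Z}$ therefore fails for that ordering. What is needed instead is Assumption~\ref{ass:mixing}(i), together with analogous locality of the weights and fitted evaluations, so that these factors lie in $A^{Y,Z}_{t'+r_K}$ when $t\ge t'+l_K+r_K$. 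The paper's ``for either ordering'' step leaves this same point implicit.
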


\begin{proof}
First,
\begin{align*}
&\frac1{\sqrt K}\sum_{t=1}^K W_t\boldsymbol\xi_{t,j}
\big\{f_t(H_t^{Y,Z})-\hat f_t(H_t^{Y,Z})\big\}\\
&\quad=\frac1{\sqrt K}\sum_{t=1}^K W_t\boldsymbol\xi_{t,j}
\big\{f_t(H_t^{Y,Z})-\hat f_t^{-t}(H_t^{Y,Z})\big\}\\
&\qquad{}+\frac1{\sqrt K}\sum_{t=1}^K W_t\boldsymbol\xi_{t,j}
\big\{\hat f_t^{-t}(H_t^{Y,Z})-\hat f_t(H_t^{Y,Z})\big\}.
\end{align*}
Under the bounded weight assumption, Cauchy--Schwarz and
the stability assumption give
\begin{equation}
\label{eq:outcome-first-deletion-bound}
\begin{aligned}
&\Big|\frac1{\sqrt K}\sum_{t=1}^K W_t\boldsymbol\xi_{t,j}
\big\{\hat f_t^{-t}(H_t^{Y,Z})-\hat f_t(H_t^{Y,Z})\big\}\Big|\\
&\quad\leq C_W
\Big(\frac1K\sum_{t=1}^K\|\boldsymbol\xi_t\|_2^2\Big)^{1/2}
(KS_f)^{1/2}\\
&\quad=o_p(1),
\end{aligned}
\end{equation}
where the empirical second moment of \(\boldsymbol\xi_t\) is
\(O_p(1)\) by Assumption~\ref{ass:errors}(i).

Squaring the remaining term gives
\begin{equation}
\label{eq:outcome-square-expansion}
\begin{aligned}
&\Big(\frac1{\sqrt K}\sum_{t=1}^K W_t\boldsymbol\xi_{t,j}
\big\{f_t(H_t^{Y,Z})-\hat f_t^{-t}(H_t^{Y,Z})\big\}\Big)^2\\
&\quad=\frac1K\sum_{t=1}^K\sum_{t'=1}^K
W_tW_{t'}\boldsymbol\xi_{t,j}\boldsymbol\xi_{t',j}
\big\{f_t(H_t^{Y,Z})-\hat f_t^{-t}(H_t^{Y,Z})\big\}\times
\big\{f_{t'}(H_{t'}^{Y,Z})
-\hat f_{t'}^{-t'}(H_{t'}^{Y,Z})\big\}.
\end{aligned}
\end{equation}
For \(|t-t'|<l_K+r_K\),
\begin{align*}
&\frac1K\sum_{\substack{1\leq t,t'\leq K:\\
|t-t'|<l_K+r_K}}
\E\Big[
\Big|W_tW_{t'}\boldsymbol\xi_{t,j}\boldsymbol\xi_{t',j}
\big\{f_t(H_t^{Y,Z})-\hat f_t^{-t}(H_t^{Y,Z})\big\}\times
\big\{f_{t'}(H_{t'}^{Y,Z})
-\hat f_{t'}^{-t'}(H_{t'}^{Y,Z})\big\}\Big|
\Big]\\
&\quad\leq\frac{C_W^2}{2K}
\sum_{\substack{1\leq t,t'\leq K:\\
|t-t'|<l_K+r_K}}
\E\Big[
\boldsymbol\xi_{t,j}^2
\big\{f_t(H_t^{Y,Z})-\hat f_t^{-t}(H_t^{Y,Z})\big\}^2+
\boldsymbol\xi_{t',j}^2
\big\{f_{t'}(H_{t'}^{Y,Z})
-\hat f_{t'}^{-t'}(H_{t'}^{Y,Z})\big\}^2
\Big]\\
&\quad\leq\frac{C_W^2\{2(l_K+r_K)-1\}}K
\E\Big[
\sum_{t=1}^K\boldsymbol\xi_{t,j}^2
\big\{f_t(H_t^{Y,Z})-\hat f_t^{-t}(H_t^{Y,Z})\big\}^2
\Big]\\
&\quad\leq\frac{2C_W^2\{2(l_K+r_K)-1\}}K
\E\Big[
\sum_{t=1}^K\boldsymbol\xi_{t,j}^2
\big\{f_t(H_t^{Y,Z})-\hat f_t(H_t^{Y,Z})\big\}^2+
\sum_{t=1}^K\boldsymbol\xi_{t,j}^2
\big\{\hat f_t(H_t^{Y,Z})-\hat f_t^{-t}(H_t^{Y,Z})\big\}^2
\Big]\\
&\quad\leq
2C_W^2\{2(l_K+r_K)-1\}\E[B_f+\tilde S_f]\\
&\quad=o(1).
\end{align*}
For the separated pairs, symmetry and the common twice-deleted fit
give the exact decomposition
\begin{align}
&\frac1K\sum_{\substack{1\leq t,t'\leq K:\\
|t-t'|\geq l_K+r_K}}
W_tW_{t'}\boldsymbol\xi_{t,j}\boldsymbol\xi_{t',j}
\big\{f_t(H_t^{Y,Z})-\hat f_t^{-t}(H_t^{Y,Z})\big\}
\times
\big\{f_{t'}(H_{t'}^{Y,Z})
-\hat f_{t'}^{-t'}(H_{t'}^{Y,Z})\big\}
\notag\\
&\quad=\frac2K\sum_{\substack{1\leq t<t'\leq K:\\
t'-t\geq l_K+r_K}}
W_tW_{t'}\boldsymbol\xi_{t,j}\boldsymbol\xi_{t',j}
\big\{f_t(H_t^{Y,Z})-\hat f_t^{-t,t'}(H_t^{Y,Z})\big\}
\times
\big\{f_{t'}(H_{t'}^{Y,Z})
-\hat f_{t'}^{-t,t'}(H_{t'}^{Y,Z})\big\}
\notag\\
&\qquad{}+\frac2K\sum_{\substack{1\leq t,t'\leq K:\\
|t-t'|\geq l_K+r_K}}
W_tW_{t'}\boldsymbol\xi_{t,j}\boldsymbol\xi_{t',j}
\big\{f_t(H_t^{Y,Z})-\hat f_t^{-t,t'}(H_t^{Y,Z})\big\}
\times
\big\{\hat f_{t'}^{-t,t'}(H_{t'}^{Y,Z})
-\hat f_{t'}^{-t'}(H_{t'}^{Y,Z})\big\}
\notag\\
&\qquad{}+\frac1K\sum_{\substack{1\leq t,t'\leq K:\\
|t-t'|\geq l_K+r_K}}
W_tW_{t'}\boldsymbol\xi_{t,j}\boldsymbol\xi_{t',j}
\big\{\hat f_t^{-t,t'}(H_t^{Y,Z})
-\hat f_t^{-t}(H_t^{Y,Z})\big\}
\times
\big\{\hat f_{t'}^{-t,t'}(H_{t'}^{Y,Z})
-\hat f_{t'}^{-t'}(H_{t'}^{Y,Z})\big\}.
\label{eq:outcome-pair-expansion}
\end{align}

For the first sum, \(t<t'\). The outcome fits omit
\(\Delta_{t'}\), and the earlier exposure residual is determined
by \(\phi(\bar X_t)\) and \(H_t^{Y,Z}\).
When fitting randomization is present, all conditioning information
sets below also include this independent randomization; the residual
conditional expectations are unchanged.
Thus
\begin{align*}
&\E\Big[
W_tW_{t'}\boldsymbol\xi_{t,j}\boldsymbol\xi_{t',j}
\big\{f_t(H_t^{Y,Z})-\hat f_t^{-t,t'}(H_t^{Y,Z})\big\}\\
&\hspace{28mm}{}\times
\big\{f_{t'}(H_{t'}^{Y,Z})
-\hat f_{t'}^{-t,t'}(H_{t'}^{Y,Z})\big\}
\given H_{t'}^{Y,Z},\phi(\bar X_t),A_{t'+r_K}^{Y,Z}
\Big]\\
&\quad=
W_tW_{t'}\boldsymbol\xi_{t,j}
\big\{f_t(H_t^{Y,Z})-\hat f_t^{-t,t'}(H_t^{Y,Z})\big\}\\
&\hspace{28mm}{}\times
\big\{f_{t'}(H_{t'}^{Y,Z})
-\hat f_{t'}^{-t,t'}(H_{t'}^{Y,Z})\big\}\times
\E[\boldsymbol\xi_{t',j}\given
H_{t'}^{Y,Z},\phi(\bar X_t),A_{t'+r_K}^{Y,Z}].
\end{align*}
Taking expectations, then applying Cauchy--Schwarz over the indices
and H\"older with exponents \(4,2,4\), yields
\begin{align}
&\frac2K\sum_{\substack{1\leq t<t'\leq K:\\
t'-t\geq l_K+r_K}}
\bigg|\E\Big[
W_tW_{t'}\boldsymbol\xi_{t,j}\boldsymbol\xi_{t',j}
\big\{f_t(H_t^{Y,Z})-\hat f_t^{-t,t'}(H_t^{Y,Z})\big\}
\times
\big\{f_{t'}(H_{t'}^{Y,Z})
-\hat f_{t'}^{-t,t'}(H_{t'}^{Y,Z})\big\}
\Big]\bigg|
\notag\\
&\quad\leq\frac2K\sum_{\substack{1\leq t<t'\leq K:\\
t'-t\geq l_K+r_K}}
\E\Big[
\Big|W_tW_{t'}\boldsymbol\xi_{t,j}
\big\{f_t(H_t^{Y,Z})-\hat f_t^{-t,t'}(H_t^{Y,Z})\big\}
\notag\\*
&\hspace{38mm}{}\times
\big\{f_{t'}(H_{t'}^{Y,Z})
-\hat f_{t'}^{-t,t'}(H_{t'}^{Y,Z})\big\}
\times
\E[\boldsymbol\xi_{t',j}\given
H_{t'}^{Y,Z},\phi(\bar X_t),A_{t'+r_K}^{Y,Z}]
\Big|
\Big]
\notag\\
&\quad\leq2C_W^2\E\Bigg[
\Big\{\frac1K\sum_{t=1}^K\boldsymbol\xi_{t,j}^2
\max_{1\leq t'\leq K}
\big(f_t(H_t^{Y,Z})-\hat f_t^{-t,t'}(H_t^{Y,Z})\big)^2
\Big\}^{1/2}
\notag\\*
&\hspace{48mm}{}\times
\Bigg\{K\sum_{t=1}^K
\max_{\substack{1\leq s\leq K:\\|s-t|\geq l_K+r_K}}
\big\|\E[\boldsymbol\xi_t\given
H_t^{Y,Z},\phi(\bar X_s),A_{t+r_K}^{Y,Z}]\big\|_2^2
\Bigg\}^{1/2}
\notag\\*
&\hspace{48mm}{}\times
\Big\{\frac1K\sum_{t=1}^K
\max_{1\leq t'\leq K}
\big(f_t(H_t^{Y,Z})-\hat f_t^{-t,t'}(H_t^{Y,Z})\big)^2
\Big\}^{1/2}
\Bigg]
\notag\\
&\quad\leq6C_W^2
\Bigg\{K\sum_{t=1}^K
\E\Big[
\max_{\substack{1\leq s\leq K:\\|s-t|\geq l_K+r_K}}
\big\|\E[\boldsymbol\xi_t\given
H_t^{Y,Z},\phi(\bar X_s),A_{t+r_K}^{Y,Z}]\big\|_2^2
\Big]\Bigg\}^{1/2}
\notag\\*
&\hspace{48mm}{}\times
\big\{\E[(B_f+\tilde S_f+\tilde D_f)^2]\big\}^{1/4}
\big\{\E[(A_f+S_f+D_f)^2]\big\}^{1/4}
\notag\\
&\quad=o(1).
\label{eq:outcome-separated-bound}
\end{align}
The last line uses \eqref{eq:deleted-prediction-envelope},
\eqref{eq:deleted-residual-envelope}, \eqref{eq:deleted-error-moments}
and Assumption~\ref{ass:mixing}(iii).

For the second sum in \eqref{eq:outcome-pair-expansion}, project
\(\boldsymbol\xi_{t',j}\), the residual attached to the singly
deleted fit. Conditioning gives, for either ordering,
\begin{align*}
&\E\Big[
W_tW_{t'}\boldsymbol\xi_{t,j}\boldsymbol\xi_{t',j}
\big\{f_t(H_t^{Y,Z})-\hat f_t^{-t,t'}(H_t^{Y,Z})\big\}\\
&\hspace{28mm}{}\times
\big\{\hat f_{t'}^{-t,t'}(H_{t'}^{Y,Z})
-\hat f_{t'}^{-t'}(H_{t'}^{Y,Z})\big\}
\given H_{t'}^{Y,Z},\phi(\bar X_t),A_{t'+r_K}^{Y,Z}
\Big]\\
&\quad=
W_tW_{t'}\boldsymbol\xi_{t,j}
\big\{f_t(H_t^{Y,Z})-\hat f_t^{-t,t'}(H_t^{Y,Z})\big\}\\
&\hspace{28mm}{}\times
\big\{\hat f_{t'}^{-t,t'}(H_{t'}^{Y,Z})
-\hat f_{t'}^{-t'}(H_{t'}^{Y,Z})\big\}\times
\E[\boldsymbol\xi_{t',j}\given
H_{t'}^{Y,Z},\phi(\bar X_t),A_{t'+r_K}^{Y,Z}].
\end{align*}
Consequently,
\begin{align}
&\frac2K\sum_{\substack{1\leq t,t'\leq K:\\
|t-t'|\geq l_K+r_K}}
\bigg|\E\Big[
W_tW_{t'}\boldsymbol\xi_{t,j}\boldsymbol\xi_{t',j}
\big\{f_t(H_t^{Y,Z})-\hat f_t^{-t,t'}(H_t^{Y,Z})\big\}
\times
\big\{\hat f_{t'}^{-t,t'}(H_{t'}^{Y,Z})
-\hat f_{t'}^{-t'}(H_{t'}^{Y,Z})\big\}
\Big]\bigg|
\notag\\
&\quad\leq\frac2K\sum_{\substack{1\leq t,t'\leq K:\\
|t-t'|\geq l_K+r_K}}
\E\Big[
\Big|W_tW_{t'}\boldsymbol\xi_{t,j}
\big\{f_t(H_t^{Y,Z})-\hat f_t^{-t,t'}(H_t^{Y,Z})\big\}
\notag\\*
&\hspace{28mm}{}\times
\big\{\hat f_{t'}^{-t,t'}(H_{t'}^{Y,Z})
-\hat f_{t'}^{-t'}(H_{t'}^{Y,Z})\big\}
\times
\E[\boldsymbol\xi_{t',j}\given
H_{t'}^{Y,Z},\phi(\bar X_t),A_{t'+r_K}^{Y,Z}]
\Big|
\Big]
\notag\\
&\quad\leq2C_W^2\E\Bigg[
\Big\{\frac1K\sum_{t=1}^K\boldsymbol\xi_{t,j}^2
\max_{1\leq t'\leq K}
\big(f_t(H_t^{Y,Z})-\hat f_t^{-t,t'}(H_t^{Y,Z})\big)^2
\Big\}^{1/2}
\notag\\*
&\hspace{24mm}{}\times
\Bigg\{K\sum_{t=1}^K
\max_{\substack{1\leq s\leq K:\\|s-t|\geq l_K+r_K}}
\big\|\E[\boldsymbol\xi_t\given
H_t^{Y,Z},\phi(\bar X_s),A_{t+r_K}^{Y,Z}]\big\|_2^2
\Bigg\}^{1/2}D_f^{1/2}
\Bigg]
\notag\\
&\quad\leq2\sqrt3\,C_W^2
\Bigg\{K\sum_{t=1}^K
\E\Big[
\max_{\substack{1\leq s\leq K:\\|s-t|\geq l_K+r_K}}
\big\|\E[\boldsymbol\xi_t\given
H_t^{Y,Z},\phi(\bar X_s),A_{t+r_K}^{Y,Z}]\big\|_2^2
\Big]\Bigg\}^{1/2}
\notag\\*
&\hspace{24mm}{}\times
\big\{\E[(B_f+\tilde S_f+\tilde D_f)^2]\big\}^{1/4}
\{\E[D_f^2]\}^{1/4}
\notag\\
&\quad=o(1).
\label{eq:outcome-mixed-bound}
\end{align}
Thus the sums of the absolute expectations of the first two
pairwise contributions are \(o(1)\).

For the third sum, Cauchy--Schwarz directly gives
\begin{align*}
&\frac1K\sum_{\substack{1\leq t,t'\leq K:\\
|t-t'|\geq l_K+r_K}}
\E\Big[
\big|W_tW_{t'}\boldsymbol\xi_{t,j}\boldsymbol\xi_{t',j}
\big\{\hat f_t^{-t,t'}(H_t^{Y,Z})
-\hat f_t^{-t}(H_t^{Y,Z})\big\}\times
\big\{\hat f_{t'}^{-t,t'}(H_{t'}^{Y,Z})
-\hat f_{t'}^{-t'}(H_{t'}^{Y,Z})\big\}
\big|
\Big]\\
&\quad\leq\frac{C_W^2}{K}
\E\Big[
\Big(\sum_{t=1}^K|\boldsymbol\xi_{t,j}|
\max_{1\leq t'\leq K}
\big|\hat f_t^{-t,t'}(H_t^{Y,Z})
-\hat f_t^{-t}(H_t^{Y,Z})\big|\Big)^2
\Big]\\
&\quad\leq C_W^2
\E\Big[
\sum_{t=1}^K\boldsymbol\xi_{t,j}^2
\max_{1\leq t'\leq K}
\big|\hat f_t^{-t,t'}(H_t^{Y,Z})
-\hat f_t^{-t}(H_t^{Y,Z})\big|^2
\Big]\\
&\quad\leq C_W^2K\E[\tilde D_f]\\
&\quad=o(1).
\end{align*}
Taking expectations in \eqref{eq:outcome-square-expansion}
and combining these bounds gives
\begin{align*}
0
&\leq
\E\Bigg[
\Big(\frac1{\sqrt K}\sum_{t=1}^K
W_t\boldsymbol\xi_{t,j}
\big\{f_t(H_t^{Y,Z})-\hat f_t^{-t}(H_t^{Y,Z})\big\}
\Big)^2
\Bigg]\\
&\leq
2C_W^2\{2(l_K+r_K)-1\}\E[B_f+\tilde S_f]
+C_W^2K\E[\tilde D_f]+o(1)\\
&=o(1).
\end{align*}
Markov's inequality and \eqref{eq:outcome-first-deletion-bound}
therefore imply
\begin{equation}
\label{eq:termIII-converge}
\frac1{\sqrt K}\sum_{t=1}^K W_t\boldsymbol\xi_{t,j}
\big\{f_t(H_t^{Y,Z})-\hat f_t(H_t^{Y,Z})\big\}
=o_p(1).
\end{equation}
Since \(d\) is fixed, applying this argument to
\(j=1,\ldots,d\) proves the result.
\end{proof}

\begin{lemma}[Product of nuisance errors]
Under Assumptions~\ref{ass:errors} and \ref{ass:population_weights},
term~\(\mathrm{IV}\) is \(o_p(1)\).
\end{lemma}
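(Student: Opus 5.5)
The plan is to bound Term~\(\mathrm{IV}\) deterministically, coordinate by coordinate, using only the boundedness of the weights and the Cauchy--Schwarz inequality over the time index. No dependence structure or stability argument is needed, because the product condition in Assumption~\ref{ass:errors}(ii) already delivers the \(\sqrt K\) rate.

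Fix \(j\in\{1,\ldots,d\}\). By Assumption~\ref{ass:population_weights}(i), \(0<W_t\le C_W\) almost surely. Applying Cauchy--Schwarz to the sum over \(t\) gives
\begin{align*}
\Big|\frac{1}{\sqrt K}\sum_{t=1}^K W_t\big\{f_t(H_t^{Y,Z})-\hat f_t(H_t^{Y,Z})\big\}\big\{g_{t,j}(H_t^{Y,Z})-\hat g_{t,j}(H_t^{Y,Z})\big\}\Big|
&\le C_W\sqrt K\Big(\frac1K\sum_{t=1}^K|f_t-\hat f_t|^2\Big)^{1/2}\Big(\frac1K\sum_{t=1}^K|g_{t,j}-\hat g_{t,j}|^2\Big)^{1/2}\\
&\le C_W\,(K A_fA_g)^{1/2},
\end{align*}
where arguments \(H_t^{Y,Z}\) are suppressed. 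The last step uses \(|g_{t,j}-\hat g_{t,j}|\le\|g_t-\hat g_t\|_2\).

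Assumption~\ref{ass:errors}(ii) gives \(A_fA_g=o_p(K^{-1})\), so \(KA_fA_g=o_p(1)\). The continuous mapping theorem then shows that the \(j\)th coordinate of \(\mathrm{IV}\) is \(o_p(1)\). Since \(d\) is fixed, taking a union over the \(d\) coordinates yields \(\mathrm{IV}=o_p(1)\).

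There is no substantive obstacle. The only point to check is that Term~\(\mathrm{IV}\) involves the population weights \(W_t\) rather than \(\hat W_t\), so the uniform bound \(C_W\) applies directly. The analogous Term~\(\mathrm{VIII}\) is handled the same way, using the boundedness of \(\hat W_t\) in Assumption~\ref{ass:estimated_weights}(i), which gives \(|\hat W_t-W_t|\le 2C_W\).
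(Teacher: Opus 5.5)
Your proposal is correct and follows the paper's proof: bounded weights plus Cauchy--Schwarz over $t$ give the bound $C_W\sqrt{KA_fA_g}$, which is $o_p(1)$ by the product-rate condition in Assumption~\ref{ass:errors}(ii). The only difference is that the paper bounds the Euclidean norm of the whole vector at once rather than working coordinatewise, which is immaterial.
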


\begin{proof}
The triangle and Cauchy--Schwarz inequalities give
\begin{equation}
\label{eq:termIV-product-bound}
\begin{aligned}
&\Big\|\frac1{\sqrt K}\sum_{t=1}^K W_t
\{f_t(H_t^{Y,Z})-\hat f_t(H_t^{Y,Z})\}
\{g_t(H_t^{Y,Z})-\hat g_t(H_t^{Y,Z})\}\Big\|_2\\
&\quad\leq\frac{C_W}{\sqrt K}\sum_{t=1}^K
|f_t(H_t^{Y,Z})-\hat f_t(H_t^{Y,Z})|
\|g_t(H_t^{Y,Z})-\hat g_t(H_t^{Y,Z})\|_2\\
&\quad\leq\frac{C_W}{\sqrt K}
\Big\{\sum_{t=1}^K|f_t(H_t^{Y,Z})-\hat f_t(H_t^{Y,Z})|^2\Big\}^{1/2}
\Big\{\sum_{t=1}^K\|g_t(H_t^{Y,Z})-\hat g_t(H_t^{Y,Z})\|_2^2\Big\}^{1/2}\\
&\quad=C_W\sqrt{K A_fA_g}\\
&\quad =o_p(1),
\end{aligned}
\end{equation}
by \eqref{eq:Kmsemse}.
\end{proof}

{\textbf{Estimated-weight remainders}}

\begin{lemma}[Estimated-weight remainder]
\label{lem:estimated-weight-remainder}
Under \(H_0\), suppose that Assumptions~\ref{ass:mixing}(ii), (iv),
\ref{ass:errors}(i), \ref{ass:population_weights}(i) and
\ref{ass:estimated_weights} hold, 
then \(\mathrm V=o_p(1)\).
\end{lemma}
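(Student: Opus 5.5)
Fix a coordinate $j$ and mirror the proof of Lemma~\ref{lem:exposure-remainder-moments}. The roles change as follows: $\hat W_t-W_t$ replaces $g_{t,j}-\hat g_{t,j}$, and the oracle product $\varepsilon_t\boldsymbol\xi_{t,j}$ replaces $W_t\varepsilon_t$. The first step splits $\hat W_t-W_t=(\hat W_t^{-t}-W_t)+(\hat W_t-\hat W_t^{-t})$. The second piece contributes $K^{-1/2}\sum_t(\hat W_t-\hat W_t^{-t})\varepsilon_t\boldsymbol\xi_{t,j}$. By Cauchy--Schwarz this is at most $(K^{-1}\sum_t\varepsilon_t^2\|\boldsymbol\xi_t\|_2^2)^{1/2}(KS_W)^{1/2}$. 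The first factor is $O_p(1)$ by Assumption~\ref{ass:errors}(i), via conditioning on $H_t^{X,Y,Z}$ and the fourth moment of $\boldsymbol\xi_t$. The second factor is $o_p(1)$ by $S_W=o_p(K^{-1})$ in Assumption~\ref{ass:estimated_weights}(ii).

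For the singly-deleted piece, I bound the second moment $\E[(K^{-1/2}\sum_t(\hat W_t^{-t}-W_t)\varepsilon_t\boldsymbol\xi_{t,j})^2]$ and expand the square into pairs $(t,t')$. The near pairs have $|t-t'|<l_K+r_K$. For these, $2|ab|\le a^2+b^2$ together with $(\hat W_t^{-t}-W_t)^2\le 2(\hat W_t-W_t)^2+2(\hat W_t-\hat W_t^{-t})^2$ gives a bound of order $(l_K+r_K)\E[B_W+\tilde S_W]$. This is $o(1)$ by \eqref{eq:weight-consistency}, since the residual multiplier is exactly $\varepsilon_t^2\|\boldsymbol\xi_t\|_2^2$. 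For the far pairs, I use the same three-term exact decomposition \eqref{eq:exposure-pair-expansion} via the common twice-deleted fit $\hat W^{-t,t'}$.

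The third term of that decomposition involves only $\hat W^{-t,t'}-\hat W^{-t}$. Cauchy--Schwarz bounds it by $K\E[\tilde D_W]=o(1)$. For the first and second terms, take $t<t'$ (or either ordering for the mixed term) and condition on $H_{t'}^{X,Y,Z},A_{t'+r_K}^{X,Y,Z}$. The information condition in Assumption~\ref{ass:estimated_weights}(i) makes $W_t,W_{t'},\hat W_t^{-t,t'},\hat W_{t'}^{-t,t'},\hat W_{t'}^{-t'}$ measurable with respect to this $\sigma$-field, up to independent randomization. The quantities $\varepsilon_t,\boldsymbol\xi_t,\boldsymbol\xi_{t'}$ are measurable as well, because $t+1\le t'$. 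Thus only $\varepsilon_{t'}$ is projected, giving the factor $\E[\varepsilon_{t'}\mid H_{t'}^{X,Y,Z},A_{t'+r_K}^{X,Y,Z}]$.

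A H\"older argument in the style of \eqref{eq:exposure-separated-bound} then produces three factors. The first is $\{K\sum_t\E[\E[\varepsilon_t\mid\cdot]^2]\}^{1/2}=o(1)$ by Assumption~\ref{ass:mixing}(ii). The second is a residual-weighted envelope $\{\E[(B_W+\tilde S_W+\tilde D_W)^2]\}^{1/4}=O(1)$ by \eqref{eq:weight-weighted-second-moments}. The third is a factor involving $\boldsymbol\xi_{t'}$ times a weight error.

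The main obstacle is this third factor. In Lemma~\ref{lem:exposure-remainder-moments} it was the unweighted envelope $A_g+S_g+D_g$, but here $\boldsymbol\xi_{t',j}$ remains attached and no fourth moment of $\boldsymbol\xi$ times weight error is assumed. I would bound $|\hat W-W|\le C_W-c_W$ using the boundedness in Assumption~\ref{ass:estimated_weights}(i). Then $K^{-1}\sum_t\boldsymbol\xi_{t,j}^2\max_{t'}|\cdot|^2\le (C_W-c_W)^2K^{-1}\sum_t\|\boldsymbol\xi_t\|_2^2$, whose second moment is $O(1)$ by $\E\|\boldsymbol\xi_t\|_2^4<c_2$. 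For the mixed term I would use the same route with $D_W$ in place of $\boldsymbol\xi$-weighted differences, where $\E[D_W^2]=o(K^{-2})$ gives more than enough.

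Collecting the bounds, the second moment of the singly-deleted piece is $o(1)$. Markov's inequality together with the first-step bound then yields $\mathrm V_j=o_p(1)$, and summing over the fixed $d$ coordinates completes the proof.
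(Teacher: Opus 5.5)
Your proposal is correct and matches the paper's proof step for step. The steps are: splitting off $\hat W_t-\hat W_t^{-t}$ and controlling it by $(KS_W)^{1/2}$; bounding near pairs by $(l_K+r_K)\E[B_W+\tilde S_W]$; the three-term twice-deleted expansion for far pairs; projecting $\varepsilon_{t'}$ onto $H_{t'}^{X,Y,Z},A_{t'+r_K}^{X,Y,Z}$, with the weight error attached to $\boldsymbol\xi_{t',j}$ bounded by $C_W-c_W$; and $K\E[\tilde D_W]$ for the last term. The only deviation is that for the mixed term the paper does not route through $D_W$: it bounds the deletion difference $|\hat W_{t'}^{-t'}-\hat W_{t'}^{-t,t'}|$ by $C_W-c_W$ exactly as in the first term. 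That is simpler, because it avoids having to separately control the factor $\boldsymbol\xi_{t',j}$ that $D_W$ does not carry.
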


\begin{proof}
Adding and subtracting the once-deleted weights, we obtain 
\[
\mathrm V
=\frac1{\sqrt K}\sum_{t=1}^K
(\hat W_t-\hat W_t^{-t})\varepsilon_t\boldsymbol\xi_t
+\frac1{\sqrt K}\sum_{t=1}^K
(\hat W_t^{-t}-W_t)\varepsilon_t\boldsymbol\xi_t.
\]
Under \(H_0\), conditional independence and
Assumption~\ref{ass:errors}(i) give
\[
\E\Big[\frac1K\sum_{t=1}^K
\varepsilon_t^2\|\boldsymbol\xi_t\|_2^2\Big]
\leq c_1^{4/(2+\delta)}\frac1K\sum_{t=1}^K
\E\|\boldsymbol\xi_t\|_2^2=O(1),
\quad
\frac1K\sum_{t=1}^K
\varepsilon_t^2\|\boldsymbol\xi_t\|_2^2=O_p(1).
\]
The probability stability rates implied by
\eqref{eq:weight-stability-moments} are
\begin{equation}
\label{eq:weight-V-probability-stability}
\sum_{t=1}^K(\hat W_t-\hat W_t^{-t})^2=o_p(1),
\quad
\sum_{t=1}^K\max_{1\leq t'\leq K}
(\hat W_t^{-t}-\hat W_t^{-t,t'})^2=o_p(1).
\end{equation}
For the first deletion, Cauchy--Schwarz gives directly
\begin{equation}
\label{eq:weight-V-first-deletion}
\begin{aligned}
&\Big\|\frac1{\sqrt K}\sum_{t=1}^K
(\hat W_t-\hat W_t^{-t})
\varepsilon_t\boldsymbol\xi_t\Big\|_2\\
&\quad\leq
\Big\{\sum_{t=1}^K(\hat W_t-\hat W_t^{-t})^2\Big\}^{1/2}
\Big\{\frac1K\sum_{t=1}^K
\varepsilon_t^2\|\boldsymbol\xi_t\|_2^2\Big\}^{1/2}\\
&\quad=o_p(1).
\end{aligned}
\end{equation}
It remains to control the once-deleted sum. Fix
\(j\in\{1,\ldots,d\}\). Assumption~\ref{ass:errors}(i) gives
\begin{equation}
\label{eq:weight-V-exposure-moment}
\E\Big\{\frac1K\sum_{t=1}^K\boldsymbol\xi_{t,j}^2\Big\}^{2}
\leq\frac1K\sum_{t=1}^K\E[\boldsymbol\xi_{t,j}^4]
\leq\max_{1\leq t\leq K}\E\|\boldsymbol\xi_t\|_2^4=O(1).
\end{equation}
The identity
\[
\hat W_t^{-t,t'}-W_t
=(\hat W_t-W_t)+(\hat W_t^{-t}-\hat W_t)
+(\hat W_t^{-t,t'}-\hat W_t^{-t})
\]
and \eqref{eq:weight-weighted-second-moments} yield
\begin{align}
&\E\bigg[\Big\{\frac1K\sum_{t=1}^K
\varepsilon_t^2\boldsymbol\xi_{t,j}^2
\max_{1\leq t'\leq K}
(\hat W_t^{-t,t'}-W_t)^2\Big\}^{2}\bigg]\notag\\
&\quad\leq9\E[(B_W+\tilde S_W+\tilde D_W)^2]\notag\\
&\quad\leq27\E[B_W^2+\tilde S_W^{\,2}+\tilde D_W^{\,2}]
\notag\\
&\quad=O(1).
\label{eq:weight-V-deleted-moment}
\end{align}
The square of the once-deleted sum is
\begin{equation}
\label{eq:weight-V-square-expansion}
\begin{aligned}
&\Big\{\frac1{\sqrt K}\sum_{t=1}^K
(\hat W_t^{-t}-W_t)\varepsilon_t\boldsymbol\xi_{t,j}\Big\}^{2}\\
&\quad=\frac1K\sum_{t=1}^K\sum_{t'=1}^K
(\hat W_t^{-t}-W_t)(\hat W_{t'}^{-t'}-W_{t'})
\varepsilon_t\varepsilon_{t'}
\boldsymbol\xi_{t,j}\boldsymbol\xi_{t',j}.
\end{aligned}
\end{equation}
For the diagonal and the nearby pairs, \(2|ab|\leq a^2+b^2\) gives
\begin{align*}
&\frac1K\sum_{\substack{1\leq t,t'\leq K:\\
|t-t'|<l_K+r_K}}
\E\Big[
\Big|(\hat W_t^{-t}-W_t)(\hat W_{t'}^{-t'}-W_{t'})
\varepsilon_t\varepsilon_{t'}
\boldsymbol\xi_{t,j}\boldsymbol\xi_{t',j}\Big|
\Big]\\
&\quad\leq\frac{2(l_K+r_K)-1}{K}
\E\Big[\sum_{t=1}^K
(\hat W_t^{-t}-W_t)^2
\varepsilon_t^2\boldsymbol\xi_{t,j}^2\Big]\\
&\quad\leq2\{2(l_K+r_K)-1\}\E[B_W+\tilde S_W]\\
&\quad=o(1).
\end{align*}
For separated pairs, expand
\begin{align}
&\frac1K\sum_{\substack{1\leq t,t'\leq K:\\
|t-t'|\geq l_K+r_K}}
(\hat W_t^{-t}-W_t)(\hat W_{t'}^{-t'}-W_{t'})
\varepsilon_t\varepsilon_{t'}
\boldsymbol\xi_{t,j}\boldsymbol\xi_{t',j}\notag\\
&\quad=\frac2K\sum_{\substack{1\leq t<t'\leq K:\\
t'-t\geq l_K+r_K}}
(\hat W_t^{-t,t'}-W_t)(\hat W_{t'}^{-t,t'}-W_{t'})
\varepsilon_t\varepsilon_{t'}
\boldsymbol\xi_{t,j}\boldsymbol\xi_{t',j}\notag\\
&\qquad{}+\frac2K\sum_{\substack{1\leq t,t'\leq K:\\
|t-t'|\geq l_K+r_K}}
(\hat W_t^{-t,t'}-W_t)
(\hat W_{t'}^{-t'}-\hat W_{t'}^{-t,t'})
\varepsilon_t\varepsilon_{t'}
\boldsymbol\xi_{t,j}\boldsymbol\xi_{t',j}\notag\\
&\qquad{}+\frac1K\sum_{\substack{1\leq t,t'\leq K:\\
|t-t'|\geq l_K+r_K}}
(\hat W_t^{-t}-\hat W_t^{-t,t'})
(\hat W_{t'}^{-t'}-\hat W_{t'}^{-t,t'})
\varepsilon_t\varepsilon_{t'}
\boldsymbol\xi_{t,j}\boldsymbol\xi_{t',j}.
\label{eq:weight-V-pair-expansion}
\end{align}
When fitting randomization is present, all conditioning information
sets below also include this independent randomization; the residual
conditional expectations are unchanged.
Considering the first sum, when \(t<t'\) we have
\begin{align*}
&\E\big[
(\hat W_t^{-t,t'}-W_t)(\hat W_{t'}^{-t,t'}-W_{t'})
\varepsilon_t\varepsilon_{t'}
\boldsymbol\xi_{t,j}\boldsymbol\xi_{t',j}
\given H_{t'}^{X,Y,Z},A_{t'+r_K}^{X,Y,Z}
\big]\\
&\quad=
(\hat W_t^{-t,t'}-W_t)(\hat W_{t'}^{-t,t'}-W_{t'})
\varepsilon_t\boldsymbol\xi_{t,j}\boldsymbol\xi_{t',j}
\E[\varepsilon_{t'}\given
H_{t'}^{X,Y,Z},A_{t'+r_K}^{X,Y,Z}].
\end{align*}
Since both weight errors are bounded by \(C_W-c_W\),
Cauchy--Schwarz over the indices and H\"older's inequality give
\begin{align}
&\frac2K\sum_{\substack{1\leq t<t'\leq K:\\
t'-t\geq l_K+r_K}}
\Big|\E\big[
(\hat W_t^{-t,t'}-W_t)(\hat W_{t'}^{-t,t'}-W_{t'})
\varepsilon_t\varepsilon_{t'}
\boldsymbol\xi_{t,j}\boldsymbol\xi_{t',j}
\big]\Big|\notag\\
&\quad\leq2(C_W-c_W)\E\bigg[
\Big\{\frac1K\sum_{t=1}^K
\varepsilon_t^2\boldsymbol\xi_{t,j}^2
\max_{1\leq t'\leq K}
(\hat W_t^{-t,t'}-W_t)^2\Big\}^{1/2}\notag\\*
&\hspace{33mm}{}\times
\Big\{K\sum_{t=1}^K
\E[\varepsilon_t\given
H_t^{X,Y,Z},A_{t+r_K}^{X,Y,Z}]^2\Big\}^{1/2}
\Big\{\frac1K\sum_{t=1}^K
\boldsymbol\xi_{t,j}^2\Big\}^{1/2}
\bigg]\notag\\
&\quad\leq2(C_W-c_W)
\E\bigg[\Big\{\frac1K\sum_{t=1}^K
\varepsilon_t^2\boldsymbol\xi_{t,j}^2
\max_{1\leq t'\leq K}
(\hat W_t^{-t,t'}-W_t)^2\Big\}^{2}\bigg]^{1/4}
\notag\\*
&\hspace{33mm}{}\times
\Big\{K\sum_{t=1}^K
\E\big[\E[\varepsilon_t\given
H_t^{X,Y,Z},A_{t+r_K}^{X,Y,Z}]^2\big]\Big\}^{1/2}
\Big\{\max_{1\leq t\leq K}
\E\|\boldsymbol\xi_t\|_2^4\Big\}^{1/4}\notag\\
&\quad=o(1).
\label{eq:weight-V-separated-bound}
\end{align}
Here \eqref{eq:weight-V-deleted-moment} and
\eqref{eq:weight-V-exposure-moment} bound the first and third factors,
while Assumption~\ref{ass:mixing}(ii) makes the middle factor vanish.
For the second sum in \eqref{eq:weight-V-pair-expansion}, project
the residual attached to the singly deleted weight. Conditioning gives,
for either ordering,
\begin{align*}
&\E\big[
(\hat W_t^{-t,t'}-W_t)
(\hat W_{t'}^{-t'}-\hat W_{t'}^{-t,t'})
\varepsilon_t\varepsilon_{t'}
\boldsymbol\xi_{t,j}\boldsymbol\xi_{t',j}
\given H_{t'}^{X,Y,Z},A_{t'+r_K}^{X,Y,Z}
\big]\\
&\quad=
(\hat W_t^{-t,t'}-W_t)
(\hat W_{t'}^{-t'}-\hat W_{t'}^{-t,t'})
\varepsilon_t\boldsymbol\xi_{t,j}\boldsymbol\xi_{t',j}
\E[\varepsilon_{t'}\given
H_{t'}^{X,Y,Z},A_{t'+r_K}^{X,Y,Z}].
\end{align*}
The deletion difference is also bounded by \(C_W-c_W\). Therefore
\begin{align}
&\frac2K\sum_{\substack{1\leq t,t'\leq K:\\
|t-t'|\geq l_K+r_K}}
\Big|\E\big[
(\hat W_t^{-t,t'}-W_t)
(\hat W_{t'}^{-t'}-\hat W_{t'}^{-t,t'})
\varepsilon_t\varepsilon_{t'}
\boldsymbol\xi_{t,j}\boldsymbol\xi_{t',j}
\big]\Big|\notag\\
&\quad\leq2(C_W-c_W)
\E\bigg[\Big\{\frac1K\sum_{t=1}^K
\varepsilon_t^2\boldsymbol\xi_{t,j}^2
\max_{1\leq t'\leq K}
(\hat W_t^{-t,t'}-W_t)^2\Big\}^{2}\bigg]^{1/4}
\notag\\*
&\hspace{33mm}{}\times
\Big\{K\sum_{t=1}^K
\E\big[\E[\varepsilon_t\given
H_t^{X,Y,Z},A_{t+r_K}^{X,Y,Z}]^2\big]\Big\}^{1/2}
\Big\{\max_{1\leq t\leq K}
\E\|\boldsymbol\xi_t\|_2^4\Big\}^{1/4}\notag\\
&\quad=o(1).
\label{eq:weight-V-mixed-bound}
\end{align}
Thus the sums of the absolute expectations of the first two
pairwise contributions are \(o(1)\).
For the third sum in \eqref{eq:weight-V-pair-expansion},
Cauchy--Schwarz gives
\begin{align}
&\E\Big[
\Big|\frac1K\sum_{\substack{1\leq t,t'\leq K:\\
|t-t'|\geq l_K+r_K}}
(\hat W_t^{-t}-\hat W_t^{-t,t'})
(\hat W_{t'}^{-t'}-\hat W_{t'}^{-t,t'})
\varepsilon_t\varepsilon_{t'}
\boldsymbol\xi_{t,j}\boldsymbol\xi_{t',j}
\Big|
\Big]\notag\\
&\quad\leq\frac1K
\E\Big[
\Big(\sum_{t=1}^K
|\varepsilon_t|\|\boldsymbol\xi_t\|_2
\max_{1\leq t'\leq K}
|\hat W_t^{-t}-\hat W_t^{-t,t'}|\Big)^2
\Big]\notag\\
&\quad\leq
\E\Big[
\sum_{t=1}^K\varepsilon_t^2\|\boldsymbol\xi_t\|_2^2
\max_{1\leq t'\leq K}
(\hat W_t^{-t}-\hat W_t^{-t,t'})^2
\Big]\notag\\
&\quad=K\E[\tilde D_W]\notag\\
&\quad=o(1).
\label{eq:weight-V-double-deletion-probability}
\end{align}
Taking expectations in \eqref{eq:weight-V-square-expansion}
and combining these bounds gives
\begin{align*}
0
&\leq
\E\Big[
\Big\{\frac1{\sqrt K}\sum_{t=1}^K
(\hat W_t^{-t}-W_t)
\varepsilon_t\boldsymbol\xi_{t,j}\Big\}^{2}
\Big]\\
&\leq
2\{2(l_K+r_K)-1\}\E[B_W+\tilde S_W]
+K\E[\tilde D_W]+o(1)\\
&=o(1).
\end{align*}
Markov's inequality, fixed \(d\), and
\eqref{eq:weight-V-first-deletion} therefore give
\begin{equation}
\label{eq:weight-V-convergence}
\mathrm V
=\frac1{\sqrt K}\sum_{t=1}^K
(\hat W_t^{-t}-W_t)\varepsilon_t\boldsymbol\xi_t+o_p(1)
=o_p(1).
\end{equation}
\end{proof}

\begin{lemma}[Weight--regression interaction remainders]
\label{lem:weight-regression-remainders}
Suppose \(H_0\), Assumptions~\ref{ass:mixing}, \ref{ass:stability},
\ref{ass:errors}, \ref{ass:population_weights}(i) and
\ref{ass:estimated_weights}(i)--(ii) hold, together with the
measurability conditions stated in the local-evaluation remark.
Then \(\mathrm{VI}=o_p(1)\) and \(\mathrm{VII}=o_p(1)\).
\end{lemma}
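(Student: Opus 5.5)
The plan is to treat VI and VII in parallel with Lemmas~\ref{lem:exposure-remainder-moments} and~\ref{lem:outcome-remainder-moments}, using the bounded weight error $|\hat W_t-W_t|\le C_W-c_W$ from Assumption~\ref{ass:estimated_weights}(i). I fix a coordinate $j$ and work with
\[
\mathrm{VI}_j=\frac1{\sqrt K}\sum_{t=1}^K(\hat W_t-W_t)\varepsilon_t\{g_{t,j}-\hat g_{t,j}\}.
\]
A crude Cauchy--Schwarz bound only gives $(C_W-c_W)\sqrt{K B_g}$, which need not vanish. So the weight error cannot simply be bounded away; the martingale-type cancellation in $\varepsilon_t$ must be exploited jointly for the weight and regression errors.

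First I replace each fitted object by its once-deleted version. I write
\[
(\hat W_t-W_t)(g_t-\hat g_t)=(\hat W_t^{-t}-W_t)(g_t-\hat g_t^{-t})+R_t,
\]
where $R_t$ collects the terms containing $\hat W_t-\hat W_t^{-t}$ or $\hat g_t^{-t}-\hat g_t$. The term with $\hat W_t-\hat W_t^{-t}$ multiplies $\varepsilon_t(g_t-\hat g_t)$. By Cauchy--Schwarz it is at most
\[
\Big\{\sum_t(\hat W_t-\hat W_t^{-t})^2\Big\}^{1/2}\Big\{\tfrac1K\sum_t\varepsilon_t^2\|g_t-\hat g_t\|^2\Big\}^{1/2}=\{KS_W\}^{1/2}B_g^{1/2}=o_p(1).
\]
The term with $\hat g_t^{-t}-\hat g_t$ is bounded by $(C_W-c_W)\{\frac1K\sum\varepsilon_t^2\}^{1/2}(KS_g)^{1/2}=o_p(1)$, exactly as in \eqref{eq:exposure-first-deletion-bound}.

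For the remaining once-deleted sum, I square it and split pairs into nearby ($|t-t'|<l_K+r_K$) and separated ones. For nearby pairs, I use $2|ab|\le a^2+b^2$ and the bounded weight error. This reduces them to $(C_W-c_W)^2\{2(l_K+r_K)-1\}\E[B_g+\tilde S_g]=o(1)$ by Assumptions~\ref{ass:errors}(iii) and~\ref{ass:stability}. For separated pairs, I use the three-term decomposition through the common twice-deleted fits $\hat W^{-t,t'}$ and $\hat g^{-t,t'}$, as in \eqref{eq:exposure-pair-expansion}. The product $(\hat W^{-t}-W)(g-\hat g^{-t})$ telescopes into a twice-deleted product plus differences involving $\hat W^{-t}-\hat W^{-t,t'}$ or $\hat g^{-t}-\hat g^{-t,t'}$. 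In the fully twice-deleted and mixed terms, the measurability in Assumption~\ref{ass:estimated_weights}(i) and the deletion of $\Delta_{t'}$ make all factors except $\varepsilon_{t'}$ measurable with respect to $H_{t'}^{X,Y,Z},A_{t'+r_K}^{X,Y,Z}$. Conditioning therefore replaces $\varepsilon_{t'}$ by its projection. Hölder with exponents $4,2,4$ then gives a bound of the form
\[
\{K\textstyle\sum\E[\E(\varepsilon_t\mid\cdot)^2]\}^{1/2}\cdot O(1),
\]
which is $o(1)$ by Assumption~\ref{ass:mixing}(ii), using the envelopes \eqref{eq:deleted-prediction-envelope}--\eqref{eq:deleted-error-moments}. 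The purely double-deletion terms are bounded directly by $K\E[\tilde D_g]$ times the squared weight bound, or, for weight double-deletion times regression error, by Cauchy--Schwarz with $\E[D_W^2]=o(K^{-2})$ and $\E[B_g^2]=O(1)$.

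Term VII is handled the same way, with $\boldsymbol\xi_{t,j}$ in place of $\varepsilon_t$ and $f$ in place of $g$. Separated pairs are then controlled by projecting $\boldsymbol\xi_{t',j}$ onto $H_{t'}^{Y,Z},\phi(\bar X_t),A_{t'+r_K}^{Y,Z}$ via Assumption~\ref{ass:mixing}(iii). This requires the weights to be $(Y,Z)$-measurable, which is again Assumption~\ref{ass:estimated_weights}(i).

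The main obstacle is the bookkeeping for the mixed separated terms. Products of one weight double-deletion difference with one regression double-deletion difference must be bounded without residual-weighted weight moments, since Assumptions~\ref{ass:estimated_weights}(iii)--(iv) are not among the lemma's hypotheses. I would control these by using boundedness of the weight errors to absorb one factor, and $\E[D_W^2]=o(K^{-2})$ to kill the factor $K$ arising from the double sum.
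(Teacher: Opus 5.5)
Your proposal is correct and follows essentially the same route as the paper's proof. The shared steps are:
\begin{itemize}
\item the first-deletion split, controlled by $(KS_WB_g)^{1/2}$ and $(C_W-c_W)\{(K^{-1}\sum_t\varepsilon_t^2)KS_g\}^{1/2}$;
\item the nearby/separated split of the squared once-deleted sum;
\item the telescoping through $\hat W^{-t,t'}$ and $\hat g^{-t,t'}$, with projection of $\varepsilon_{t'}$ for VI, or of $\boldsymbol\xi_{t',j}$ via Assumption~\ref{ass:mixing}(iii) for VII;
\item the double-difference term, handled through $D_W\le (C_W-c_W)^2$ and $K\E[D_W(B_g+\tilde S_g)]\le K\{\E[D_W^2]\}^{1/2}\{\E[(B_g+\tilde S_g)^2]\}^{1/2}=o(1)$.
\end{itemize}
One small imprecision: that last Cauchy--Schwarz step needs the second moments of $\tilde S_g$ and $\tilde D_g$ (through the envelope $B_g+\tilde S_g+\tilde D_g$) from Assumption~\ref{ass:stability}, not only $\E[B_g^2]=O(1)$.
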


\begin{proof}
Fix \(j\in\{1,\ldots,d\}\). For term~\(\mathrm{VI}\),
\begin{align}
&\frac1{\sqrt K}\sum_{t=1}^K
(\hat W_t-W_t)\varepsilon_t
\big\{g_{t,j}(H_t^{Y,Z})-\hat g_{t,j}(H_t^{Y,Z})\big\}
\notag\\
&\quad=\frac1{\sqrt K}\sum_{t=1}^K
(\hat W_t-\hat W_t^{-t})\varepsilon_t
\big\{g_{t,j}(H_t^{Y,Z})-\hat g_{t,j}(H_t^{Y,Z})\big\}
\notag\\
&\qquad{}+\frac1{\sqrt K}\sum_{t=1}^K
(\hat W_t^{-t}-W_t)\varepsilon_t
\big\{\hat g_{t,j}^{-t}(H_t^{Y,Z})
-\hat g_{t,j}(H_t^{Y,Z})\big\}
\notag\\
&\qquad{}+\frac1{\sqrt K}\sum_{t=1}^K
(\hat W_t^{-t}-W_t)\varepsilon_t
\big\{g_{t,j}(H_t^{Y,Z})
-\hat g_{t,j}^{-t}(H_t^{Y,Z})\big\}.
\label{eq:weight-interaction-first-deletion}
\end{align}
The first two sums satisfy
\begin{align}
&\Big|\frac1{\sqrt K}\sum_{t=1}^K
(\hat W_t-\hat W_t^{-t})\varepsilon_t
\big\{g_{t,j}(H_t^{Y,Z})-\hat g_{t,j}(H_t^{Y,Z})\big\}\Big|
\leq(KS_WB_g)^{1/2}=o_p(1),
\label{eq:weight-interaction-first-weight-bound}\\
&\Big|\frac1{\sqrt K}\sum_{t=1}^K
(\hat W_t^{-t}-W_t)\varepsilon_t
\big\{\hat g_{t,j}^{-t}(H_t^{Y,Z})
-\hat g_{t,j}(H_t^{Y,Z})\big\}\Big|\notag\\
&\quad\leq(C_W-c_W)
\Big\{\Big(\frac1K\sum_{t=1}^K\varepsilon_t^2\Big)
KS_g\Big\}^{1/2}
=o_p(1).
\label{eq:weight-interaction-first-nuisance-bound}
\end{align}
Here \eqref{eq:weight-stability-moments}, \eqref{eq:stability},
\eqref{eq:Kmsemse} and the bounded residual second moments give
the probability limits.

For the square of the last sum in
\eqref{eq:weight-interaction-first-deletion}, the nearby pairs
are bounded as in Lemma~\ref{lem:exposure-remainder-moments}:
\begin{align*}
&\frac1K\sum_{\substack{1\leq t,t'\leq K:\\
|t-t'|<l_K+r_K}}
\E\Big[
\Big|(\hat W_t^{-t}-W_t)(\hat W_{t'}^{-t'}-W_{t'})
\varepsilon_t\varepsilon_{t'}
\big\{g_{t,j}(H_t^{Y,Z})
-\hat g_{t,j}^{-t}(H_t^{Y,Z})\big\}\\
&\hspace{48mm}{}\times
\big\{g_{t',j}(H_{t'}^{Y,Z})
-\hat g_{t',j}^{-t'}(H_{t'}^{Y,Z})\big\}
\Big|
\Big]\\
&\quad\leq
2(C_W-c_W)^2\{2(l_K+r_K)-1\}\E[B_g+\tilde S_g]\\
&\quad=o(1).
\end{align*}
For separated pairs, the additional deletion gives
\begin{align}
&(\hat W_t^{-t}-W_t)
\big\{g_{t,j}(H_t^{Y,Z})
-\hat g_{t,j}^{-t}(H_t^{Y,Z})\big\}
-(\hat W_t^{-t,t'}-W_t)
\big\{g_{t,j}(H_t^{Y,Z})
-\hat g_{t,j}^{-t,t'}(H_t^{Y,Z})\big\}
\notag\\
&\quad=
(\hat W_t^{-t}-\hat W_t^{-t,t'})
\big\{g_{t,j}(H_t^{Y,Z})
-\hat g_{t,j}^{-t,t'}(H_t^{Y,Z})\big\}
+
(\hat W_t^{-t}-W_t)
\big\{\hat g_{t,j}^{-t,t'}(H_t^{Y,Z})
-\hat g_{t,j}^{-t}(H_t^{Y,Z})\big\}.
\label{eq:weight-interaction-second-deletion}
\end{align}
Apply the pair expansion in \eqref{eq:exposure-pair-expansion}
to these products. As above, conditioning includes any independent
fitting randomization.
Projecting \(\varepsilon_{t'}\) onto
\(H_{t'}^{X,Y,Z},A_{t'+r_K}^{X,Y,Z}\), then applying the same
Cauchy--Schwarz and H\"older bounds as in
\eqref{eq:exposure-separated-bound}--\eqref{eq:exposure-mixed-bound},
gives
\begin{align}
&\frac2K\sum_{\substack{1\leq t<t'\leq K:\\
t'-t\geq l_K+r_K}}
\Big|\E\Big[
(\hat W_t^{-t,t'}-W_t)(\hat W_{t'}^{-t,t'}-W_{t'})
\varepsilon_t\varepsilon_{t'}\notag\\*
&\hspace{40mm}{}\times
\big\{g_{t,j}(H_t^{Y,Z})
-\hat g_{t,j}^{-t,t'}(H_t^{Y,Z})\big\}
\big\{g_{t',j}(H_{t'}^{Y,Z})
-\hat g_{t',j}^{-t,t'}(H_{t'}^{Y,Z})\big\}
\Big]\Big|\notag\\
&\quad{}+\frac2K\sum_{\substack{1\leq t,t'\leq K:\\
|t-t'|\geq l_K+r_K}}
\Big|\E\Big[
(\hat W_t^{-t,t'}-W_t)
(\hat W_{t'}^{-t'}-\hat W_{t'}^{-t,t'})
\varepsilon_t\varepsilon_{t'}\notag\\*
&\hspace{40mm}{}\times
\big\{g_{t,j}(H_t^{Y,Z})
-\hat g_{t,j}^{-t,t'}(H_t^{Y,Z})\big\}
\big\{g_{t',j}(H_{t'}^{Y,Z})
-\hat g_{t',j}^{-t,t'}(H_{t'}^{Y,Z})\big\}
\Big]\Big|\notag\\
&\quad{}+\frac2K\sum_{\substack{1\leq t,t'\leq K:\\
|t-t'|\geq l_K+r_K}}
\Big|\E\Big[
(\hat W_t^{-t,t'}-W_t)(\hat W_{t'}^{-t'}-W_{t'})
\varepsilon_t\varepsilon_{t'}\notag\\*
&\hspace{40mm}{}\times
\big\{g_{t,j}(H_t^{Y,Z})
-\hat g_{t,j}^{-t,t'}(H_t^{Y,Z})\big\}
\big\{\hat g_{t',j}^{-t,t'}(H_{t'}^{Y,Z})
-\hat g_{t',j}^{-t'}(H_{t'}^{Y,Z})\big\}
\Big]\Big|\notag\\
&\quad\leq(C_W-c_W)^2
\Big\{K\sum_{t=1}^K
\E\big[\E[\varepsilon_t\given
H_t^{X,Y,Z},A_{t+r_K}^{X,Y,Z}]^2\big]\Big\}^{1/2}
\notag\\*
&\hspace{20mm}{}\times
\big\{\E[(B_g+\tilde S_g+\tilde D_g)^2]\big\}^{1/4}
\Big(
12\big\{\E[(A_g+S_g+D_g)^2]\big\}^{1/4}
+2\sqrt3\,\{\E[D_g^2]\}^{1/4}
\Big)\notag\\
&\quad=o(1).
\label{eq:weight-interaction-separated-bound}
\end{align}
Thus the sum of the absolute expectations of these projected
pairwise contributions is \(o(1)\).
The product of the two deletion differences is bounded without
conditioning, using \eqref{eq:weight-interaction-second-deletion}:
\begin{align}
&\frac1K\E\Bigg[
\Big(\sum_{t=1}^K|\varepsilon_t|
\max_{1\leq t'\leq K}\Big|
(\hat W_t^{-t}-W_t)
\big\{g_{t,j}(H_t^{Y,Z})
-\hat g_{t,j}^{-t}(H_t^{Y,Z})\big\}\notag\\*
&\hspace{25mm}{}-
(\hat W_t^{-t,t'}-W_t)
\big\{g_{t,j}(H_t^{Y,Z})
-\hat g_{t,j}^{-t,t'}(H_t^{Y,Z})\big\}
\Big|\Big)^2
\Bigg]\notag\\
&\quad\leq2K\E\Big[
D_W\Big\{\frac1K\sum_{t=1}^K\varepsilon_t^2
\max_{1\leq t'\leq K}
\big(g_{t,j}(H_t^{Y,Z})
-\hat g_{t,j}^{-t,t'}(H_t^{Y,Z})\big)^2\Big\}
\Big]\notag\\*
&\hspace{16mm}{}+
2(C_W-c_W)^2
\E\Big[
\sum_{t=1}^K\varepsilon_t^2
\max_{1\leq t'\leq K}
\big|\hat g_{t,j}^{-t,t'}(H_t^{Y,Z})
-\hat g_{t,j}^{-t}(H_t^{Y,Z})\big|^2
\Big]\notag\\
&\quad\leq
6K\E[D_W(B_g+\tilde S_g+\tilde D_g)]
+2(C_W-c_W)^2K\E[\tilde D_g]\notag\\
&\quad\leq
6K\E[D_W(B_g+\tilde S_g)]
+8(C_W-c_W)^2K\E[\tilde D_g]\notag\\
&\quad=o(1).
\label{eq:weight-interaction-double-difference-bound}
\end{align}
The penultimate inequality uses \(D_W\leq(C_W-c_W)^2\);
the last line follows from the additional expectation assumptions.
Taking expectations in the square expansion of the once-deleted
sum therefore gives
\begin{align*}
0
&\leq
\E\Bigg[
\Big\{\frac1{\sqrt K}\sum_{t=1}^K
(\hat W_t^{-t}-W_t)\varepsilon_t
\big(g_{t,j}(H_t^{Y,Z})
-\hat g_{t,j}^{-t}(H_t^{Y,Z})\big)
\Big\}^2
\Bigg]\\
&\leq
2(C_W-c_W)^2\{2(l_K+r_K)-1\}\E[B_g+\tilde S_g]\\
&\qquad{}+
6K\E[D_W(B_g+\tilde S_g)]
+8(C_W-c_W)^2K\E[\tilde D_g]+o(1)\\
&=o(1).
\end{align*}
Markov's inequality, together with
\eqref{eq:weight-interaction-first-weight-bound}--
\eqref{eq:weight-interaction-first-nuisance-bound}, gives
\begin{equation}
\label{eq:termVI-converge}
\frac1{\sqrt K}\sum_{t=1}^K
(\hat W_t-W_t)\varepsilon_t
\big\{g_{t,j}(H_t^{Y,Z})
-\hat g_{t,j}(H_t^{Y,Z})\big\}
=o_p(1).
\end{equation}

For term~\(\mathrm{VII}\), the identities
\eqref{eq:weight-interaction-first-deletion} and
\eqref{eq:weight-interaction-second-deletion} hold with
\(g_{t,j}\) replaced by \(f_t\) and \(\varepsilon_t\) by
\(\boldsymbol\xi_{t,j}\). The first two sums satisfy
\begin{align*}
&\Big|\frac1{\sqrt K}\sum_{t=1}^K
(\hat W_t-\hat W_t^{-t})\boldsymbol\xi_{t,j}
\big\{f_t(H_t^{Y,Z})-\hat f_t(H_t^{Y,Z})\big\}\Big|
\leq(KS_WB_f)^{1/2}=o_p(1),\\
&\Big|\frac1{\sqrt K}\sum_{t=1}^K
(\hat W_t^{-t}-W_t)\boldsymbol\xi_{t,j}
\big\{\hat f_t^{-t}(H_t^{Y,Z})
-\hat f_t(H_t^{Y,Z})\big\}\Big|\\
&\quad\leq(C_W-c_W)
\Big\{\Big(\frac1K\sum_{t=1}^K\boldsymbol\xi_{t,j}^2\Big)
KS_f\Big\}^{1/2}
=o_p(1).
\end{align*}
For separated pairs, project \(\boldsymbol\xi_{t',j}\) onto
\(H_{t'}^{Y,Z},\phi(\bar X_t),A_{t'+r_K}^{Y,Z}\), as in
\eqref{eq:outcome-separated-bound}--\eqref{eq:outcome-mixed-bound}.
Assumption~\ref{ass:mixing}(iii) gives
\[
K\sum_{t=1}^K
\E\Big[
\max_{\substack{1\leq s\leq K:\\|s-t|\geq l_K+r_K}}
\big\|\E[\boldsymbol\xi_t\given
H_t^{Y,Z},\phi(\bar X_s),A_{t+r_K}^{Y,Z}]\big\|_2^2
\Big]
=o(1).
\]
The expectation of the sum of the absolute nearby-pair
contributions is bounded by
\[
2(C_W-c_W)^2\{2(l_K+r_K)-1\}\E[B_f+\tilde S_f]
=o(1).
\]
The projected-pair bound is the analogue of
\eqref{eq:weight-interaction-separated-bound} with \(f\) in place
of \(g\), using the source-residual projection above. Thus the
sum of the absolute expectations of the projected pairwise
contributions is \(o(1)\).
For the product of the deletion differences,
\begin{align}
&\frac1K\E\Bigg[
\Big(\sum_{t=1}^K|\boldsymbol\xi_{t,j}|
\max_{1\leq t'\leq K}\Big|
(\hat W_t^{-t}-W_t)
\big\{f_t(H_t^{Y,Z})-\hat f_t^{-t}(H_t^{Y,Z})\big\}
\notag\\*
&\hspace{40mm}{}-
(\hat W_t^{-t,t'}-W_t)
\big\{f_t(H_t^{Y,Z})-\hat f_t^{-t,t'}(H_t^{Y,Z})\big\}
\Big|\Big)^2
\Bigg]\notag\\
&\quad\leq
6K\E[D_W(B_f+\tilde S_f+\tilde D_f)]
+2(C_W-c_W)^2K\E[\tilde D_f]\notag\\
&\quad\leq
6K\E[D_W(B_f+\tilde S_f)]
+8(C_W-c_W)^2K\E[\tilde D_f]\notag\\
&\quad=o(1).
\label{eq:weight-outcome-interaction-bound}
\end{align}
Here we again use \(D_W\leq(C_W-c_W)^2\) and the additional
expectation assumptions.
Combining these bounds in the square expansion gives
\begin{align*}
0
&\leq
\E\Bigg[
\Big\{\frac1{\sqrt K}\sum_{t=1}^K
(\hat W_t^{-t}-W_t)\boldsymbol\xi_{t,j}
\big(f_t(H_t^{Y,Z})-\hat f_t^{-t}(H_t^{Y,Z})\big)
\Big\}^2
\Bigg]\\
&\leq
2(C_W-c_W)^2\{2(l_K+r_K)-1\}\E[B_f+\tilde S_f]\\
&\qquad{}+
6K\E[D_W(B_f+\tilde S_f)]
+8(C_W-c_W)^2K\E[\tilde D_f]+o(1)\\
&=o(1).
\end{align*}
Markov's inequality and the two first-deletion probability
bounds give
\begin{equation}
\label{eq:termVII-converge}
\frac1{\sqrt K}\sum_{t=1}^K
(\hat W_t-W_t)\boldsymbol\xi_{t,j}
\big\{f_t(H_t^{Y,Z})-\hat f_t(H_t^{Y,Z})\big\}
=o_p(1).
\end{equation}
Fixed \(d\) completes the proof for \(\mathrm{VI}\) and
\(\mathrm{VII}\).
\end{proof}

\begin{lemma}[Weight error times the product of nuisance errors]
\label{lem:weight-product-remainder}
Under the bounds in Assumptions~\ref{ass:population_weights}(i) and
\ref{ass:estimated_weights}(i), the product rate
\(A_fA_g=o_p(K^{-1})\) implies \(\mathrm{VIII}=o_p(1)\).
\end{lemma}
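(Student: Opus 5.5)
The argument mirrors the proof of the product-of-nuisance-errors bound for Term~IV, with the weight error absorbed by a uniform bound. By Assumptions~\ref{ass:population_weights}(i) and \ref{ass:estimated_weights}(i), both \(W_t\) and \(\hat W_t\) lie in \([c_W,C_W]\) almost surely, for every \(t\) and \(K\). Hence
\[
|\hat W_t-W_t|\le C_W-c_W \quad\text{almost surely.}
\]
This is the only property of the weight error we need. No consistency or stability of \(\hat W_t\) enters, because the product structure already provides the required rate.

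Concretely, we bound the norm of Term~VIII by pulling out this uniform weight bound, applying the triangle inequality inside the sum, and then Cauchy--Schwarz over \(t\):
\begin{align*}
\|\mathrm{VIII}\|_2
&\le\frac{C_W-c_W}{\sqrt K}\sum_{t=1}^K
\big|f_t(H_t^{Y,Z})-\hat f_t(H_t^{Y,Z})\big|\,
\big\|g_t(H_t^{Y,Z})-\hat g_t(H_t^{Y,Z})\big\|_2\\
&\le\frac{C_W-c_W}{\sqrt K}
\Big\{\sum_{t=1}^K\big|f_t(H_t^{Y,Z})-\hat f_t(H_t^{Y,Z})\big|^2\Big\}^{1/2}
\Big\{\sum_{t=1}^K\big\|g_t(H_t^{Y,Z})-\hat g_t(H_t^{Y,Z})\big\|_2^2\Big\}^{1/2}.
\end{align*}
Each sum of squares equals \(K\) times the corresponding mean-squared error, \(KA_f\) and \(KA_g\). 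The right-hand side therefore equals
\[
(C_W-c_W)\sqrt{K A_fA_g}.
\]

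The final step invokes the product-rate condition \(A_fA_g=o_p(K^{-1})\) from \eqref{eq:Kmsemse}. This gives \(KA_fA_g=o_p(1)\), and by the continuous mapping theorem \(\sqrt{KA_fA_g}=o_p(1)\). Hence \(\mathrm{VIII}=o_p(1)\).

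I do not expect a real obstacle. The one point to check is that the weight bound holds almost surely and uniformly in \(t\), so that it can be taken outside the sum before Cauchy--Schwarz is applied. Assumption~\ref{ass:estimated_weights}(i) guarantees this for every \(K\). No moment assumptions on the residuals are needed, since neither \(\varepsilon_t\) nor \(\boldsymbol\xi_t\) appears in Term~VIII.
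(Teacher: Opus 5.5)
Your proposal is correct and matches the paper's proof essentially line for line. Both use the uniform bound $|\hat W_t-W_t|\le C_W-c_W$ from the two boundedness assumptions, then the triangle inequality and Cauchy--Schwarz over $t$ to reach $(C_W-c_W)\sqrt{KA_fA_g}$, which is $o_p(1)$ by the product-rate condition in \eqref{eq:Kmsemse}.
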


\begin{proof}
The triangle and Cauchy--Schwarz inequalities give
\begin{equation}
\label{eq:termVIII-product-bound}
\begin{aligned}
&\Big\|\frac1{\sqrt K}\sum_{t=1}^K(\hat W_t-W_t)
\big\{f_t(H_t^{Y,Z})-\hat f_t(H_t^{Y,Z})\big\}
\big\{g_t(H_t^{Y,Z})-\hat g_t(H_t^{Y,Z})\big\}\Big\|_2\\
&\quad\leq\frac{C_W-c_W}{\sqrt K}\sum_{t=1}^K
|f_t(H_t^{Y,Z})-\hat f_t(H_t^{Y,Z})|
\|g_t(H_t^{Y,Z})-\hat g_t(H_t^{Y,Z})\|_2\\
&\quad\leq(C_W-c_W)\sqrt K
\Big\{\frac1K\sum_{t=1}^K|f_t(H_t^{Y,Z})-\hat f_t(H_t^{Y,Z})|^2\Big\}^{1/2}
\Big\{\frac1K\sum_{t=1}^K\|g_t(H_t^{Y,Z})-\hat g_t(H_t^{Y,Z})\|_2^2\Big\}^{1/2}\\
&\quad=(C_W-c_W)\sqrt{K A_fA_g}\\
&\quad =o_p(1),
\end{aligned}
\end{equation}
by \eqref{eq:Kmsemse}.
\end{proof}

Under the stated assumptions, combining
Lemma~\ref{lem:leading-term} with the bounds for
terms~\(\mathrm{II}\)--\(\mathrm{VIII}\) gives
\begin{equation}
\label{eq:numerator}
T^{(K)}
=
\frac{1}{\sqrt K}\sum_{t=1}^K
W_t\varepsilon_t\boldsymbol\xi_t
+o_p(1)
\xrightarrow{d}\mathcal N_d(0,\Sigma).
\end{equation}

\subsection{Matrix covariance consistency}

\begin{lemma}[Consistency of the covariance estimator]
\label{lem:matrix-covariance-consistency}
Under the conditions of Theorem~\ref{thm:vector_gtcm},
\[
\hat\Sigma\xrightarrow{p}\Sigma.
\]
\end{lemma}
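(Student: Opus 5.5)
We split $\hat\Sigma$ into an uncentred second-moment matrix and a centring correction, and compare both with oracle quantities built from the population residuals and weights. Write $\hat R_t=\hat W_t\hat\varepsilon_t\hat{\boldsymbol\xi}_t$ and $R_t=W_t\varepsilon_t\boldsymbol\xi_t$. Then
\[
\hat\Sigma=\frac1K\sum_{t=1}^K\hat R_t\hat R_t^\top-\bar{\hat R}\,\bar{\hat R}^\top,
\qquad
\bar{\hat R}=K^{-1/2}T^{(K)}.
\]
By \eqref{eq:numerator}, $T^{(K)}=O_p(1)$, so the centring term is $O_p(K^{-1})$ and can be ignored. It therefore suffices to prove the following two statements:
\begin{itemize}
\item[(a)] $K^{-1}\sum_t R_tR_t^\top\xrightarrow{p}\Sigma$;
\item[(b)] $K^{-1}\sum_t(\hat R_t\hat R_t^\top-R_tR_t^\top)\xrightarrow{p}0$.
\end{itemize}

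For (a), we write $R_tR_t^\top=\E[R_tR_t^\top\given H_t^{X,Y,Z}]+M_t$. The first piece averages to $\Sigma$ in probability by \eqref{eq:oracle-conditional-covariance-limit}, which was established in the proof of Lemma~\ref{lem:leading-term}. The $M_t$ form a martingale difference array with respect to $H_t^{X,Y,Z}$, since $R_t$ is measurable with respect to $H_{t+1}^{X,Y,Z}$. Their average is therefore $o_p(1)$ by a weak law for martingale differences. We would verify this through uniform integrability of $\|R_t\|_2^2$, which follows from the $(2+\delta)$-moment bound derived in the Lyapunov step of Lemma~\ref{lem:leading-term}. If needed, we would use truncation at level $K^{\eta}$ and bound the truncated part in $L^2$, taking care that $\varepsilon_t^4$ need not be integrable.

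For (b), we use $\|\hat R_t\hat R_t^\top-R_tR_t^\top\|\le\|\hat R_t-R_t\|_2^2+2\|R_t\|_2\|\hat R_t-R_t\|_2$ together with Cauchy--Schwarz. Since $K^{-1}\sum_t\|R_t\|_2^2=O_p(1)$ by (a), it suffices to show $K^{-1}\sum_t\|\hat R_t-R_t\|_2^2=o_p(1)$. By \eqref{eq:hatR_decompose}, $\hat R_t-R_t$ is the sum of the seven per-$t$ summands of terms II--VIII, and we bound each of their mean squares using the boundedness of $W_t$ and $\hat W_t$:
\begin{itemize}
\item the II- and III-summands contribute $C_W^2(B_g+B_f)$, and $B_q=o_p(1)$ by Assumption~\ref{ass:errors}(iii) and Markov's inequality;
\item the V-summand contributes at most $B_W=o_p(1)$, by Assumption~\ref{ass:estimated_weights}(iii);
\item the VI- and VII-summands are bounded by $(C_W-c_W)^2(B_g+B_f)$;
\item the IV- and VIII-summands are bounded by $K^{-1}\sum_t(f_t-\hat f_t)^2\|g_t-\hat g_t\|_2^2$.
\end{itemize}

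The IV- and VIII-summands are the main obstacle: this quantity is not directly controlled by $A_fA_g$, because a sum of products is not bounded by a product of sums. Our first attempt bounds it by $\max_t(f_t-\hat f_t)^2\cdot A_g$ and uses $\max_t(f_t-\hat f_t)^2\le KA_f$. Then the bound is $KA_fA_g=o_p(1)$ by Assumption~\ref{ass:errors}(ii), which closes the argument. Finally, $\hat\Sigma\xrightarrow{p}\Sigma$ follows by combining (a), (b) and the negligible centring term. Since $\Sigma$ is positive definite, Slutsky's theorem and the continuous mapping theorem then give the remaining conclusions of Theorem~\ref{thm:vector_gtcm}.
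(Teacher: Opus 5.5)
Your proposal is correct and follows essentially the same route as the paper. The paper also obtains the oracle second-moment limit from the martingale weak law together with \eqref{eq:oracle-conditional-covariance-limit}. It controls the feasible-minus-oracle difference through $K^{-1}\sum_t\|\hat R_t-R_t\|_2^2\le C(B_W+B_f+B_g+KA_fA_g)=o_p(1)$ followed by Cauchy--Schwarz, and it disposes of the centring term via $T^{(K)}=O_p(1)$.

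Your bound $\max_t(f_t-\hat f_t)^2\le KA_f$ supplies the justification the paper leaves unstated for its claim that the squared product of nuisance errors contributes at most $KA_fA_g$. Working directly with the vector $\hat R_t-R_t$ also lets you skip the paper's polarization step.
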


\begin{proof}
{For every \(1\leq j,k\leq d\), the martingale differences
\[
W_t^2\varepsilon_t^2\boldsymbol\xi_{t,j}\boldsymbol\xi_{t,k}
-\E\big[W_t^2\varepsilon_t^2\boldsymbol\xi_{t,j}\boldsymbol\xi_{t,k}
\given H_t^{X,Y,Z}\big]
\]
have uniformly bounded \(1+\delta/2\) moments by the moment calculation
in Lemma~\ref{lem:leading-term}. The martingale weak law, applied entrywise, gives
\begin{equation}
\label{eq:oracle-realized-minus-conditional}
\frac1K\sum_{t=1}^K
W_t^2\varepsilon_t^2\boldsymbol\xi_t\boldsymbol\xi_t^\top
-\frac1K\sum_{t=1}^K
\E\big[W_t^2\varepsilon_t^2\boldsymbol\xi_t\boldsymbol\xi_t^\top
\given H_t^{X,Y,Z}\big]\xrightarrow{p}0.
\end{equation}}
Combining \eqref{eq:oracle-realized-minus-conditional} with
\eqref{eq:oracle-conditional-covariance-limit} gives
\begin{equation}
\label{eq:oracle-second-moment-limit}
\frac{1}{K}\sum_{t=1}^K
W_t^2\varepsilon_t^2
\boldsymbol\xi_t\boldsymbol\xi_t^\top
\xrightarrow{p}\Sigma.
\end{equation}

It remains to replace the population residuals and weights by their estimated
counterparts. For every fixed \(a\in\mathbb R^d\), expand
\[
\hat\varepsilon_t
=
\varepsilon_t+
f_t(H_t^{Y,Z})-\hat f_t(H_t^{Y,Z}),
\quad
a^\top\hat{\boldsymbol\xi}_t
=
a^\top\boldsymbol\xi_t+
a^\top\{g_t(H_t^{Y,Z})-\hat g_t(H_t^{Y,Z})\}.
\]
The squared triangle inequality and boundedness of the weights give
\[
\begin{aligned}
&\frac1K\sum_{t=1}^K
\big\|\hat W_t\hat\varepsilon_t\hat{\boldsymbol\xi}_t
      -W_t\varepsilon_t\boldsymbol\xi_t\big\|_2^2\\
&\qquad\leq4B_W+4C_W^2\{B_f+B_g+K A_fA_g\}\\
&\qquad=o_p(1),
\end{aligned}
\]
The squared product of the nuisance errors contributes at most \(K A_fA_g\), \eqref{eq:residual-weighted-errors} and \eqref{eq:Kmsemse} and \eqref{eq:weight-consistency} give \(B_f+B_g+B_W=o_p(1)\) and \(K A_fA_g=o_p(1)\). Together with \eqref{eq:oracle-second-moment-limit}, Cauchy--Schwarz then gives
\begin{equation}
\label{eq:feasible-oracle-second-moment}
\begin{split}
\frac{1}{K}\sum_{t=1}^K
\hat W_t^2\hat\varepsilon_t^2
(a^\top\hat{\boldsymbol\xi}_t)^2
-
\frac{1}{K}\sum_{t=1}^K
W_t^2\varepsilon_t^2
(a^\top\boldsymbol\xi_t)^2
\xrightarrow{p}0.
\end{split}
\end{equation}
For the weight error at the population residuals,
\[
\Big\|\frac1K\sum_{t=1}^K(\hat W_t^2-W_t^2)
\varepsilon_t^2\boldsymbol\xi_t\boldsymbol\xi_t^\top\Big\|_F
\leq2C_W B_W^{1/2}
\Big\{\frac1K\sum_{t=1}^K\varepsilon_t^2\|\boldsymbol\xi_t\|_2^2\Big\}^{1/2}
=o_p(1),
\]
by \eqref{eq:weight-consistency} and the residual moment bound. Apply \eqref{eq:feasible-oracle-second-moment} to the finite collection
\(a=e_j\) and \(a=e_j+e_k\). Polarization and fixed \(d\) then give
\begin{equation}
\label{eq:feasible-oracle-second-moment-matrix}
\frac{1}{K}\sum_{t=1}^K
\hat W_t^2\hat\varepsilon_t^2
\hat{\boldsymbol\xi}_t\hat{\boldsymbol\xi}_t^\top
-
\frac{1}{K}\sum_{t=1}^K
W_t^2\varepsilon_t^2
\boldsymbol\xi_t\boldsymbol\xi_t^\top
\xrightarrow{p}0.
\end{equation}
Finally,
\[
\frac{1}{K}\sum_{t=1}^K
\hat W_t\hat\varepsilon_t\hat{\boldsymbol\xi}_t
=
\frac{T^{(K)}}{\sqrt K}
=O_p(K^{-1/2})
\]
by \eqref{eq:numerator}. Therefore the outer product of the sample mean
subtracted in \eqref{eq:covariance_estimator} is \(o_p(1)\).
Equations~\eqref{eq:oracle-second-moment-limit} and
\eqref{eq:feasible-oracle-second-moment-matrix} now imply
\(\hat\Sigma\xrightarrow{p}\Sigma\).
\end{proof}

Since \(\Sigma\) is positive definite, Lemma~\ref{lem:matrix-covariance-consistency},
\eqref{eq:numerator}, and Slutsky's theorem give
\[
\{T^{(K)}\}^\top\hat\Sigma^{-1}T^{(K)}
\xrightarrow{d}\chi_d^2.
\]

\section{Proof of Theorem \ref{thm:local_power}}
\label{app:local-power-proof}

\begin{proof}
All population quantities are evaluated under the \(K\)th local
alternative. By \eqref{eq:local_alt},
\[
\varepsilon_t=\zeta_t+\frac{\boldsymbol\xi_t^\top\tau}{\sqrt K},\qquad
\E[\zeta_t\given H_t^{X,Y,Z}]=0.
\]
The assumptions of Theorem~\ref{thm:local_power} give the bounds in
Appendix~\ref{app:nuisance-remainders} with \(\zeta_t\) replacing
\(\varepsilon_t\). Consequently,
\begin{equation}
\label{eq:local-centred-remainder}
\frac1{\sqrt K}\sum_{t=1}^K
\Big[\hat W_t\Big\{\hat\varepsilon_t-
\frac{\boldsymbol\xi_t^\top\tau}{\sqrt K}\Big\}
\hat{\boldsymbol\xi}_t-W_t\zeta_t\boldsymbol\xi_t\Big]
=o_p(1).
\end{equation}

The remaining difference in the local drift satisfies
\begin{align*}
&\Big\|\frac1K\sum_{t=1}^K
(\boldsymbol\xi_t^\top\tau)
\big(\hat W_t\hat{\boldsymbol\xi}_t-W_t\boldsymbol\xi_t\big)\Big\|_2\\
&\quad\leq\|\tau\|_2 A_W^{1/2}
\Big\{\frac1K\sum_t\|\boldsymbol\xi_t\|_2^4\Big\}^{1/2}
+C_W\|\tau\|_2 A_g^{1/2}
\Big\{\frac1K\sum_t\|\boldsymbol\xi_t\|_2^2\Big\}^{1/2} \\
&\quad =o_p(1),
\end{align*}
by \eqref{eq:weight-stability-moments}, $A_g = o_p(1)$ and
Assumption~\ref{ass:mixing}(iv). Thus
\begin{equation}
\label{eq:local-expansion}
T^{(K)}
=\frac1{\sqrt K}\sum_{t=1}^K W_t\zeta_t\boldsymbol\xi_t
+\Big(\frac1K\sum_{t=1}^K
W_t\boldsymbol\xi_t\boldsymbol\xi_t^\top\Big)\tau+o_p(1).
\end{equation}
Lemma~\ref{lem:leading-term} and Proposition~\ref{prop:Sigma_xi} give
\begin{equation}
\label{eq:local-drift-martingale}
\frac1{\sqrt K}\sum_{t=1}^K W_t\zeta_t\boldsymbol\xi_t
\xrightarrow{d}\mathcal N_d(0,\Sigma),\quad
\frac1K\sum_{t=1}^K W_t\boldsymbol\xi_t\boldsymbol\xi_t^\top
\xrightarrow{p}\Gamma.
\end{equation}
It follows that \(T^{(K)}\xrightarrow{d}\mathcal N_d(\Gamma\tau,\Sigma)\). The oracle second-moment argument in
Lemma~\ref{lem:matrix-covariance-consistency}, applied to \(\zeta_t\), gives
\begin{equation}
\label{eq:local-oracle-second-moment}
\frac1K\sum_tW_t^2\zeta_t^2
\boldsymbol\xi_t\boldsymbol\xi_t^\top\xrightarrow{p}\Sigma.
\end{equation}
Throughout this proof, $B_g,\tilde S_g,\tilde D_g$ and
$B_W,\tilde S_W,\tilde D_W$ are defined with $\zeta_t^2$
in place of $\varepsilon_t^2$. The corresponding conditions
are assumed along the local-alternative sequence.
The $f$-quantities retain the multiplier $\|\boldsymbol\xi_t\|_2^2$. For the test statistic, expanding the residuals yields
\begin{align}
&\frac1K\sum_t\big\|
\hat W_t\hat\varepsilon_t\hat{\boldsymbol\xi}_t
-W_t\zeta_t\boldsymbol\xi_t\big\|_2^2\notag\\
&\quad\leq5B_W+5C_W^2(B_g+B_f+K A_fA_g)
+\frac5{K^2}\sum_t\hat W_t^2
(\boldsymbol\xi_t^\top\tau)^2\|\hat{\boldsymbol\xi}_t\|_2^2.
\label{eq:local-feasible-score-distance}
\end{align}
The last term is negligible because
\begin{align*}
&\frac1{K^2}\sum_t\hat W_t^2
(\boldsymbol\xi_t^\top\tau)^2\|\hat{\boldsymbol\xi}_t\|_2^2\\
&\quad\leq2C_W^2\|\tau\|_2^2
\bigg\{\frac1{K^2}\sum_t\|\boldsymbol\xi_t\|_2^4
+A_g\frac1K\sum_t\|\boldsymbol\xi_t\|_2^2\bigg\}\\
&\quad =o_p(1).
\end{align*}
Here \(B_g,B_W\) use \(\zeta_t\), as specified in the theorem.
The other terms in \eqref{eq:local-feasible-score-distance} tend to zero
by the prediction and weight assumptions. Cauchy--Schwarz and \eqref{eq:local-oracle-second-moment} now give
\begin{align*}
&\Big\|\frac1K\sum_t
\hat W_t^2\hat\varepsilon_t^2
\hat{\boldsymbol\xi}_t\hat{\boldsymbol\xi}_t^\top
-\frac1K\sum_tW_t^2\zeta_t^2
\boldsymbol\xi_t\boldsymbol\xi_t^\top\Big\|_F\\
&\quad\leq O_p(1)
\bigg\{\frac1K\sum_t\big\|
\hat W_t\hat\varepsilon_t\hat{\boldsymbol\xi}_t
-W_t\zeta_t\boldsymbol\xi_t\big\|_2^2\bigg\}^{1/2}\\
&\quad =o_p(1).
\end{align*}
Since \(T^{(K)}=O_p(1)\),
\[
\hat\Sigma
=\frac1K\sum_t\hat W_t^2\hat\varepsilon_t^2
\hat{\boldsymbol\xi}_t\hat{\boldsymbol\xi}_t^\top
-\frac{T^{(K)}\{T^{(K)}\}^\top}{K}
\xrightarrow{p}\Sigma.
\]
Slutsky's theorem and positive definiteness of \(\Sigma\) imply
\[
Q^{(K)}\xrightarrow{d}\chi_d^2\{\lambda(\tau)\},\quad
\lambda(\tau)=\tau^\top\Gamma^\top\Sigma^{-1}\Gamma\tau.
\]
The limiting distribution is continuous, so under $H_{1,K}(\tau)$,
\[
\Pr\big(Q^{(K)}>\chi^2_{d,1-\alpha}\big)
\rightarrow
\Pr\big(\chi_d^2\{\lambda(\tau)\}>\chi^2_{d,1-\alpha}\big).
\]
\end{proof}

\section{Examples for the Weak Dependence Assumption}
\label{app:mixing-example}

Use $L^{Y,Z}_{a:b}=\{Y_{a:b},Z_{a:b}\}$, as in the main text.
Throughout this section, the response condition is under the null
\begin{equation}\label{eq:null-recalled}
 \E[Y_{t+1}\given H_t^{X,Y,Z}]=\E[Y_{t+1}\given H_t^{Y,Z}].
\end{equation}
The two requirements to be verified are
\begin{align}
 \sum_{t=1}^K\E \big[
  \E[\varepsilon_t\given H_t^{X,Y,Z},A_{t+r_K}^{X,Y,Z}]^2
 \big]&=o(K^{-1}),\\
 \sum_{t=1}^K\E \Big[
  \max_{\substack{1\leq s\leq K:\\|s-t|\geq l_K+r_K}}
  \big\|\E[\boldsymbol\xi_t\given H_t^{Y,Z},
          \phi(\bar X_s),A_{t+r_K}^{Y,Z}]\big\|_2^2
 \Big]&=o(K^{-1}).
\end{align}

\begin{lemma}[Stationary Gaussian autoregressions]
\label{lem:gaussian-ar-short-memory}
Let $V_t=(X_t,Y_t,Z_t^\top)^\top$ be a strictly stationary
Gaussian VAR$(l)$ process,
\begin{equation}
 V_t=A_1V_{t-1}+\cdots+A_lV_{t-l}+\nu_t,
\end{equation}
where the innovations $\nu_t$ are centred, independent and identically distributed,
with nonsingular covariance. Suppose $\phi(\bar X_t)$ is measurable with respect
to $L^X_{(t-l+1):t}$ and $\E\big[\|\phi(\bar X_t)\|_2^2\big]<\infty$.
Under \eqref{eq:null-recalled}, there is $\rho\in(0,1)$ such that
\eqref{eq:epsilon-mixing}--\eqref{eq:xi-mixing} hold whenever
\begin{equation}\label{eq:gap-rate}
 K^3\rho^{2r_K}\rightarrow0.
\end{equation}
For example, $r_K=\lceil C\log K\rceil,$ with constant $C>\frac{3}{2|\log\rho|}.$
\end{lemma}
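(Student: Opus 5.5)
Both displays will be reduced to a bound of the form $\E[\,\cdot\,]\le C\rho^{2r_K}$ for each $t$, after which summation over $t$ gives $K\cdot C\rho^{2r_K}$, which is $o(K^{-1})$ precisely under \eqref{eq:gap-rate}. (The condition $K^3\rho^{2r_K}\to0$ leaves an extra factor $K$ to spare; I expect to need it for the maximum over $s$ in \eqref{eq:xi-mixing}.) The key structural fact is stability: all roots of $\det(I-\sum_i A_i z^i)$ lie outside the unit circle. Hence the companion matrix $\mathbf A$ of the VAR satisfies $\|\mathbf A^h\|\le C\rho^h$ for some $\rho\in(0,1)$, and $V$ has the causal representation $V_t=\sum_{h\ge0}\Psi_h\nu_{t-h}$ with $\|\Psi_h\|\le C\rho^h$.

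\emph{Response residual.} Under \eqref{eq:null-recalled}, $\varepsilon_t=Y_{t+1}-\E[Y_{t+1}\given H_t^{X,Y,Z}]$. Stationary Gaussianity and the VAR structure identify this residual as the $Y$-coordinate of $\nu_{t+1}$, at least when $t>l$; initial indices contribute only $O(1)$ terms each and must be handled separately.

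Next, condition additionally on the future $A_{t+r_K}^{X,Y,Z}$. Given $H_t^{X,Y,Z}$, the Markov property of the stacked state $\mathbf V_s=(V_s,\dots,V_{s-l+1})$ says the future after $t+r_K$ depends on the past only through $\mathbf V_{t+r_K}=\mathbf A^{r_K-1}\mathbf V_{t+1}+\text{(innovations }\nu_{t+2},\dots)$. By Gaussian projection, $\E[\nu_{t+1}\given H_t^{X,Y,Z},A_{t+r_K}^{X,Y,Z}]$ is linear in the observables. It can be controlled by projecting $\nu_{t+1}$ onto the innovation component of $\mathbf V_{t+r_K}$ orthogonal to $\mathcal F_t$. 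The coefficient of $\nu_{t+1}$ in that component is $\mathbf A^{r_K-1}$ times a fixed loading, so the covariance between $\nu_{t+1}$ and the future information is $O(\rho^{r_K})$. The conditional covariance of the future given the past stays uniformly nonsingular, because nonsingular innovation covariance bounds its smallest eigenvalue below. Consequently the projection has second moment $O(\rho^{2r_K})$, uniformly in $t$.

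\emph{Source residual.} Here $\boldsymbol\xi_t=\phi(\bar X_t)-\E[\phi(\bar X_t)\given H_t^{Y,Z}]$ is a nonlinear function, but $\phi$ depends only on $L^X_{(t-l+1):t}$, a Gaussian vector. The plan is to write
\[
\E[\boldsymbol\xi_t\given H_t^{Y,Z},\phi(\bar X_s),A_{t+r_K}^{Y,Z}]
\]
as the difference between the conditional means of $\phi(\bar X_t)$ given the enlarged and the original information. The extra information, namely future $(Y,Z)$ beyond $t+r_K$ and $\phi(\bar X_s)$ with $|s-t|\ge l_K+r_K$, is linked to $L^X_{(t-l+1):t}$ given $H_t^{Y,Z}$ only through Gaussian cross-covariances of order $\rho^{r_K}$. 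This holds both for $s>t$, by the same Markov argument, and for $s<t$, where the past beyond $t-l_K$ is connected to the present state only through coefficients decaying geometrically in $l_K+r_K$.

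To pass from covariance to the conditional mean of the nonlinear $\phi$, I would use a total-variation or $\chi^2$ comparison of the two Gaussian conditional laws of $L^X_{(t-l+1):t}$. Their means differ by $O(\rho^{r_K})$ times a Gaussian quantity and their covariances differ by $O(\rho^{2r_K})$. Then $\|\E_1\phi-\E_2\phi\|^2\le\E\|\phi\|^2\cdot\chi^2$, which gives $O(\rho^{2r_K})$ up to polynomial moment factors. Because $\phi(\bar X_s)$ conditions on a nonlinear function of a Gaussian vector, I would first condition on the full $L^X_{(s-l+1):s}$ and then use the tower property together with Jensen, which can only increase the bound. The maximum over at most $K$ values of $s$ is bounded by a sum, costing a factor $K$. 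The total is then $K\cdot K\cdot C\rho^{2r_K}=o(K^{-1})$ exactly when $K^3\rho^{2r_K}\to0$.

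\emph{Main obstacle.} I expect the hard step to be this last one: obtaining a $\chi^2$ bound uniform in the conditioning values, since the mean shift is random and unbounded. I would handle it by taking expectations of $\exp(c\rho^{2r_K}\|\text{Gaussian}\|^2)-1$, which is $O(\rho^{2r_K})$ once $r_K$ is large.
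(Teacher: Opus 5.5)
Your overall architecture matches the paper's. You bound the response term by Gaussian regression, show the source residual's conditional cross-covariance $\Psi$ is of order $\rho^{r_K}$, replace $\phi(\bar X_s)$ by $L^X_{(s-l+1):s}$ via the tower property and Jensen, and bound the maximum over $s$ by a sum, which costs the extra factor $K$. Two of your ingredients differ from the paper's, and both are fine.

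\emph{The $\chi^2$ comparison.} This replaces the paper's Gaussian maximal-correlation step. Conditionally on $H_t^{Y,Z}$, the mean shift of $L^X_{(t-l+1):t}$ is $N(0,\Psi S^{-1}\Psi^\top)$ with deterministic covariance, and the covariance reduction is the same matrix. The expected Gaussian $\chi^2$ divergence is therefore an explicit determinant expression of order $\rho^{2r_K}$, so the step you single out as the main obstacle is routine. Maximal correlation gives a bound in terms of the top canonical correlation, whereas yours gives roughly the sum of squared canonical correlations. The two are equivalent here because $L^X_{(t-l+1):t}$ has fixed dimension.

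\emph{The companion-matrix argument.} It handles every \emph{future} piece of the source residual ($s>t$ and $A^{Y,Z}_{t+r_K}$) without any inverse-decay theorem. For $j>t+1$, $V_j$ is the first block of $\mathbf A^{j-t-1}\mathbf V_{t+1}$ plus innovations independent of $(L^X_{(t-l+1):t},H_t^{Y,Z})$. Hence $\|\Cov(L^X_{(t-l+1):t},V_j\given H_t^{Y,Z})\|\le C\rho^{j-t}$.

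\emph{The gap: the case $s<t$.} Here you need
$\Cov(L^X_{(t-l+1):t},L^X_{(s-l+1):s}\given H_t^{Y,Z})=\Sigma_{23}-\Sigma_{21}\Sigma_{11}^{-1}\Sigma_{13}$
to be $O(\rho^{r_K})$. In this expression $\Sigma_{11}=\Var(H_t^{Y,Z})$ has growing dimension, and $H_t^{Y,Z}$ contains the whole $(Y,Z)$ path around and between $s$ and $t$.

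Your justification is that the remote past reaches the present only through geometrically decaying VAR coefficients. That is a statement about unconditional dependence and controls only $\Sigma_{23}$. Conditioning on a long observed path can propagate dependence between the unobserved $X$ values, and nothing in your sketch bounds $\Sigma_{21}\Sigma_{11}^{-1}\Sigma_{13}$. Your forward Markov argument does not transfer either, because the conditioning set now lies on both sides of $s$. The paper's moving-average lemma shows the danger: there $\Sigma_{23}=0$ exactly, yet $\Psi=-\Sigma_{21}\Sigma_{11}^{-1}(\Sigma_{13}\ \Sigma_{14})$ need not vanish.

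The paper supplies the missing ingredient with Jaffard's inverse-decay theorem, $\|(\Sigma_{11}^{-1})_{ij}\|_{\mathrm{op}}\le C\rho_0^{|i-j|}$ uniformly in the history length. This gives $\|\Psi_{ij}\|_{\mathrm{op}}\le C(1+|i-j|)^2\rho_0^{|i-j|}$.

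Within your Markov framework, either of two repairs works:
\begin{itemize}
\item Conditionally on $H_t^{Y,Z}$, consecutive disjoint length-$l$ blocks of $X$ still form a Gaussian Markov chain. The uniform lower eigenvalue bound on conditional covariances makes each one-step conditional canonical correlation at most some $\bar\rho<1$. So the conditional maximal correlation between the two $X$ windows is $O(\bar\rho^{(t-s)/l})$.
\item The precision matrix of $(V_1,\dots,V_t)$ is banded with bounded spectrum. Its $X$--$X$ block, whose inverse is the conditional covariance of the $X$'s given the $(Y,Z)$'s, is banded and well conditioned, so that inverse decays exponentially.
\end{itemize}
Some such localisation of the conditioning on $H_t^{Y,Z}$ must be added for your proof to go through.
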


\begin{proof}
Stationarity and nonsingular innovations imply exponential covariance decay and uniformly bounded covariance eigenvalues. By the inverse-decay result of \citet[Proposition~2]{jaffard1990localisees}, the inverse covariance matrices of the adjustment histories also decay exponentially, uniformly in the history length. Thus, for some \(\rho_0\in(0,1)\),
\begin{equation}\label{eq:gaussian-covariance-bounds}
|\Cov(V_i,V_j)|_{\mathrm{op}}\leq C\rho_0^{|i-j|},
\quad cI\preceq\Sigma\preceq CI.
\end{equation}
Here \(\Sigma\) denotes the covariance matrix of any finite collection of distinct coordinates. Set \(\rho=\sqrt{\rho_0}\), as used in \eqref{eq:gap-rate}; since \(\rho_0<\rho<1\), the same covariance-decay bound also holds with \(\rho\).

\smallskip\noindent\textit{\textbf{Response residual.}}
The VAR equation and the null give
\[
 \varepsilon_t=\nu_{Y,t+1},\quad
 \varepsilon_t \indep H_t^{X,Y,Z},\quad \E[\varepsilon_t]=0.
\]
The lower bound in \eqref{eq:gaussian-covariance-bounds} also holds for
conditional covariance matrices. Indeed, for any two disjoint Gaussian
coordinate blocks and any $v$,
\begin{equation}\label{eq:conditional-covariance-lower-bound}
 v^\top(\Sigma_{22}-\Sigma_{21}\Sigma_{11}^{-1}\Sigma_{12})v
 =\min_u (u^\top,v^\top)\Sigma(u^\top,v^\top)^\top
 \geq c\|v\|_2^2.
\end{equation}
Since the conditional mean of $\varepsilon_t$ given the history is zero,
Gaussian regression gives
\begin{align*}
 &\E\big[\E[\varepsilon_t\given H_t^{X,Y,Z},A_{t+r_K}^{X,Y,Z}]^2\big]\\
 &\quad=\Cov(\varepsilon_t,A_{t+r_K}^{X,Y,Z})
       \Var(A_{t+r_K}^{X,Y,Z}\given H_t^{X,Y,Z})^{-1}
       \Cov(A_{t+r_K}^{X,Y,Z},\varepsilon_t)\\
 &\quad\leq c^{-1}\|\Cov(\varepsilon_t,A_{t+r_K}^{X,Y,Z})\|_2^2 \\
 &\quad \leq C\sum_{j=t+r_K+1}^{K+1}\rho^{2(j-t-1)} \\
 & \quad \leq C\rho^{2r_K}.
\end{align*}
Here the stationary VAR representation gives
$\|\Cov(\nu_{Y,t+1},V_j)\|_2\leq C\rho^{j-t-1}$ for $j>t$.
Consequently,
\begin{equation}\label{eq:response-rate}
 \sum_{t=1}^K\E\big[
 \E[\varepsilon_t\given H_t^{X,Y,Z},A_{t+r_K}^{X,Y,Z}]^2\big]
 \leq CK\rho^{2r_K}=o(K^{-1}).
\end{equation}

\smallskip\noindent\textit{\textbf{Source residual.}}
Retain the full adjustment history and future. For the Gaussian vector
\begin{equation}\label{eq:source-blocks}
 \begin{pmatrix}
 H_t^{Y,Z}\\L^X_{(t-l+1):t}\\L^X_{(s-l+1):s}\\
 L^{Y,Z}_{(t+r_K+1):(K+1)}
 \end{pmatrix},~ \text{where}~|s-t|\geq l_K+r_K,
\end{equation}
write its covariance as $(\Sigma_{ab})_{a,b=1}^4$. The conditional
covariance and cross-covariance needed below are
\begin{align*}
 S&=\begin{pmatrix}\Sigma_{33}&\Sigma_{34}\\
                         \Sigma_{43}&\Sigma_{44}\end{pmatrix}
       -\begin{pmatrix}\Sigma_{31}\\\Sigma_{41}\end{pmatrix}
          \Sigma_{11}^{-1}(\Sigma_{13}\ \Sigma_{14}),\\
 \Psi&=(\Sigma_{23}\ \Sigma_{24})
          -\Sigma_{21}\Sigma_{11}^{-1}(\Sigma_{13}\ \Sigma_{14}).
\end{align*}
With the common decay constant chosen above, the inverse-decay bound is
\begin{equation}\label{eq:inverse-decay}
 \|(\Sigma_{11}^{-1})_{ij}\|_{\mathrm{op}}
 \leq C\rho_0^{|i-j|},
\end{equation}
uniformly over the history length.
The indices here identify time blocks. Combining
\eqref{eq:gaussian-covariance-bounds} and \eqref{eq:inverse-decay},
for a time $i$ in block 2 and a time $j$ in block 3 or 4,
\begin{align*}
 \|\Psi_{ij}\|_{\mathrm{op}}
 &\leq C\rho_0^{|i-j|}
   +C\sum_{u,v\leq t}
       \rho_0^{|i-u|+|u-v|+|v-j|}\\
 &\leq C\big(1+|i-j|\big)^2\rho_0^{|i-j|}\\
 &\leq C\rho_0^{|i-j|/2}
  = C\rho^{|i-j|}.
\end{align*}
Since block 2 has fixed
length and all coordinates in blocks 3 and 4 are separated from it,
\begin{equation}\label{eq:source-cross-covariance}
 \|\Psi\|_F^2
 \leq C\Big\{\rho^{2\big(|s-t|-l+1\big)}+
                    \sum_{j=t+r_K+1}^{K+1}\rho^{2(j-t)}\Big\}
 \leq C\rho^{2r_K}.
\end{equation}
By \eqref{eq:conditional-covariance-lower-bound},
\[
 S\succeq cI,\quad
 \Sigma_{22}-\Sigma_{21}\Sigma_{11}^{-1}\Sigma_{12}\succeq cI.
\]
Therefore, the largest conditional canonical correlation is at most
\begin{equation}\label{eq:canonical-bound}
 \|(\Sigma_{22}-\Sigma_{21}\Sigma_{11}^{-1}\Sigma_{12})^{-1/2}
           \Psi S^{-1/2}\|_{\mathrm{op}}
 \leq c^{-1}\|\Psi\|_F\leq C\rho^{r_K}.
\end{equation}
Conditional Gaussian maximal correlation \citep{veraar2009correlation}, applied
coordinatewise, now yields
\begin{align*}
 &\E \Big[\big\|\E[\boldsymbol\xi_t\given H_t^{Y,Z},
 L^X_{(s-l+1):s},A_{t+r_K}^{Y,Z}]\big\|_2^2\given H_t^{Y,Z}\Big]\\
 &\quad\leq C\rho^{2r_K}\E\big[\|\boldsymbol\xi_t\|_2^2\given H_t^{Y,Z}\big].
\end{align*}
Indeed $\boldsymbol\xi_t$ is conditionally centred; for jointly Gaussian
vectors the $L^2$ norm of the conditional-expectation operator on centred
functions equals their largest canonical correlation. Since
$\phi(\bar X_s)$ is measurable with respect to $L^X_{(s-l+1):s}$,
the tower property and conditional Jensen's inequality give
\begin{align}\label{eq:source-one-pair}
 &\E \Big[\big\|\E[\boldsymbol\xi_t\given H_t^{Y,Z},
                  \phi(\bar X_s),A_{t+r_K}^{Y,Z}]\big\|_2^2\Big]\notag\\
 &\quad\leq\E \Big[\big\|\E[\boldsymbol\xi_t\given H_t^{Y,Z},
                  L^X_{(s-l+1):s},A_{t+r_K}^{Y,Z}]\big\|_2^2\Big]\notag\\
 &\quad\leq C\rho^{2r_K}\E\big[\|\boldsymbol\xi_t\|_2^2\big] \notag\\
 &\quad\leq C\rho^{2r_K}\E\big[\|\phi(\bar X_t)\|_2^2\big]\notag \\
 &\quad\leq C\rho^{2r_K}.
\end{align}
Finally, the maximum is inside the expectation, so
\begin{align}\label{eq:source-max-rate}
 &\sum_{t=1}^K\E \Big[
 \max_{\substack{1\leq s\leq K:\\|s-t|\geq l_K+r_K}}
 \big\|\E[\boldsymbol\xi_t\given H_t^{Y,Z},
             \phi(\bar X_s),A_{t+r_K}^{Y,Z}]\big\|_2^2\Big]\notag\\
 &\quad\leq\sum_{t=1}^K
 \sum_{\substack{1\leq s\leq K:\\|s-t|\geq l_K+r_K}}
 \E \Big[\big\|\E[\boldsymbol\xi_t\given H_t^{Y,Z},
             \phi(\bar X_s),A_{t+r_K}^{Y,Z}]\big\|_2^2\Big]\notag\\
 &\quad\leq CK^2\rho^{2r_K} \notag\\
 &\quad=\frac{C}{K}\{K^3\rho^{2r_K}\}\notag\\
 &\quad =o(K^{-1}).
\end{align}
\end{proof}

\begin{lemma}[Moving averages]
\label{lem:gaussian-ma-short-memory}
Let
\[
 V_t=\nu_t+B_1\nu_{t-1}+\cdots+B_q\nu_{t-q},
\]
where the innovations are independent Gaussian, with finite second moments.
Under \eqref{eq:null-recalled}, the left-hand side of
\eqref{eq:epsilon-mixing} is zero whenever $r_K>q$.
For \eqref{eq:xi-mixing}, $cI\preceq\Sigma\preceq CI$ uniformly for finite collections of
distinct observed coordinates, and $\phi(\bar X_t)$ is measurable with respect to $L^X_{(t-q+1):t}$, and $\sup_t\E\big[\|\phi(\bar X_t)\|_2^2\big]<\infty.$
Then \eqref{eq:xi-mixing} holds under \eqref{eq:gap-rate} for some
$\rho\in(0,1)$. 
\end{lemma}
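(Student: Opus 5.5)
The plan follows the two-part structure of the proof of Lemma~\ref{lem:gaussian-ar-short-memory}. The response condition is handled by an exact independence argument. The source condition is handled by reducing to Gaussian canonical correlations and then exploiting finite memory.

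\textbf{Response residual.} The process $V_t$ is a finite moving average of independent innovations, so $H_t^{X,Y,Z}$ is measurable with respect to $\{\nu_u:u\le t\}$. The null \eqref{eq:null-recalled} together with joint Gaussianity gives
\[
\varepsilon_t=Y_{t+1}-\E[Y_{t+1}\given H_t^{X,Y,Z}] .
\]
This residual is a linear function of $(\nu_u)_{u\le t+1}$ and is uncorrelated with $H_t^{X,Y,Z}$. By Gaussianity it is therefore independent of $H_t^{X,Y,Z}$.

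\textbf{Joint independence for the response term.} Next I would show that $\varepsilon_t$ is jointly independent of $(H_t^{X,Y,Z},A_{t+r_K}^{X,Y,Z})$. The future block $A_{t+r_K}^{X,Y,Z}$ involves only $V_j$ with $j\ge t+r_K+1$, and hence only $\nu_u$ with $u\ge t+r_K+1-q>t+1$ once $r_K>q$. Meanwhile $\varepsilon_t$ and $H_t^{X,Y,Z}$ are measurable in $\{\nu_u:u\le t+1\}$. The future block is thus independent of the pair $(\varepsilon_t,H_t^{X,Y,Z})$. Combined with $\varepsilon_t\indep H_t^{X,Y,Z}$ and $\E[\varepsilon_t]=0$, this gives $\E[\varepsilon_t\given H_t^{X,Y,Z},A_{t+r_K}^{X,Y,Z}]=0$. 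Hence the left-hand side of \eqref{eq:epsilon-mixing} is exactly zero.

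\textbf{Source residual: reduction.} Here I would reuse the block structure \eqref{eq:source-blocks}, with $l=q$, and the canonical-correlation bound via \citet{veraar2009correlation}. The only ingredient to re-establish is an analogue of \eqref{eq:inverse-decay}, namely exponential off-diagonal decay of $\Sigma_{11}^{-1}$ uniformly in the history length. The covariance of $V$ is banded: $\Cov(V_i,V_j)=0$ for $|i-j|>q$. The assumption $cI\preceq\Sigma\preceq CI$ then lets me invoke \citet[Proposition~2]{jaffard1990localisees}, or the Demko--Moss--Smith bound for banded well-conditioned matrices. This yields $\|(\Sigma_{11}^{-1})_{ij}\|_{\mathrm{op}}\le C\rho_0^{|i-j|}$ for some $\rho_0\in(0,1)$.

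\textbf{Source residual: cross-covariance bound.} The bound on $\Psi_{ij}$ then follows as before. The direct term $\Sigma_{23},\Sigma_{24}$ vanishes for separations exceeding $q$. The correction term is a sum of products of banded and exponentially decaying factors, giving $\|\Psi_{ij}\|\le C(1+|i-j|)^2\rho_0^{|i-j|}\le C\rho^{|i-j|}$ with $\rho=\sqrt{\rho_0}$. Since block 2 has length $q$, summing yields $\|\Psi\|_F^2\le C\rho^{2r_K}$. The conditional covariance lower bound \eqref{eq:conditional-covariance-lower-bound} follows from $\Sigma\succeq cI$. Together these give the canonical correlation bound $C\rho^{r_K}$.

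\textbf{Source residual: conclusion.} The maximal-correlation step and the measurability of $\phi(\bar X_s)$ in $L^X_{(s-q+1):s}$ then give the one-pair bound
\[
C\rho^{2r_K}\sup_t\E\|\phi(\bar X_t)\|_2^2 .
\]
A union over at most $K$ values of $s$ and a sum over $t$ give $CK^2\rho^{2r_K}=o(K^{-1})$ under \eqref{eq:gap-rate}.

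\textbf{Main obstacle.} The delicate step is the uniform inverse-decay bound for $\Sigma_{11}^{-1}$, since the MA process need not be invertible or stationary here. I would rely solely on bandedness and the assumed uniform spectral bounds $cI\preceq\Sigma\preceq CI$, which is exactly what Jaffard/Demko-type results require.
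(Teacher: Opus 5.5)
Your proposal is correct and follows essentially the same route as the paper. For the response term, the far-future block $A_{t+r_K}^{X,Y,Z}$ is generated only by innovations after time $t+1$ once $r_K>q$. For the source term, you reuse the block and canonical-correlation argument of Lemma~\ref{lem:gaussian-ar-short-memory}, with bandedness ($\Sigma_{23}=\Sigma_{24}=0$) and the uniform spectral bounds giving the inverse-decay estimate for $\Sigma_{11}^{-1}$, exactly as the paper does.

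Your intermediate step showing $\varepsilon_t\indep H_t^{X,Y,Z}$ via Gaussianity is valid but unnecessary. The independence $A_{t+r_K}^{X,Y,Z}\indep(Y_{t+1},H_t^{X,Y,Z})$ together with the null already gives $\E[\varepsilon_t\given H_t^{X,Y,Z},A_{t+r_K}^{X,Y,Z}]=\E[\varepsilon_t\given H_t^{X,Y,Z}]=0$.
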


\begin{proof}
If $r_K>q$, the innovations generating $A_{t+r_K}^{X,Y,Z}$ occur strictly
after $t+1$. Hence
\begin{align*}
 A_{t+r_K}^{X,Y,Z}& \indep (Y_{t+1},H_t^{X,Y,Z}),\\
 \E[\varepsilon_t\given H_t^{X,Y,Z},A_{t+r_K}^{X,Y,Z}]
 &=\E[\varepsilon_t\given H_t^{X,Y,Z}]=0.
\end{align*}
For the source term, use the blocks in \eqref{eq:source-blocks}. Even if
$l_K>q$ and $r_K>q$, finite dependence gives only
\[
 \Sigma_{23}=0,\quad \Sigma_{24}=0,\quad
 \Psi=-\Sigma_{21}\Sigma_{11}^{-1}(\Sigma_{13}\ \Sigma_{14}),
\]
which need not be zero. Conditioning on the full $H_t^{Y,Z}$ is essential.
Under the additional Gaussian and covariance assumptions,
\[
 \Cov(V_i,V_j)=0~ \big(|i-j|>q\big),\quad
 \|(\Sigma_{11}^{-1})_{ij}\|_{\mathrm{op}}\leq C\rho^{|i-j|}.
\]
The second inequality follows from the same inverse-decay theorem.
Thus \eqref{eq:source-cross-covariance}--\eqref{eq:source-max-rate} give
\[
 \sum_{t=1}^K\E\Big[
 \max_{\substack{1\leq s\leq K:\\|s-t|\geq l_K+r_K}}
 \big\|\E[\boldsymbol\xi_t\given H_t^{Y,Z},
                  \phi(\bar X_s),A_{t+r_K}^{Y,Z}]\big\|_2^2\Big]
 \leq CK^2\rho^{2r_K}=o(K^{-1}).
\]
\end{proof}

\section{Examples for the Stability Assumption}
\label{app:stability-example}

We give sufficient conditions for the convergence requirements
\begin{equation}
\label{eq:stability-recalled}
S_q=o_p(K^{-1}),
\quad
(l_K+r_K)\E[\tilde S_q]+K\E[\tilde D_q]=o(1),
\quad q\in\{f,g\},
\end{equation}
and the bounded second moments in
Assumption~\ref{ass:stability}.
The probability and expectation results below have separate
assumptions. We write
\[
\bar Y_{t,p}=(Y_t,\ldots,Y_{t-p+1})^\top,
\quad p=p_K=O(\log K).
\]
The adjustment variables $Z$ are suppressed. 

Deletion is performed at the level of primitive observations.
For
\[
\Delta_t=\{t+1,\ldots,t+r_K\}\cap\{1,\ldots,K+1\},
\quad l_K+r_K=O(\log K),
\]
let $r_t$ contain every regression row whose response or
predictors use an observation indexed by $\Delta_t$.
The deletion sets are determined by the chosen temporal
footprints and satisfy
\begin{equation}
\label{eq:linear-purged-size}
\max_t|r_t|=O(\log K),
\quad
\max_{t,t'}|r_t\cup r_{t'}|
\leq 2\max_t|r_t|=O(\log K).
\end{equation}
All deleted fits are evaluated at the original predictors.
Required initial lags and the final response are observed and
satisfy the stated moment conditions.

The proofs are written for the scalar source regression
$\phi(\bar X_t)=X_t$.
For a fixed-dimensional dictionary, the same arguments apply
coordinatewise under the stated conditions on
$\phi(\bar X_t)$.
For the outcome regression, the training response $X_j$ is
replaced by $Y_{j+1}$.

\subsection{Linear regression}
\label{app:linear-weighted-stability}

Consider the linear source regression
\[
X_t=g_t(\bar Y_{t,p})+\xi_t
   =\bar Y_{t,p}^\top\beta+\xi_t,
\quad
\E[\xi_t\given H_t^{Y,Z}]=0,
\]
where the deterministic coefficient vector may depend on $K$
but is constant across $t$.
Ordinary least squares gives
\[
\hat g_t(\bar Y_{t,p})=\bar Y_{t,p}^\top\hat\beta.
\]

\begin{lemma}[Probability stability of linear regression]
\label{lem:linear-regression-stability}
Suppose that, for some $\delta>0$,
\begin{equation}
\label{eq:linear-moment}
\sup_{K,j}
\E\big[
\|\phi(\bar X_j)\|_2^{4+\delta}
+|Y_j|^{4+\delta}
\big]<\infty,
\end{equation}
and, for some $c>0$,
\begin{equation}
\label{eq:linear-gram}
\Pr\Big[
\lambda_{\min}\big(
\frac1K\sum_{j=1}^K
\bar Y_{j,p}\bar Y_{j,p}^\top
\big)\geq c
\Big]\rightarrow1.
\end{equation}
Then ordinary least squares satisfies
\[
S_q
=
O_p\big(
p^3(\log K)^2K^{-2+4/(4+\delta)}
\big)
=o_p(K^{-1}),
\quad q\in\{f,g\}.
\]
\end{lemma}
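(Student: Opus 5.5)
The approach is a leave-block-out (Sherman--Morrison--Woodbury) identity for OLS, with $S_g$ handled explicitly and $S_f$ identical after replacing the training response $X_j$ by $Y_{j+1}$.

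Write $G=\sum_{j}\bar Y_{j,p}\bar Y_{j,p}^\top$ and $\hat\beta=G^{-1}\sum_j\bar Y_{j,p}X_j$. Let $\hat\beta^{-t}$ be the fit with the rows $r_t$ removed, and $G_{-t}=G-\sum_{j\in r_t}\bar Y_{j,p}\bar Y_{j,p}^\top$. The standard identity is
\[
\hat\beta-\hat\beta^{-t}=G_{-t}^{-1}\sum_{j\in r_t}\bar Y_{j,p}\,\hat e_j ,
\]
where $\hat e_j=X_j-\bar Y_{j,p}^\top\hat\beta$ are the full-sample residuals. Since the deleted fit is evaluated at the original predictor, this gives
\[
|\hat g_t-\hat g_t^{-t}|\le \|\bar Y_{t,p}\|_2\,\|G_{-t}^{-1}\|_{\mathrm{op}}\sum_{j\in r_t}\|\bar Y_{j,p}\|_2|\hat e_j| .
\]
So $S_g$ is at most $\max_t\|G_{-t}^{-1}\|_{\mathrm{op}}^2$ times $K^{-1}\sum_t\|\bar Y_{t,p}\|_2^2\bigl(\sum_{j\in r_t}\|\bar Y_{j,p}\|_2|\hat e_j|\bigr)^2$.

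\textbf{Step 1: the deleted Gram matrices.} By \eqref{eq:linear-gram}, $\lambda_{\min}(G)\ge cK$ with probability tending to one. Removing $|r_t|=O(\log K)$ rows, by \eqref{eq:linear-purged-size}, changes $G$ by at most $|r_t|\max_j\|\bar Y_{j,p}\|_2^2$ in operator norm. By a union bound and \eqref{eq:linear-moment},
\[
\max_j\|\bar Y_{j,p}\|_2^2\le p\max_{j}Y_j^2=O_p\bigl(pK^{2/(4+\delta)}\bigr).
\]
Hence the perturbation is $O_p(p\log K\,K^{2/(4+\delta)})=o_p(K)$, because $p=O(\log K)$ and $2/(4+\delta)<1$. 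Weyl's inequality then gives $\max_t\|G_{-t}^{-1}\|_{\mathrm{op}}\le 2/(cK)$ with probability tending to one.

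\textbf{Step 2: the residual sum.} By Cauchy--Schwarz over $j\in r_t$, $\bigl(\sum_{j\in r_t}\|\bar Y_{j,p}\|_2|\hat e_j|\bigr)^2\le|r_t|\sum_{j\in r_t}\|\bar Y_{j,p}\|_2^2\hat e_j^2$. Bound $\|\bar Y_{t,p}\|_2^2\|\bar Y_{j,p}\|_2^2\le \max_i\|\bar Y_{i,p}\|_2^2\cdot\|\bar Y_{j,p}\|_2^2$. Each $j$ lies in $O(\log K)$ of the sets $r_t$, by the temporal footprint. It remains to control $\sum_j\|\bar Y_{j,p}\|_2^2\hat e_j^2$. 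Using $\hat e_j^2\le 2X_j^2+2\|\bar Y_{j,p}\|_2^2\|\hat\beta\|_2^2$ with $\|\hat\beta\|_2=O_p(1)$ (from Step 1 and the moment bound), this sum has expectation $O(Kp^2)$ by the $(4+\delta)$-moments. Collecting the factors,
\[
S_g=O_p\Bigl(K^{-2}\cdot K^{-1}\cdot(\log K)^2\cdot pK^{2/(4+\delta)}\cdot Kp^2\Bigr)=O_p\bigl(p^3(\log K)^2K^{-2+2/(4+\delta)}\bigr).
\]
This is at least as good as the stated $K^{-2+4/(4+\delta)}$. If the split of maxima turns out coarser (using $\max$ for both the $t$ and $j$ factors), one gets exactly the stated exponent $4/(4+\delta)$. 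In either case $p^3(\log K)^2=\mathrm{polylog}(K)$ and $-2+4/(4+\delta)<-1$, so $S_g=o_p(K^{-1})$.

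\textbf{Main obstacle.} The delicate step is Step 1 together with the maxima. We need uniform control of $\|G_{-t}^{-1}\|$ and of $\max_j\|\bar Y_{j,p}\|$ over all $t$ simultaneously, where only polynomial moments are available and the footprint multiplicities are logarithmic. I would handle this by the union bound
\[
\Pr\Bigl(\max_{j\le K}|Y_j|>M K^{1/(4+\delta)}\Bigr)\le M^{-(4+\delta)}\sup_j\E|Y_j|^{4+\delta},
\]
and then keep track of the polylog factors carefully.
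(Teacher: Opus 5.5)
Your argument is correct. It shares the paper's skeleton: the identity $\hat\beta-\hat\beta^{-t}=G_{-t}^{-1}\sum_{j\in r_t}\bar Y_{j,p}\hat e_j$ obtained by subtracting normal equations, and Weyl's inequality with $\max_t|r_t|\max_j\|\bar Y_{j,p}\|_2^2=o_p(K)$ to get $\max_t\|G_{-t}^{-1}\|_{\mathrm{op}}=O_p(K^{-1})$. You differ in how you bound $K^{-1}\sum_t\|\bar Y_{t,p}\|_2^2\|\sum_{j\in r_t}\bar Y_{j,p}\hat e_j\|_2^2$.

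\textbf{The paper's route.} It bounds the coefficient perturbation uniformly in $t$, using maxima over the deleted rows of both $\|\bar Y_{j,p}\|_2$ and $|\hat e_j|$. It then multiplies by the average $K^{-1}\sum_t\|\bar Y_{t,p}\|_2^2=O_p(p)$. The two squared maxima each cost $K^{2/(4+\delta)}$, which is where the exponent $4/(4+\delta)$ comes from.

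\textbf{Your route.} You put the maximum on the evaluation predictor and apply Cauchy--Schwarz within the block. You then exchange the double sum using the row multiplicity $\#\{t: j\in r_t\}=O(\log K)$. The deleted-row factor therefore enters only through the average $K^{-1}\sum_j\|\bar Y_{j,p}\|_2^2\hat e_j^2=O_p(p^2)$, which fourth moments control.

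\textbf{What each buys.} Your route gives the sharper rate $p^3(\log K)^2K^{-2+2/(4+\delta)}$. This implies the stated bound, and it would remain $o_p(K^{-1})$ even with $\delta=0$, whereas the paper's bound needs $\delta>0$. The paper's route keeps a template it reuses in the double-deletion and kernel lemmas: a bound on the coefficient perturbation, followed by an average over evaluation points.

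\textbf{Two details to tighten.} First, state the multiplicity bound explicitly. It follows from the contiguous footprint, since row $j$ lies in $r_t$ only for $t$ in a window of length $O(p+r_K)$, but it is not contained in \eqref{eq:linear-purged-size}, which only bounds $|r_t|$. Second, $\|\hat\beta\|_2=O_p(1)$ should come from the projection inequality $\sum_j(\bar Y_{j,p}^\top\hat\beta)^2\le\sum_jX_j^2$ together with $\lambda_{\min}(G)\ge cK$. Bounding $\|G^{-1}\|_{\mathrm{op}}\|\sum_j\bar Y_{j,p}X_j\|_2$ directly gives only $O_p(\sqrt p)$. That would be harmless for the conclusion but would raise the power of $p$.
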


\begin{proof}
We first consider $\phi(\bar X_t)=X_t$.
The union bound and Markov's inequality give
\begin{equation}
\label{eq:linear-max-bound}
\max_{1\leq j\leq K}|X_j|
=O_p(K^{1/(4+\delta)}),
\quad
\max_{1\leq j\leq K}\|\bar Y_{j,p}\|_2
=O_p(\sqrt p\,K^{1/(4+\delta)}).
\end{equation}
For the second bound, use
\[
\max_j\|\bar Y_{j,p}\|_2
\leq
\sqrt p\max_{2-p\leq j\leq K}|Y_j|,
\]
and note that $K+p-1=O(K)$.
Also,
\begin{equation}
\label{eq:linear-average-bound}
\frac1K\sum_jX_j^2=O_p(1),
\quad
\frac1K\sum_j\|\bar Y_{j,p}\|_2^2=O_p(p).
\end{equation}

The normal equations and the least-squares projection identity
imply
\[
\sum_j\bar Y_{j,p}
(X_j-\bar Y_{j,p}^\top\hat\beta)=0,
\quad
\sum_j(\bar Y_{j,p}^\top\hat\beta)^2
\leq\sum_jX_j^2.
\]
On the event in \eqref{eq:linear-gram},
\[
c\|\hat\beta\|_2^2
\leq
\hat\beta^\top
\big(
\frac1K\sum_j\bar Y_{j,p}\bar Y_{j,p}^\top
\big)\hat\beta
\leq
\frac1K\sum_jX_j^2.
\]
Hence $\|\hat\beta\|_2=O_p(1)$. For a single deletion,
\begin{align*}
\max_t
\Big\|
\sum_{j\in r_t}\bar Y_{j,p}\bar Y_{j,p}^\top
\Big\|_{\mathrm{op}}
&\leq
\max_t|r_t|\max_j\|\bar Y_{j,p}\|_2^2\\
&=
O_p\big(p\log K\,K^{2/(4+\delta)}\big)\\
&=o_p(K).
\end{align*}
Weyl's inequality and \eqref{eq:linear-gram} therefore give
\begin{equation}
\label{eq:linear-deleted-inverse}
\max_t
\Big\|
\big(
\sum_{j\notin r_t}
\bar Y_{j,p}\bar Y_{j,p}^\top
\big)^{-1}
\Big\|_{\mathrm{op}}
=O_p(K^{-1}).
\end{equation}

Subtracting the full and retained normal equations yields
\[
\hat\beta^{-t}-\hat\beta
=
\big(
\sum_{j\notin r_t}
\bar Y_{j,p}\bar Y_{j,p}^\top
\big)^{-1}
\sum_{j\in r_t}
\bar Y_{j,p}
(\bar Y_{j,p}^\top\hat\beta-X_j).
\]
Moreover,
\begin{align*}
\max_j|\bar Y_{j,p}^\top\hat\beta-X_j|
&\leq
\max_j\|\bar Y_{j,p}\|_2\|\hat\beta\|_2
+\max_j|X_j|=
O_p(\sqrt p\,K^{1/(4+\delta)}).
\end{align*}
Consequently,
\begin{align}
\max_t\|\hat\beta^{-t}-\hat\beta\|_2
&\leq
\max_t
\Big\|
\big(
\sum_{j\notin r_t}
\bar Y_{j,p}\bar Y_{j,p}^\top
\big)^{-1}
\Big\|_{\mathrm{op}}
\max_t|r_t|
\max_j\|\bar Y_{j,p}\|_2
\max_j|\bar Y_{j,p}^\top\hat\beta-X_j|
\notag\\
&=
O_p\big(
p\log K\,K^{-1+2/(4+\delta)}
\big).
\label{eq:linear-coefficient-stability}
\end{align}
It follows that
\begin{align*}
S_g
&=
\frac1K\sum_t
|\bar Y_{t,p}^\top(\hat\beta-\hat\beta^{-t})|^2\\
&\leq
\big(
\frac1K\sum_t\|\bar Y_{t,p}\|_2^2
\big)
\max_t\|\hat\beta-\hat\beta^{-t}\|_2^2\\
&=
O_p\big(
p^3(\log K)^2K^{-2+4/(4+\delta)}
\big)\\
&=
K^{-1}O_p\big(
(\log K)^5K^{-\delta/(4+\delta)}
\big)\\
&=o_p(K^{-1}).
\end{align*}
Replacing $X_j$ by $Y_{j+1}$ proves the result for $f$.
Applying the scalar argument to each dictionary coordinate and
summing proves the result for fixed-dimensional $g$.
\end{proof}

\begin{lemma}[Expected residual-weighted stability of linear regression]
\label{prop:linear-weighted-stability}
Suppose that $Y_j$ and each coordinate of $\phi(\bar X_j)$ have uniformly bounded sub-Gaussian norms,
and also that all doubly deleted Gram matrices are
nonsingular almost surely for sufficiently large $K$ and,
for some $\delta>0$,
\begin{equation}
\label{eq:linear-weighted-inverse}
\max_{1\leq t\leq K}\E\Big[
\max_{1\leq t'\leq K}
\Big\|
\big(
\frac1K
\sum_{j\notin r_t\cup r_{t'}}
\bar Y_{j,p}\bar Y_{j,p}^\top
\big)^{-1}
\Big\|_{\mathrm{op}}^{3+\delta}
\Big]
=O(1).
\end{equation}
Then, for $q\in\{f,g\}$,
\begin{align}
\E[\tilde S_q+\tilde D_q]
&\leq
C\frac{p^3(\log K)^4}{K^2},
\label{eq:linear-weighted-stability-rate}
\end{align}
In particular,
\[
(l_K+r_K)\E[\tilde S_q]+K\E[\tilde D_q]=o(1).
\]
\end{lemma}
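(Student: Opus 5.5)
Let $M_K:=\sum_j\bar Y_{j,p}\bar Y_{j,p}^\top$, and let $M_K^{-t}$ and $M_K^{-t,t'}$ be the same sum over $j\notin r_t$ and over $j\notin r_t\cup r_{t'}$. I would work from the same normal-equation identity as in Lemma~\ref{lem:linear-regression-stability}. I would write the coefficient perturbations through the doubly deleted Gram matrices, so that every inverse appearing is covered by \eqref{eq:linear-weighted-inverse}. Subtracting normal equations gives
\[
\hat\beta^{-t}-\hat\beta^{-t,t'}=(M_K^{-t})^{-1}\sum_{j\in r_{t'}\setminus r_t}\bar Y_{j,p}\big(X_j-\bar Y_{j,p}^\top\hat\beta^{-t,t'}\big).
\]
Here I would replace $(M_K^{-t})^{-1}$ by the reverse formulation with $(M_K^{-t,t'})^{-1}$. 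Concretely, I would solve for $\hat\beta^{-t}$ in terms of $\hat\beta^{-t,t'}$ using the smaller Gram matrix and the residuals of the larger fit, which avoids needing moments of the singly deleted inverse. For $\tilde S_q$ I would do the same with $\hat\beta-\hat\beta^{-t}$, using the Gram matrix $M_K^{-t}=M_K^{-t,t}$, which is covered by the case $t'=t$ of \eqref{eq:linear-weighted-inverse}.

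The pointwise bound I am aiming for has the form
\[
\max_{t'}\|\hat\beta^{-t}-\hat\beta^{-t,t'}\|_2\le K^{-1}\Lambda_t\,|r|\,R_K^2(1+\|\hat\beta^{-t,t'}\|_2),
\]
where $\Lambda_t=\max_{t'}\|(K^{-1}M_K^{-t,t'})^{-1}\|_{\mathrm{op}}$, $|r|=O(\log K)$ by \eqref{eq:linear-purged-size}, and $R_K=\max_j(\|\bar Y_{j,p}\|_2+|X_j|)$. The same normal-equation argument as in Lemma~\ref{lem:linear-regression-stability} gives $\|\hat\beta^{-t,t'}\|_2\le\Lambda_t^{1/2}(K^{-1}\sum_jX_j^2)^{1/2}$, and a similar bound holds for $\hat\beta$ in the $\tilde S$ case. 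Then
\[
\tilde D_g\le\frac1K\sum_t\varepsilon_t^2\|\bar Y_{t,p}\|_2^2\max_{t'}\|\hat\beta^{-t}-\hat\beta^{-t,t'}\|_2^2,
\]
and $\tilde S_g$ is handled identically. For $f$ the multiplier is $\|\boldsymbol\xi_t\|_2^2$, and the training response $X_j$ is replaced by $Y_{j+1}$.

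Next I take expectations term by term, with $\E[\tilde D_g]\le K^{-1}\sum_t\E[\varepsilon_t^2\|\bar Y_{t,p}\|_2^2\Lambda_t^{3}(\log K)^2R_K^4K^{-2}\cdots]$. Hölder's inequality then separates the factors. The inverse factor $\Lambda_t$ appears to power about $2$ to $3$, and the exponent $3+\delta$ in \eqref{eq:linear-weighted-inverse} is set up precisely so that after Hölder a power $3$ of $\Lambda_t$, with room to spare, has bounded expectation. All remaining factors are polynomials in sub-Gaussian variables. For these I use $\E[R_K^{m}]\le C(p\log K)^{m/2}$ from the maximal inequality for sub-Gaussians, together with $\E[\|\bar Y_{t,p}\|_2^m]\le Cp^{m/2}$ and bounded moments of $\varepsilon_t$ and $\boldsymbol\xi_t$. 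The sub-Gaussian assumption on $Y$ and $\phi$ implies this for the residuals, because the linear $g$ has bounded coefficients.

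Counting powers gives $K^{-2}$ from the squared perturbation, $(\log K)^2$ from $|r|^2$, and $p^{2}(\log K)^2$ from $R_K^4$, times $p$ from $\|\bar Y_{t,p}\|_2^2$. This yields $C p^3(\log K)^4K^{-2}$, i.e. \eqref{eq:linear-weighted-stability-rate}. Since $p,l_K,r_K=O(\log K)$, both $(l_K+r_K)\E[\tilde S_q]$ and $K\E[\tilde D_q]$ are $O((\log K)^{8}/K)=o(1)$.

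I expect the main obstacle to be the Hölder bookkeeping. The factor $\Lambda_t$ appears both in the perturbation (squared) and in the bound on $\|\hat\beta^{-t,t'}\|_2^2$, so its total power is near $3$. This must be matched against the $3+\delta$ moments available, while all other factors absorb only a tiny Hölder exponent. I would first try exponents $(3+\delta)/3$ for the inverse factor and a large conjugate exponent for the sub-Gaussian factors, whose high moments cost only polylogarithmic factors. If the power of $\Lambda_t$ exceeds $3$, I would bound $\|\hat\beta^{-t,t'}\|_2$ instead through the projection identity and the known-bounded true $\beta$ plus the noise term, to save a half power.
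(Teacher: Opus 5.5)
Your proposal is correct and is essentially the paper's argument: subtract normal equations, control every inverse by the doubly deleted Gram bound (with single deletion as the case $t'=t$), and bound $\|\hat\beta\|^2$ by the projection identity so that $\Lambda_t$ enters the squared perturbation to power exactly $3$, which makes your fallback unnecessary; then use sub-Gaussian maximal moments and H\"older against the $3+\delta$ inverse moments, while the paper merely routes $\hat\beta^{-t}-\hat\beta^{-t,t'}$ through the full fit by the triangle inequality instead of bounding it directly. One small correction: the bounded residual moments follow from conditional Jensen, since $\varepsilon_t$ and $\boldsymbol\xi_t$ are sub-Gaussian variables centred by their conditional means, and not from bounded linear coefficients, which are not assumed.
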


\begin{proof}
We first consider the scalar source regression
$\phi(\bar X_t)=X_t$.
Uniform sub-Gaussian moment bounds imply, for every fixed
$a\geq1$,
\begin{equation}
\label{eq:linear-weighted-max-moments}
\begin{aligned}
\E[\max_j|X_j|^{2a}]
&\leq C(\log K)^a,\\
\E[\max_j\|\bar Y_{j,p}\|_2^{2a}]
&\leq Cp^a(\log K)^a,\\
\E\big[
\big(
\frac1K\sum_jX_j^2
\big)^a
\big]
&\leq C.
\end{aligned}
\end{equation}
The first two bounds follow from a union bound and the third
from Jensen's inequality.
The initial lags are included in the corresponding maximum
over the primitive $Y$ observations.

The definitions of the population residuals and conditional
Jensen's inequality give
\[
\sup_{K,t}
\E\big[
|\varepsilon_t|^{4a}
+\|\boldsymbol\xi_t\|_2^{4a}
\big]\leq C.
\]
Since
$\E\|\bar Y_{t,p}\|_2^{4a}\leq Cp^{2a}$,
Cauchy--Schwarz yields
\begin{equation}
\label{eq:linear-weighted-evaluation-moments}
\begin{aligned}
\max_t
\E\big[
\{\varepsilon_t^2\|\bar Y_{t,p}\|_2^2\}^a
\big]
&\leq Cp^a,\\
\max_t
\E\big[
\{\|\boldsymbol\xi_t\|_2^2
\|\bar Y_{t,p}\|_2^2\}^a
\big]
&\leq Cp^a,\\
\max_t\E[\|\bar Y_{t,p}\|_2^{2a}]
&\leq Cp^a.
\end{aligned}
\end{equation}

The full Gram matrix dominates every retained Gram matrix.
For each fixed $t$, its normalized inverse is therefore
bounded in operator norm by
\[
\max_{t'}
\Big\|
\big(
\frac1K
\sum_{j\notin r_t\cup r_{t'}}
\bar Y_{j,p}\bar Y_{j,p}^\top
\big)^{-1}
\Big\|_{\mathrm{op}}.
\]
The least-squares projection identity also gives
\begin{equation}
\label{eq:linear-weighted-beta-bound}
\|\hat\beta\|_2^2
\leq
\Big\|
\big(
\frac1K\sum_j
\bar Y_{j,p}\bar Y_{j,p}^\top
\big)^{-1}
\Big\|_{\mathrm{op}}
\frac1K\sum_jX_j^2.
\end{equation}
In particular,
\[
\max_j|\bar Y_{j,p}^\top\hat\beta-X_j|
\leq
\max_j\|\bar Y_{j,p}\|_2\|\hat\beta\|_2
+\max_j|X_j|.
\]

Subtracting the full and retained normal equations gives
\[
\hat\beta^{-t,t'}-\hat\beta
=
\big(
\sum_{j\notin r_t\cup r_{t'}}
\bar Y_{j,p}\bar Y_{j,p}^\top
\big)^{-1}
\sum_{j\in r_t\cup r_{t'}}
\bar Y_{j,p}
(\bar Y_{j,p}^\top\hat\beta-X_j).
\]
Taking $t'=t$ includes single deletions.
Thus, for each $t$,
\begin{align}
&\max_{t'}
\|\hat\beta^{-t,t'}-\hat\beta\|_2
\notag\\
&\quad\leq
\frac{2\max_s|r_s|}{K}
\max_{t'}
\Big\|
\big(
\frac1K
\sum_{j\notin r_t\cup r_{t'}}
\bar Y_{j,p}\bar Y_{j,p}^\top
\big)^{-1}
\Big\|_{\mathrm{op}}
\times
\max_j\|\bar Y_{j,p}\|_2
\big(
\max_j\|\bar Y_{j,p}\|_2\|\hat\beta\|_2
+\max_j|X_j|
\big).
\label{eq:linear-weighted-coefficient-bound}
\end{align}

Substitute \eqref{eq:linear-weighted-beta-bound} into
\eqref{eq:linear-weighted-coefficient-bound}.
The largest inverse power after raising to $2+\delta/3$ is
\[
\frac32(2+\delta/3)
=3+\delta/2<3+\delta.
\]
H\"older's inequality, the assumed inverse-moment bound and
\eqref{eq:linear-weighted-max-moments} therefore imply
\begin{equation}
\label{eq:linear-weighted-coefficient-moments}
\max_t
\E\big[
\max_{t'}
\|\hat\beta^{-t,t'}-\hat\beta\|_2^{2+\delta/3}
\big]
\leq
C\big\{
p(\log K)^2K^{-1}
\big\}^{2+\delta/3}.
\end{equation}
The constants are uniform in $t$.
The predictor maxima contribute at most $p\log K$, while
$\max_s|r_s|=O(\log K)$ supplies the remaining logarithmic
factor.

For the additional deletion,
\begin{align*}
\max_{t'}
\|\hat\beta^{-t}-\hat\beta^{-t,t'}\|_2
&\leq
\|\hat\beta^{-t}-\hat\beta\|_2
+\max_{t'}
\|\hat\beta^{-t,t'}-\hat\beta\|_2\\
&\leq
2\max_{t'}
\|\hat\beta^{-t,t'}-\hat\beta\|_2.
\end{align*}
All fits are evaluated at the original predictors, so
\begin{equation}
\label{eq:linear-weighted-prediction-deletion}
\tilde S_g+\tilde D_g
\leq
\frac5K\sum_t
\varepsilon_t^2\|\bar Y_{t,p}\|_2^2
\max_{t'}
\|\hat\beta^{-t,t'}-\hat\beta\|_2^2.
\end{equation}
Applying H\"older separately at each $t$, with exponents
$1+6/\delta$ and $1+\delta/6$, gives
\begin{align*}
\E[\tilde S_g+\tilde D_g]
&\leq
\frac5K\sum_t
\bigg\{
\E\Big[
\{\varepsilon_t^2\|\bar Y_{t,p}\|_2^2\}^{1+6/\delta}
\Big]
\bigg\}^{\delta/(6+\delta)}\\
&\quad{}\times
\bigg\{
\E\Big[
\max_{t'}
\|\hat\beta^{-t,t'}-\hat\beta\|_2^{2+\delta/3}
\Big]
\bigg\}^{6/(6+\delta)}\\
&\leq
Cp\big\{
p(\log K)^2K^{-1}
\big\}^2\\
&=
C\frac{p^3(\log K)^4}{K^2},
\end{align*}
using
\eqref{eq:linear-weighted-evaluation-moments} and
\eqref{eq:linear-weighted-coefficient-moments}.
For the outcome regression, replace the training response
$X_j$ by $Y_{j+1}$ and the evaluation multiplier by
$\|\boldsymbol\xi_t\|_2^2$.
The corresponding bounds in
\eqref{eq:linear-weighted-evaluation-moments} prove both
conclusions for $f$.
For fixed-dimensional $g$, apply the scalar argument
coordinatewise and combine the finitely many first- and
second-moment bounds.

Finally, $p=O(\log K)$ and $l_K+r_K=O(\log K)$ imply
\[
(l_K+r_K)\E[\tilde S_q]+K\E[\tilde D_q]
\leq
C\frac{p^3(\log K)^4}{K}
\big\{1+(l_K+r_K)/K\big\}
=o(1),
\]
\end{proof}

\subsection{Time-varying coefficient kernel regression}
\label{app:tvcm-weighted-stability}

Consider
\[
X_t=g_t(\bar Y_{t,p})+\xi_t
   =\bar Y_{t,p}^\top\boldsymbol\beta(t)+\xi_t,
\quad
\E[\xi_t\given H_t^{Y,Z}]=0,
\]
where the deterministic coefficients may vary with $t$ and
$K$.
Let
\[
W_h(j-t)=\mathcal K((j-t)/(Kh)),
\]
where $\mathcal K$ is nonnegative, bounded and supported on
$[-1,1]$.
The locally weighted least-squares estimator is
\begin{equation}
\label{eq:tvcm-full-estimator}
\hat{\boldsymbol\beta}(t)
=
\big(
\sum_jW_h(j-t)
\bar Y_{j,p}\bar Y_{j,p}^\top
\big)^{-1}
\sum_jW_h(j-t)\bar Y_{j,p}X_j,
\end{equation}
on nonsingular samples, and
\[
\hat g_t(\bar Y_{t,p})
=
\bar Y_{t,p}^\top\hat{\boldsymbol\beta}(t).
\]
Deleted fits use the same kernel and bandwidth and omit
$r_t$ or $r_t\cup r_{t'}$.

\begin{lemma}[Probability stability of kernel regression]
\label{lem:tvcm-kernel-stability}
Suppose that, for some $\delta>0$,
\begin{equation}
\label{eq:tvcm-moment}
\sup_{K,j}
\E\big[
\|\phi(\bar X_j)\|_2^{4+\delta}
+|Y_j|^{4+\delta}
\big]<\infty.
\end{equation}
Suppose $h=h_K$ and
\begin{equation}
\label{eq:tvcm-bandwidth}
(\log K)^{5/2}
K^{-\delta/\{2(4+\delta)\}}
=o(h).
\end{equation}
For some $c>0$, suppose also that
\begin{equation}
\label{eq:tvcm-local-conditions}
\begin{aligned}
&\Pr\Big[
\min_t\lambda_{\min}\big(
\frac1{Kh}\sum_jW_h(j-t)
\bar Y_{j,p}\bar Y_{j,p}^\top
\big)\geq c
\Big]\rightarrow1.
\end{aligned}
\end{equation}
Then
\[
S_q
=
O_p\big(
p^3(\log K)^2h^{-2}
K^{-2+4/(4+\delta)}
\big)
=o_p(K^{-1}),
\quad q\in\{f,g\}.
\]
\end{lemma}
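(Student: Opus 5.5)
The plan mirrors the proof of Lemma~\ref{lem:linear-regression-stability}, with the global Gram matrix replaced by the local kernel Gram matrix at each evaluation time $t$. Treat first the scalar case $\phi(\bar X_t)=X_t$. Start from the moment bound \eqref{eq:tvcm-moment}; the union bound and Markov's inequality then give \eqref{eq:linear-max-bound} and \eqref{eq:linear-average-bound} unchanged, namely $\max_j|X_j|=O_p(K^{1/(4+\delta)})$ and $\max_j\|\bar Y_{j,p}\|_2=O_p(\sqrt p\,K^{1/(4+\delta)})$.

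Write $G_t=\sum_jW_h(j-t)\bar Y_{j,p}\bar Y_{j,p}^\top$ and $G_t^{-t}$ for its version with the rows $r_t$ removed. On the event in \eqref{eq:tvcm-local-conditions}, $\lambda_{\min}(G_t)\ge cKh$ uniformly in $t$. Since $\mathcal K$ is bounded, the deleted rows satisfy
\[
\max_t\Big\|\sum_{j\in r_t}W_h(j-t)\bar Y_{j,p}\bar Y_{j,p}^\top\Big\|_{\mathrm{op}}\le \|\mathcal K\|_\infty\max_t|r_t|\max_j\|\bar Y_{j,p}\|_2^2=O_p\big(p\log K\,K^{2/(4+\delta)}\big).
\]
This is $o_p(Kh)$ by the bandwidth condition \eqref{eq:tvcm-bandwidth}, because $p\log K\,K^{2/(4+\delta)}/(Kh)$ equals $(\log K)^2K^{-(2+\delta)/(4+\delta)}h^{-1}$, and this is far smaller than needed. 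Weyl's inequality then gives $\max_t\|(G_t^{-t})^{-1}\|_{\mathrm{op}}=O_p((Kh)^{-1})$.

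Next bound the local coefficient. The local projection identity $\sum_jW_h(j-t)(\bar Y_{j,p}^\top\hat{\boldsymbol\beta}(t))^2\le\sum_jW_h(j-t)X_j^2$ and the eigenvalue bound give
\[
\|\hat{\boldsymbol\beta}(t)\|_2^2\le (cKh)^{-1}\|\mathcal K\|_\infty\sum_jX_j^2=O_p(h^{-1}).
\]
This is not uniformly $O_p(1)$ without a local average bound. To keep the argument simple, I would use the crude uniform bound $\max_t\|\hat{\boldsymbol\beta}(t)\|_2\le c^{-1/2}\|\mathcal K\|_\infty^{1/2}\max_j|X_j|$, which follows since $(Kh)^{-1}\sum_jW_h(j-t)X_j^2\le 2\|\mathcal K\|_\infty\max_j X_j^2$. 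The local residual maximum is then $\max_j|\bar Y_{j,p}^\top\hat{\boldsymbol\beta}(t)-X_j|=O_p(\sqrt p\,K^{2/(4+\delta)})$ at worst. Substituting this loses powers of $K^{1/(4+\delta)}$, so if it breaks the stated rate I would instead bound $\max_t(Kh)^{-1}\sum_jW_h(j-t)X_j^2$ by $O_p(1)$ via a uniform local law of large numbers. This choice is the main obstacle. For the stated rate $p^3(\log K)^2h^{-2}K^{-2+4/(4+\delta)}$, the estimate I need is $\max_t\|\hat{\boldsymbol\beta}(t)\|_2=O_p(1)$ up to logarithms, and I expect the authors either assume this implicitly or obtain it from the local Gram condition together with a local second-moment bound.

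Granting that estimate, subtract the full and deleted local normal equations, exactly as in the linear case:
\[
\hat{\boldsymbol\beta}^{-t}(t)-\hat{\boldsymbol\beta}(t)=(G_t^{-t})^{-1}\sum_{j\in r_t}W_h(j-t)\bar Y_{j,p}\big(\bar Y_{j,p}^\top\hat{\boldsymbol\beta}(t)-X_j\big).
\]
This gives $\max_t\|\hat{\boldsymbol\beta}^{-t}(t)-\hat{\boldsymbol\beta}(t)\|_2=O_p(p\log K\,h^{-1}K^{-1+2/(4+\delta)})$. Then
\[
S_g\le\Big(\frac1K\sum_t\|\bar Y_{t,p}\|_2^2\Big)\max_t\|\hat{\boldsymbol\beta}(t)-\hat{\boldsymbol\beta}^{-t}(t)\|_2^2=O_p\big(p^3(\log K)^2h^{-2}K^{-2+4/(4+\delta)}\big).
\]
With $p=O(\log K)$, $KS_g=O_p\big((\log K)^5h^{-2}K^{-\delta/(4+\delta)}\big)$, and this is $o_p(1)$ precisely under \eqref{eq:tvcm-bandwidth}.

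For $f$, replace $X_j$ by $Y_{j+1}$ throughout. For a fixed-dimensional dictionary, apply the scalar argument to each coordinate of $\phi(\bar X_t)$ and sum.
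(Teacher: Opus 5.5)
Your outline has a genuine gap at the step you flag yourself. The stated rate rests on $\max_t\|\hat{\boldsymbol\beta}(t)\|_2=O_p(1)$ up to logarithms, and nothing in the hypotheses delivers this.

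The lemma imposes no mixing or other dependence condition, so no uniform local law of large numbers for $\max_t(Kh)^{-1}\sum_jW_h(j-t)X_j^2$ is available. The estimate can in fact fail. Take $Y$ i.i.d.\ Gaussian and $X_j=M\mathbf 1\{j\in B\}Y_j$, where $B$ is a block of length of order $Kh$ placed uniformly at random, independently of $Y$, and $M=h^{-1/(4+\delta)}$. The $(4+\delta)$-moment bound holds, and the local Gram condition, which involves only $Y$, is unaffected. Yet $\|\hat{\boldsymbol\beta}(t)\|_2$ is of order $M$ near the centre of $B$, and this grows polynomially in $K$ when $h$ is near the lower end allowed by \eqref{eq:tvcm-bandwidth}.

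Your crude fallback $\max_t\|\hat{\boldsymbol\beta}(t)\|_2=O_p(K^{1/(4+\delta)})$ does not rescue the argument either. Propagated through your steps, it gives only $KS_g=O_p\big(p^3(\log K)^2h^{-2}K^{(2-\delta)/(4+\delta)}\big)$. This is not $o_p(1)$ under \eqref{eq:tvcm-bandwidth}, and for $\delta\le 2$ the bound does not decay at all.

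The missing idea, which is how the paper proceeds, is not to take a maximum over $t$ before averaging. On the Gram event, the normal-equation difference you wrote gives, for each $t$,
\[
\|\hat{\boldsymbol\beta}^{-t}(t)-\hat{\boldsymbol\beta}(t)\|_2\le\frac{C\log K}{Kh}\max_j\|\bar Y_{j,p}\|_2\Big(\max_j\|\bar Y_{j,p}\|_2\,\|\hat{\boldsymbol\beta}(t)\|_2+\max_j|X_j|\Big).
\]
The weighted projection identity gives the pointwise bound $\|\hat{\boldsymbol\beta}(t)\|_2^2\le c^{-1}(Kh)^{-1}\sum_jW_h(j-t)X_j^2$. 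Insert both into $S_g\le K^{-1}\sum_t\|\bar Y_{t,p}\|_2^2\|\hat{\boldsymbol\beta}(t)-\hat{\boldsymbol\beta}^{-t}(t)\|_2^2$, keeping the $t$-dependent factor inside the average.

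The only new quantity is then $K^{-1}\sum_t\|\bar Y_{t,p}\|_2^2\,(Kh)^{-1}\sum_jW_h(j-t)X_j^2$. Its expectation is at most $Cp$ by Cauchy--Schwarz, using $\E\|\bar Y_{t,p}\|_2^4\le Cp^2$, $\E X_j^4\le C$ and $\sup_t(Kh)^{-1}\sum_jW_h(j-t)\le C$. Markov's inequality then makes it $O_p(p)$. You were close in guessing that a local second-moment bound is used, but it is needed only on average over $t$, not uniformly.

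Combine this with $\max_j\|\bar Y_{j,p}\|_2^4=O_p(p^2K^{4/(4+\delta)})$ and $\max_j\|\bar Y_{j,p}\|_2^2\max_jX_j^2=O_p(pK^{4/(4+\delta)})$. The result is exactly $S_g=O_p\big(p^3(\log K)^2h^{-2}K^{-2+4/(4+\delta)}\big)$, from the moment hypothesis alone. The rest of your outline agrees with the paper: the maxima, the Weyl step for the deleted local Gram inverse, the final bandwidth arithmetic, and the extensions to $f$ and vector-valued $\phi$.
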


\begin{proof}
We first consider the scalar source regression
$\phi(\bar X_t)=X_t$.
The uniform moment bounds, Markov's inequality and the union
bound give
\[
\max_j|X_j|=O_p(K^{1/(4+\delta)}),
\quad
\max_j\|\bar Y_{j,p}\|_2
=O_p(\sqrt p\,K^{1/(4+\delta)}).
\]
Also,
\[
\E\big[
\frac1K\sum_t\|\bar Y_{t,p}\|_2^2
\big]\leq Cp,
\quad
\frac1K\sum_t\|\bar Y_{t,p}\|_2^2=O_p(p).
\]

Condition~\eqref{eq:tvcm-bandwidth} implies $Kh\to\infty$.
Since the kernel is bounded and supported on $[-1,1]$,
\[
\sup_t\frac1{Kh}\sum_jW_h(j-t)\leq C.
\]
The uniform fourth moments imply
$\E\|\bar Y_{t,p}\|_2^4\leq Cp^2$.
Consequently, Cauchy--Schwarz gives
\begin{align*}
&\E\bigg[
\frac1K\sum_t\|\bar Y_{t,p}\|_2^2
\big(
\frac1{Kh}\sum_jW_h(j-t)X_j^2
\big)
\bigg]\\
&\quad\leq
\frac1K\sum_t\frac1{Kh}\sum_jW_h(j-t)
\{\E\|\bar Y_{t,p}\|_2^4\}^{1/2}
\{\E|X_j|^4\}^{1/2}\\
&\quad\leq Cp.
\end{align*}
Markov's inequality therefore yields
\begin{equation}
\label{eq:tvcm-average-response-bound}
\frac1K\sum_t\|\bar Y_{t,p}\|_2^2
\big(
\frac1{Kh}\sum_jW_h(j-t)X_j^2
\big)
=O_p(p).
\end{equation}

For a single deletion,
\begin{align*}
&\max_t
\Big\|
\sum_{j\in r_t}W_h(j-t)
\bar Y_{j,p}\bar Y_{j,p}^\top
\Big\|_{\mathrm{op}}\\
&\quad\leq
\|\mathcal K\|_\infty
\max_t|r_t|\max_j\|\bar Y_{j,p}\|_2^2\\
&\quad=
O_p\big(p\log K\,K^{2/(4+\delta)}\big)\\
&\quad=o_p(Kh),
\end{align*}
where the last line follows from
\eqref{eq:tvcm-bandwidth} and $p=O(\log K)$.
The uniform lower eigenvalue condition and Weyl's inequality
thus imply that, with probability tending to one,
\[
\min_t
\lambda_{\min}\big(
\sum_{j\notin r_t}W_h(j-t)
\bar Y_{j,p}\bar Y_{j,p}^\top
\big)
\geq cKh/2.
\]
On the same event,
\begin{equation}
\label{eq:tvcm-deleted-inverse}
\max_t
\Big\|
\big(
\sum_{j\notin r_t}W_h(j-t)
\bar Y_{j,p}\bar Y_{j,p}^\top
\big)^{-1}
\Big\|_{\mathrm{op}}
\leq\frac{2}{cKh}.
\end{equation}

The weighted least-squares projection identity gives
\[
\sum_jW_h(j-t)
\{\bar Y_{j,p}^\top\hat{\boldsymbol\beta}(t)\}^2
\leq
\sum_jW_h(j-t)X_j^2.
\]
Hence, on the full Gram event, for every $t$,
\[
\|\hat{\boldsymbol\beta}(t)\|_2^2
\leq
\frac1c\frac1{Kh}\sum_jW_h(j-t)X_j^2.
\]

Subtracting the full and retained weighted normal equations
gives
\begin{align*}
&\hat{\boldsymbol\beta}^{-t}(t)
-\hat{\boldsymbol\beta}(t)\\
&\quad=
\big(
\sum_{j\notin r_t}W_h(j-t)
\bar Y_{j,p}\bar Y_{j,p}^\top
\big)^{-1}\times
\sum_{j\in r_t}W_h(j-t)\bar Y_{j,p}
\{\bar Y_{j,p}^\top\hat{\boldsymbol\beta}(t)-X_j\}.
\end{align*}
Using \eqref{eq:tvcm-deleted-inverse}, boundedness of the
kernel and $\max_t|r_t|=O(\log K)$, we obtain, simultaneously
for every $t$ on the preceding event,
\begin{align*}
&\|\hat{\boldsymbol\beta}^{-t}(t)
-\hat{\boldsymbol\beta}(t)\|_2\leq
\frac{C\log K}{Kh}
\max_j\|\bar Y_{j,p}\|_2
\big(
\max_j\|\bar Y_{j,p}\|_2
\|\hat{\boldsymbol\beta}(t)\|_2
+\max_j|X_j|
\big).
\end{align*}
Therefore,
\begin{align*}
S_g
&=
\frac1K\sum_t
\big|
\bar Y_{t,p}^\top
\{\hat{\boldsymbol\beta}(t)
-\hat{\boldsymbol\beta}^{-t}(t)\}
\big|^2\\
&\leq
\frac{C(\log K)^2}{K^2h^2}
\bigg[
\max_j\|\bar Y_{j,p}\|_2^4
\frac1K\sum_t\|\bar Y_{t,p}\|_2^2
\big(
\frac1{Kh}\sum_jW_h(j-t)X_j^2
\big)\\
&\hspace{45mm}{}
+\max_j\|\bar Y_{j,p}\|_2^2
\max_j|X_j|^2
\frac1K\sum_t\|\bar Y_{t,p}\|_2^2
\bigg].
\end{align*}
The maximum bounds and
\eqref{eq:tvcm-average-response-bound} now imply
\begin{align*}
S_g
&=
O_p\big(
p^3(\log K)^2h^{-2}
K^{-2+4/(4+\delta)}
\big)\\
&=
K^{-1}O_p\bigg[
\big\{
p^{3/2}\log K\,
K^{-\delta/\{2(4+\delta)\}}/h
\big\}^2
\bigg]\\
&=o_p(K^{-1}),
\end{align*}
because $p=O(\log K)$ and
\eqref{eq:tvcm-bandwidth} holds.

Replacing $X_j$ by $Y_{j+1}$ proves the outcome-regression
result under the same moment assumptions.
For fixed-dimensional $g$, apply the scalar argument to each
coordinate of $\phi(\bar X_j)$ and sum the finitely many
coordinatewise bounds.
\end{proof}

\begin{lemma}[Expected residual-weighted stability of kernel regression]
\label{lem:tvcm-weighted-stability}
Suppose that $Y_j$ and each coordinate of $\phi(\bar X_j)$ have uniformly bounded sub-Gaussian norms, and $h=h_K$ such that
\begin{equation}
\label{eq:tvcm-weighted-bandwidth}
\frac{p^{3/2}(\log K)^2}{\sqrt K}=o(h).
\end{equation}
Suppose also that all doubly deleted local Gram matrices are
nonsingular almost surely for sufficiently large $K$ and,
for some $\delta>0$,
\begin{equation}
\label{eq:tvcm-weighted-inverse}
\max_{1\leq t\leq K}
\E\bigg[
\max_{1\leq t'\leq K}
\Big\|
\big(
\frac1{Kh}
\sum_{j\notin r_t\cup r_{t'}}
W_h(j-t)\bar Y_{j,p}\bar Y_{j,p}^\top
\big)^{-1}
\Big\|_{\mathrm{op}}^{3+\delta}
\bigg]
=O(1).
\end{equation}
Then, for $q\in\{f,g\}$,
\begin{align}
\E[\tilde S_q+\tilde D_q]
&\leq
C\frac{p^3(\log K)^4}{K^2h^2},
\label{eq:tvcm-weighted-stability-rate}
\end{align}
In particular,
\[
(l_K+r_K)\E[\tilde S_q]+K\E[\tilde D_q]=o(1).
\]
\end{lemma}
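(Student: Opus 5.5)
The proof will follow the linear-regression argument of Lemma~\ref{prop:linear-weighted-stability}, with the global Gram matrix replaced by the kernel-localised Gram matrix at each evaluation time $t$. Throughout, the normalisation is $1/(Kh)$ instead of $1/K$. As before, I first treat the scalar source regression $\phi(\bar X_t)=X_t$ and obtain the outcome regression by replacing $X_j$ with $Y_{j+1}$.

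\textbf{Step 1: moment bounds.} The sub-Gaussian assumption gives, for each fixed $a\ge1$, the same maximal-moment bounds \eqref{eq:linear-weighted-max-moments}: $\E[\max_j|X_j|^{2a}]\le C(\log K)^a$ and $\E[\max_j\|\bar Y_{j,p}\|_2^{2a}]\le Cp^a(\log K)^a$. It also gives the evaluation moments \eqref{eq:linear-weighted-evaluation-moments}.

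In addition I need a bound on a local average. Let $\bar X^2_t:=\frac1{Kh}\sum_jW_h(j-t)X_j^2$. Since $\sup_t\frac1{Kh}\sum_jW_h(j-t)\le C$, Jensen's inequality gives $\E[(\bar X^2_t)^a]\le C$ uniformly in $t$.

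\textbf{Step 2: coefficient bound.} By the weighted projection identity, domination of every retained local Gram matrix by the full one, and the inverse-moment assumption \eqref{eq:tvcm-weighted-inverse},
\[
\|\hat{\boldsymbol\beta}(t)\|_2^2\le M_t\,\bar X^2_t,\qquad M_t:=\max_{t'}\Big\|\Big(\tfrac1{Kh}\sum_{j\notin r_t\cup r_{t'}}W_h(j-t)\bar Y_{j,p}\bar Y_{j,p}^\top\Big)^{-1}\Big\|_{\mathrm{op}} .
\]

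\textbf{Step 3: deletion difference.} Subtracting the full and retained weighted normal equations gives
\[
\hat{\boldsymbol\beta}^{-t,t'}(t)-\hat{\boldsymbol\beta}(t)=\Big(\sum_{j\notin r_t\cup r_{t'}}W_h(j-t)\bar Y_{j,p}\bar Y_{j,p}^\top\Big)^{-1}\sum_{j\in r_t\cup r_{t'}}W_h(j-t)\bar Y_{j,p}\{\bar Y_{j,p}^\top\hat{\boldsymbol\beta}(t)-X_j\}.
\]
The kernel is bounded and $|r_t\cup r_{t'}|=O(\log K)$, so
\[
\max_{t'}\|\hat{\boldsymbol\beta}^{-t,t'}(t)-\hat{\boldsymbol\beta}(t)\|_2\le \frac{C\log K}{Kh}\,M_t\max_j\|\bar Y_{j,p}\|_2\Big(\max_j\|\bar Y_{j,p}\|_2\,\|\hat{\boldsymbol\beta}(t)\|_2+\max_j|X_j|\Big).
\]

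\textbf{Step 4: moment of the deletion difference.} Raise the Step~3 bound to the power $2+\delta/3$. The highest power of $M_t$ that appears is $\tfrac32(2+\delta/3)<3+\delta$. H\"older's inequality, the Step~1 moments and \eqref{eq:tvcm-weighted-inverse} then give
\[
\max_t\E\big[\max_{t'}\|\hat{\boldsymbol\beta}^{-t,t'}(t)-\hat{\boldsymbol\beta}(t)\|_2^{2+\delta/3}\big]\le C\{p(\log K)^2/(Kh)\}^{2+\delta/3}.
\]
This is the analogue of \eqref{eq:linear-weighted-coefficient-moments}.

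\textbf{Step 5: residual-weighted stability.} All fits are evaluated at the original predictors, and $\hat{\boldsymbol\beta}^{-t}-\hat{\boldsymbol\beta}^{-t,t'}$ is bounded by twice the maximal double deletion. Hence
\[
\tilde S_g+\tilde D_g\le \frac5K\sum_t\varepsilon_t^2\|\bar Y_{t,p}\|_2^2\max_{t'}\|\hat{\boldsymbol\beta}^{-t,t'}(t)-\hat{\boldsymbol\beta}(t)\|_2^2 .
\]
Applying H\"older at each $t$ with exponents $1+6/\delta$ and $1+\delta/6$ yields $\E[\tilde S_g+\tilde D_g]\le Cp\{p(\log K)^2/(Kh)\}^2$, which is \eqref{eq:tvcm-weighted-stability-rate}.

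For $f$, replace the training response by $Y_{j+1}$ and the multiplier by $\|\boldsymbol\xi_t\|_2^2$. For a fixed-dimensional dictionary, apply the argument coordinatewise. For the final claim, \eqref{eq:tvcm-weighted-bandwidth} gives $K\E[\tilde D_q]\le Cp^3(\log K)^4/(Kh^2)=o(1)$, and the $(l_K+r_K)\E[\tilde S_q]$ term is smaller by a factor $O(\log K)/K$.

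\textbf{Main obstacle.} I expect Step~4 to be the main difficulty: controlling the randomness of the $t$-dependent coefficient $\hat{\boldsymbol\beta}(t)$ and of the inverse local Gram matrix jointly, uniformly in $t$. The exponent bookkeeping has to keep the power of $M_t$ below $3+\delta$, and the local-average moment of $\bar X^2_t$ has to be uniform in $t$. If the joint H\"older step fails, I would first try bounding $\|\hat{\boldsymbol\beta}(t)\|_2$ crudely by $M_t^{1/2}\max_j|X_j|$, which costs only an extra $\log K$ factor.
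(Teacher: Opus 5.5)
Your proposal is correct and follows the paper's proof essentially step for step. The shared ingredients are: the sub-Gaussian maximal moments; the uniform-in-$t$ weighted-Jensen bound on the local average $\frac{1}{Kh}\sum_j W_h(j-t)X_j^2$; the weighted projection bound on $\|\hat{\boldsymbol\beta}(t)\|_2^2$ via domination of the retained local Gram matrices; the normal-equation difference; H\"older at power $2+\delta/3$, where the inverse enters with power $3+\delta/2<3+\delta$; the per-summand H\"older step with exponents $1+6/\delta$ and $1+\delta/6$; and the bandwidth condition for the final $o(1)$ claim.
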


\begin{proof}
We first consider $\phi(\bar X_t)=X_t$.
The sub-Gaussian maximum bounds in
\eqref{eq:linear-weighted-max-moments} remain valid.
Since $Kh\to\infty$ and the kernel is bounded and supported
on $[-1,1]$,
\[
\sup_t\frac1{Kh}\sum_jW_h(j-t)\leq C.
\]
Weighted Jensen's inequality therefore gives, for every fixed
$a\geq1$,
\begin{equation}
\label{eq:tvcm-weighted-local-response}
\max_t
\E\big[
\big(
\frac1{Kh}\sum_jW_h(j-t)X_j^2
\big)^a
\big]
\leq C.
\end{equation}
The same bound holds with $X_j$ replaced by $Y_{j+1}$.

As in the linear-regression proof, the population residuals
have uniformly bounded moments of every fixed order.
Thus
\begin{equation}
\label{eq:tvcm-weighted-evaluation-moments}
\begin{aligned}
\max_t
\E\big[
\{\varepsilon_t^2\|\bar Y_{t,p}\|_2^2\}^a
\big]
&\leq Cp^a,\\
\max_t
\E\big[
\{\|\boldsymbol\xi_t\|_2^2
\|\bar Y_{t,p}\|_2^2\}^a
\big]
&\leq Cp^a,\\
\max_t\E[\|\bar Y_{t,p}\|_2^{2a}]
&\leq Cp^a.
\end{aligned}
\end{equation}

The weighted projection identity gives
\begin{equation}
\label{eq:tvcm-weighted-beta-bound}
\|\hat{\boldsymbol\beta}(t)\|_2^2
\leq
\Big\|
\big(
\frac1{Kh}\sum_jW_h(j-t)
\bar Y_{j,p}\bar Y_{j,p}^\top
\big)^{-1}
\Big\|_{\mathrm{op}}
\frac1{Kh}\sum_jW_h(j-t)X_j^2.
\end{equation}
At each $t$, the full inverse is bounded by the maximum
retained inverse in \eqref{eq:tvcm-weighted-inverse}.

Subtracting the full and retained weighted normal equations
gives
\begin{align*}
&\hat{\boldsymbol\beta}^{-t,t'}(t)
-\hat{\boldsymbol\beta}(t)\\
&\quad=
\big(
\sum_{j\notin r_t\cup r_{t'}}
W_h(j-t)\bar Y_{j,p}\bar Y_{j,p}^\top
\big)^{-1}\times
\sum_{j\in r_t\cup r_{t'}}
W_h(j-t)\bar Y_{j,p}
\{\bar Y_{j,p}^\top\hat{\boldsymbol\beta}(t)-X_j\}.
\end{align*}
Hence
\begin{align}
&\max_{t'}
\|\hat{\boldsymbol\beta}^{-t,t'}(t)
-\hat{\boldsymbol\beta}(t)\|_2
\notag\\
&\quad\leq
\frac{2\|\mathcal K\|_\infty\max_s|r_s|}{Kh}
\max_{t'}
\Big\|
\big(
\frac1{Kh}
\sum_{j\notin r_t\cup r_{t'}}
W_h(j-t)\bar Y_{j,p}\bar Y_{j,p}^\top
\big)^{-1}
\Big\|_{\mathrm{op}}
\notag\\
&\quad\quad\quad\quad\quad\times
\max_j\|\bar Y_{j,p}\|_2
\big(
\max_j\|\bar Y_{j,p}\|_2
\|\hat{\boldsymbol\beta}(t)\|_2
+\max_j|X_j|
\big).
\label{eq:tvcm-weighted-coefficient-bound}
\end{align}
Applying H\"older as in the linear-regression proof yields
\begin{equation}
\label{eq:tvcm-weighted-coefficient-moments}
\max_t
\E\big[
\max_{t'}
\|\hat{\boldsymbol\beta}^{-t,t'}(t)
-\hat{\boldsymbol\beta}(t)\|_2^{2+\delta/3}
\big]
\leq
C\big\{
p(\log K)^2(Kh)^{-1}
\big\}^{2+\delta/3}.
\end{equation}
The largest inverse power is again
$3+\delta/2<3+\delta$.
The local response moments are bounded uniformly in $t$ by
\eqref{eq:tvcm-weighted-local-response}; no maximum over $t$
inside that expectation is required.

The triangle inequality and
$\hat{\boldsymbol\beta}^{-t,t}(t)
=\hat{\boldsymbol\beta}^{-t}(t)$ give
\begin{align*}
\tilde S_g+\tilde D_g
&\leq
\frac5K\sum_t
\varepsilon_t^2\|\bar Y_{t,p}\|_2^2
\max_{t'}
\|\hat{\boldsymbol\beta}^{-t,t'}(t)
-\hat{\boldsymbol\beta}(t)\|_2^2.
\end{align*}
Apply H\"older to each summand, with exponents
$1+6/\delta$ and $1+\delta/6$.
Equations
\eqref{eq:tvcm-weighted-evaluation-moments} and
\eqref{eq:tvcm-weighted-coefficient-moments} imply
\[
\E[\tilde S_g+\tilde D_g]
\leq
Cp\big\{
p(\log K)^2(Kh)^{-1}
\big\}^2
=
C\frac{p^3(\log K)^4}{K^2h^2}.
\]



The outcome and vector-source cases follow by the same
substitutions as in
Lemma~\ref{prop:linear-weighted-stability}.
Finally,
\begin{align*}
(l_K+r_K)\E[\tilde S_q]+K\E[\tilde D_q]
&\leq
C\frac{p^3(\log K)^4}{Kh^2}
\big\{1+(l_K+r_K)/K\big\}\\
&=o(1)
\end{align*}
by \eqref{eq:tvcm-weighted-bandwidth}.
The same bandwidth condition and $p=O(\log K)$ imply
$p^3(\log K)^2/(Kh)\to0$.
\end{proof}

\section{Asymptotic Power}
\label{app:power-results}

\subsection{Complementary Results}

\begin{proposition}
\label{prop:Sigma_xi}
Under Assumption~\ref{ass:errors}(i) and
Assumption~\ref{ass:population_weights}, suppose that, for a finite matrix
\(\Gamma\in\mathbb R^{d\times d}\),
\begin{equation}
\label{eq:local-drift-mean-limit}
\frac1K\sum_{t=1}^K\E[W_t\boldsymbol\xi_t\boldsymbol\xi_t^\top]
\rightarrow\Gamma.
\end{equation}
Then \(K^{-1}\sum_{t=1}^K
W_t\boldsymbol\xi_t\boldsymbol\xi_t^\top\xrightarrow{p}\Gamma\).
\end{proposition}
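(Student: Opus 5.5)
The plan is to copy the variance argument used in Lemma~\ref{lem:leading-term} for \eqref{eq:oracle-conditional-covariance-stabilization}, now applied to the matrices \(W_t\boldsymbol\xi_t\boldsymbol\xi_t^\top\). Define
\[
\mathcal U_t:=W_t\operatorname{vec}(\boldsymbol\xi_t\boldsymbol\xi_t^\top).
\]
This is exactly the process \(\mathcal V_t\) appearing in Assumption~\ref{ass:population_weights}(ii), so condition \eqref{eq:2-mixing} is available for it. Because the \(\mathcal U_t\) are in general not martingale differences, the bound has to come from second-order covariance decay rather than orthogonality.

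We start from the decomposition
\[
\frac1K\sum_{t=1}^K W_t\boldsymbol\xi_t\boldsymbol\xi_t^\top-\Gamma
=\frac1K\sum_{t=1}^K\big(W_t\boldsymbol\xi_t\boldsymbol\xi_t^\top-\E[W_t\boldsymbol\xi_t\boldsymbol\xi_t^\top]\big)
+\Big(\frac1K\sum_{t=1}^K\E[W_t\boldsymbol\xi_t\boldsymbol\xi_t^\top]-\Gamma\Big).
\]
The second term is deterministic, and it vanishes by \eqref{eq:local-drift-mean-limit}. For the first term, we expand the squared norm:
\[
\E\Big\|\frac1K\sum_t(\mathcal U_t-\E\mathcal U_t)\Big\|_2^2=\frac1{K^2}\sum_{t,s}\Tr\Cov(\mathcal U_t,\mathcal U_s).
\]
The diagonal terms contribute at most \(K^{-1}\max_t\E\|\mathcal U_t\|_2^2\le K^{-1}C_W^2\max_t\E\|\boldsymbol\xi_t\|_2^4=O(K^{-1})\). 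This uses the bound \(W_t\le C_W\) from Assumption~\ref{ass:population_weights}(i) and the fourth-moment bound in Assumption~\ref{ass:errors}(i).

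For the off-diagonal terms, we bound \(|\Tr\Cov(\mathcal U_t,\mathcal U_{t+h})|\le d\,\|\Cov(\mathcal U_t,\mathcal U_{t+h})\|_F\). This reduces their total contribution to
\[
\frac{2d}{K}\sum_{h=1}^{K-1}\eta_h,\qquad \eta_h:=\sup_t\|\Cov(\mathcal U_t,\mathcal U_{t+h})\|_F.
\]
By \eqref{eq:2-mixing}, \(\eta_h\to0\) as \(h\to\infty\). Moreover, Cauchy--Schwarz and the same moment bounds give \(\eta_h\le C_W^2\max_t\E\|\boldsymbol\xi_t\|_2^4=O(1)\) uniformly in \(h\). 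The Ces\`aro mean of a bounded sequence tending to zero tends to zero, so \(K^{-1}\sum_{h<K}\eta_h\to0\). Chebyshev's inequality then gives convergence in probability of each entry, and since \(d\) is fixed this yields the matrix statement.

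The only point requiring care is the Ces\`aro step. We need \(\eta_h\) to be uniformly bounded as well as tending to zero, because Lemma~\ref{lem:leading-term} stated the analogous bound only loosely. Under the local alternatives, \(\boldsymbol\xi_t\) does not depend on the response, so Assumption~\ref{ass:errors}(i) applies to it unchanged along the sequence and the argument carries over directly.
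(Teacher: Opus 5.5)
Your proposal is correct and is essentially the paper's own proof: you bound $\E\|K^{-1}\sum_t(\mathcal U_t-\E\mathcal U_t)\|_2^2$ by the diagonal fourth-moment term plus $2dK^{-1}\sum_{h=1}^{K-1}\sup_t\|\Cov(\mathcal U_t,\mathcal U_{t+h})\|_F$, which vanishes by Assumption~\ref{ass:population_weights}(ii) and a Ces\`aro argument, and then finish with Chebyshev and \eqref{eq:local-drift-mean-limit}. One small correction to your closing aside, which the proposition does not need: $\boldsymbol\xi_t$ is centred at $\E[\phi(\bar X_t)\given H_t^{Y,Z}]$, and $H_t^{Y,Z}$ contains past responses, so its law does change along $H_{1,K}(\tau)$; the uniformity you invoke there is therefore an assumption of Theorem~\ref{thm:local_power}, not something that holds automatically.
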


\begin{proof}
The second-moment and covariance bounds give
\begin{align*}
&\E\Big\|\frac1K\sum_{t=1}^K
\big\{W_t\boldsymbol\xi_t\boldsymbol\xi_t^\top
-\E[W_t\boldsymbol\xi_t\boldsymbol\xi_t^\top]\big\}\Big\|_F^2\\
&\quad\leq\frac1K\max_{1\leq t\leq K}
\E[W_t^2\|\boldsymbol\xi_t\|_2^4]\\
&\qquad+\frac{2d}{K^2}\sum_{h=1}^{K-1}(K-h)
\max_{1\leq t\leq K-h}\Big\|\Cov\big(
W_t\operatorname{vec}(\boldsymbol\xi_t\boldsymbol\xi_t^\top),
W_{t+h}\operatorname{vec}(\boldsymbol\xi_{t+h}\boldsymbol\xi_{t+h}^\top)
\big)\Big\|_F\\
&\quad \rightarrow0,
\end{align*}
by Assumption~\ref{ass:population_weights}(ii).
Chebyshev's inequality and \eqref{eq:local-drift-mean-limit} complete the proof.
\end{proof}

\subsection{Local optimality of precision weighting}

Under the local alternative \eqref{eq:local_alt}, Theorem~\ref{thm:local_power}
gives the noncentrality parameter
\begin{equation}
    {\tau^\top\Gamma^\top\Sigma^{-1}\Gamma\tau.}
    \label{eq:weighted-local-noncentrality}
\end{equation}
We compare this quantity across different choices of the population weight. The key additional condition is that the centred innovation
variance, conditional on the full history, is determined by \(H_t^{Y,Z}\).
This allows the same scalar predictable weight to be optimal for every local
direction.

\begin{proposition}[Optimality of precision weighting]
\label{prop:vector-precision-optimal}
Suppose the conditions of Theorem~\ref{thm:local_power} hold with \(W_t=\sigma_t^{-2}\), where \(W_t\) and \(\hat W_t\) satisfy the assumptions in Appendix~\ref{app:weight-assumptions}. Suppose in addition that
\begin{equation}
    \E[
        \zeta_t^2
        \mid H_t^{X,Y,Z}]
    =
    \sigma_t^2,
    \label{eq:variance-history-only}
\end{equation}
where $\sigma_t^2$ is $H_t^{Y,Z}$-measurable and bounded away from zero and
infinity. Suppose also that
\begin{equation}
    \Gamma^\star
    =
    \lim_{K\to\infty}
    \frac{1}{K}\sum_{t=1}^K
    \E\big[
        \sigma_t^{-2}
        \boldsymbol{\xi}_t\boldsymbol{\xi}_t^\top
    \big]
    \label{eq:Gamma-star}
\end{equation}
is positive definite. For any bounded positive $H_t^{Y,Z}$-measurable weight
$\tilde W_t$ satisfying the conditions of Theorem~\ref{thm:local_power}, let
\begin{equation}
    \tilde\Gamma
    =
    \lim_{K\to\infty}
    \frac{1}{K}\sum_{t=1}^K
    \E\big[
        \tilde W_t
        \boldsymbol{\xi}_t\boldsymbol{\xi}_t^\top
    \big],
    ~
    \tilde\Sigma
    =
    \lim_{K\to\infty}
    \frac{1}{K}\sum_{t=1}^K
    \E\big[
        \tilde W_t^2\zeta_t^2
        \boldsymbol{\xi}_t\boldsymbol{\xi}_t^\top
    \big].
    \label{eq:alternative-weight-limits}
\end{equation}
Then
\begin{equation}
    {
    \tilde\Gamma^\top
    \tilde\Sigma^{-1}
    \tilde\Gamma
    \preceq
    \Gamma^\star.}
    \label{eq:precision-loewner-bound}
\end{equation}
Equality is attained by any weight of the form
\begin{equation}
    W_t^\star=c\sigma_t^{-2},
    ~ c>0.
    \label{eq:optimal-precision-weight}
\end{equation}
Consequently, precision weighting maximizes the local asymptotic power for
every $\tau\in\mathbb R^d$.
\end{proposition}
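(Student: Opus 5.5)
The plan is to reduce \eqref{eq:precision-loewner-bound} to a matrix Cauchy--Schwarz inequality in a weighted $L^2$ space. First, I will simplify $\tilde\Sigma$ using \eqref{eq:variance-history-only}. Since $\tilde W_t$ and $\boldsymbol\xi_t$ are $H_t^{X,Y,Z}$-measurable, conditioning gives
\[
\E\big[\tilde W_t^2\zeta_t^2\boldsymbol\xi_t\boldsymbol\xi_t^\top\big]
=\E\big[\tilde W_t^2\sigma_t^2\boldsymbol\xi_t\boldsymbol\xi_t^\top\big].
\]
Hence $\tilde\Sigma=\lim_K K^{-1}\sum_t\E[\tilde W_t^2\sigma_t^2\boldsymbol\xi_t\boldsymbol\xi_t^\top]$. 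For $W_t=c\sigma_t^{-2}$, the same identity gives $\Sigma^\star=c^2\Gamma^\star$ and $\Gamma=c\Gamma^\star$. Therefore $\Gamma^\top\Sigma^{-1}\Gamma=\Gamma^\star(\Gamma^\star)^{-1}\Gamma^\star=\Gamma^\star$, which proves the equality claim. Positive definiteness of $\Gamma^\star$ is used here.

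For the inequality, I will introduce, for each $K$, the random vectors $U_t=\tilde W_t\sigma_t\boldsymbol\xi_t$ and $V_t=\sigma_t^{-1}\boldsymbol\xi_t$. Under the averaged inner product $\langle A,B\rangle_K=K^{-1}\sum_t\E[AB^\top]$, I will write
\[
\tilde\Gamma_K=\langle U,V\rangle_K,\qquad \tilde\Sigma_K=\langle U,U\rangle_K,\qquad \Gamma^\star_K=\langle V,V\rangle_K .
\]
The finite-$K$ matrix Cauchy--Schwarz inequality says that the block matrix
\[
\begin{pmatrix}\tilde\Sigma_K&\tilde\Gamma_K\\ \tilde\Gamma_K^\top&\Gamma^\star_K\end{pmatrix}
=\frac1K\sum_t\E\Big[\begin{pmatrix}U_t\\V_t\end{pmatrix}\begin{pmatrix}U_t\\V_t\end{pmatrix}^{\!\top}\Big]
\]
is positive semidefinite. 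I will then pass to the limit $K\to\infty$. All three blocks converge: $\tilde\Sigma$ and $\tilde\Gamma$ by \eqref{eq:alternative-weight-limits}, and $\Gamma^\star$ by \eqref{eq:Gamma-star}. Since the PSD cone is closed, the limiting block matrix is PSD.

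Next, $\tilde\Sigma$ is positive definite, because Theorem~\ref{thm:local_power} with weight $\tilde W_t$ requires this through Assumption~\ref{ass:mixing}(iv). The Schur complement with respect to the $\tilde\Sigma$ block then gives $\Gamma^\star-\tilde\Gamma^\top\tilde\Sigma^{-1}\tilde\Gamma\succeq0$, which is \eqref{eq:precision-loewner-bound}. The Loewner ordering yields $\tau^\top\tilde\Gamma^\top\tilde\Sigma^{-1}\tilde\Gamma\tau\le\tau^\top\Gamma^\star\tau=\lambda^\star(\tau)$ for every $\tau$. Since $\Pr(\chi_d^2\{\lambda\}>\chi^2_{d,1-\alpha})$ is increasing in $\lambda$, the limiting power formula \eqref{eq:local_power_curve} shows that precision weighting maximizes local power in every direction.

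The main subtlety I expect is bookkeeping rather than depth. The identity $\E[\zeta_t^2\mid H_t^{X,Y,Z}]=\sigma_t^2$ must be applied under the $K$-th local alternative, where $\boldsymbol\xi_t$ and $\sigma_t$ may depend on $K$. The limits used in the Schur-complement step must also be the same ones that enter $\lambda(\tau)$ in Theorem~\ref{thm:local_power}. I will check that $\Gamma$ there coincides with \eqref{eq:local-drift-mean-limit}, and that $\Sigma$ computed with $\zeta_t$ matches $\tilde\Sigma$. This holds because Theorem~\ref{thm:local_power} states its covariance limit with $\zeta_t$ in place of $\varepsilon_t$.
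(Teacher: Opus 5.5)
Your proposal is correct and is essentially the paper's proof. The paper takes the nonnegative limit $K^{-1}\sum_t\E[(a^\top\sigma_t\tilde W_t\boldsymbol\xi_t+b^\top\sigma_t^{-1}\boldsymbol\xi_t)^2]=a^\top\tilde\Sigma a+2a^\top\tilde\Gamma b+b^\top\Gamma^\star b$ and substitutes $a=-\tilde\Sigma^{-1}\tilde\Gamma b$, which is your block-matrix positive-semidefiniteness and Schur-complement step written out in coordinates, and it verifies the equality case for $c\sigma_t^{-2}$ exactly as you do, via the limits $c\Gamma^\star$ and $c^2\Gamma^\star$.
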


\begin{proof}
By \eqref{eq:variance-history-only} and iterated expectation,
\begin{equation}
    \tilde\Sigma
    =
    \lim_{K\to\infty}
    \frac{1}{K}\sum_{t=1}^K
    \E\big[
        \tilde W_t^2\sigma_t^2
        \boldsymbol{\xi}_t\boldsymbol{\xi}_t^\top
    \big].
\end{equation}
Hence, for any $\boldsymbol{a},\boldsymbol{b}\in\mathbb R^d$,
\begin{equation}
\begin{aligned}
    0
    &\leq
    \lim_{K\to\infty}
    \frac{1}{K}\sum_{t=1}^K
    \E\big[
        \big(
            \boldsymbol{a}^\top
            \sigma_t\tilde W_t\boldsymbol{\xi}_t
            +
            \boldsymbol{b}^\top
            \sigma_t^{-1}\boldsymbol{\xi}_t
        \big)^2
    \big] \\
    &=
    \boldsymbol{a}^\top\tilde\Sigma\boldsymbol{a}
    +
    2\boldsymbol{a}^\top\tilde\Gamma\boldsymbol{b}
    +
    \boldsymbol{b}^\top\Gamma^\star\boldsymbol{b}.
\end{aligned}
\end{equation}
Taking $    \boldsymbol{a}
    =
    -\tilde\Sigma^{-1}
    \tilde\Gamma\boldsymbol{b}$ gives
\begin{equation}
    \boldsymbol{b}^\top
    \big(
        \Gamma^\star
        -
        {\tilde\Gamma^\top}
        \tilde\Sigma^{-1}
        \tilde\Gamma
    \big)
    \boldsymbol{b}
    \geq 0
\end{equation}
for every $\boldsymbol{b}\in\mathbb R^d$, proving
\eqref{eq:precision-loewner-bound}. For $W_t^\star=c\sigma_t^{-2}$, the
corresponding limits are $c\Gamma^\star$ and $c^2\Gamma^\star$, respectively.
Therefore,
\[
    {
    (c\Gamma^\star)^\top
    (c^2\Gamma^\star)^{-1}
    (c\Gamma^\star)
    =
    \Gamma^\star,}
\]
so equality holds in \eqref{eq:precision-loewner-bound}.
\end{proof}

Finally, the population precision weight used by the procedure agrees with the oracle weight in \eqref{eq:optimal-precision-weight} with \(c=1\), in empirical mean square.
Under \eqref{eq:local_alt} and \eqref{eq:variance-history-only},
\begin{equation}
\begin{aligned}
    \E\big[
        \varepsilon_t^2
        \mid H_t^{Y,Z}
    \big]
    &=
    \sigma_t^2
    +
    \frac{1}{K}
    \E\big[
        (
            \boldsymbol{\xi}_t^\top\tau
        )^2
        \mid H_t^{Y,Z}
    \big].
\end{aligned}
\end{equation}
Since \(\sigma_t^2\geq c_2^2\), Jensen's inequality gives
\begin{equation}
\begin{aligned}
&\frac1K\sum_{t=1}^K
\Big(\big\{\E[\varepsilon_t^2\given H_t^{Y,Z}]\big\}^{-1}
-\sigma_t^{-2}\Big)^2\\
&\quad\leq\frac{c_2^{-8}\|\tau\|_2^4}{K^3}
\sum_{t=1}^K\E[\|\boldsymbol\xi_t\|_2^4\given H_t^{Y,Z}]
=O_p(K^{-2}),
\end{aligned}
\end{equation}
by Assumption~\ref{ass:mixing}(iv). Together with
\eqref{eq:variance-history-only}, this gives the same local drift and
limiting covariance as \(\sigma_t^{-2}\). Thus, provided it satisfies
Assumption~\ref{ass:population_weights}, the population precision weight
targets the locally optimal weight to first order.

\section{Additional simulation results}
\label{app:implementation}

\subsection{Simulation design and implementation}
\label{sec:simulation-common-details}

All three designs use \(K=2500\), 1000 Monte Carlo replications, and nominal level 0.05. We vary the ratio of the upper to lower conditional variance defined below:
\begin{equation}
\label{eq:simulation-variance-function}
\sigma_t^2=\sigma_L^2+(\sigma_H^2-\sigma_L^2)
\{1+\exp(-4Y_{t-2})\}^{-1},
\quad
\sigma_L^2=\frac{0.40}{1+R_\sigma},
\quad
\sigma_H^2=R_\sigma\sigma_L^2.
\end{equation}
Thus \(R_\sigma\in\{4,8,16\}\) controls the degree of heteroscedasticity.

Conditional means and precision weights are estimated by
generalised random forests \citep{athey2019generalized}, using 300 trees for
the mean regressions and 500 trees for the precision regression. Predictions are obtained from the complete fitted forests. The estimated precisions are truncated below at \(10^{-4}\) and above at their empirical 99th percentile. The adaptive procedure searches 31 leading rectangles of the eight-lag by five-degree dictionary, each containing at most 20 coordinates, and is calibrated with 1999 Gaussian draws.

All comparison methods use the same eight exposure lags and the
outcome history appropriate to the design. We consider the classical
Granger \(F\)-test \citep{granger1969investigating}, a linear VAR test
\citep{lutkepohl2005new}, an HC3-robust dynamic linear Wald test
\citep{mackinnon1985some}, and the weighted GCM with GAM nuisance regressions
\citep{scheidegger2022weighted}. Their specifications were fixed before the
Monte Carlo results were examined.

\subsection{Additional simulation designs}
\label{sec:simulation-additional-results}

The primary nonlinear shared-history design is reported in the
main paper. Here we describe the two additional designs and report their
complete heteroscedasticity grids.

\medskip
\noindent{\it \textbf{Nonlinear distributed-signal design}.}
\label{sec:simulation-nonlinear-lags}

The second design has a nonlinear exposure effect distributed
over lags 1--3 and a linear, smoothly time-varying conditioning history. The
data were generated from
\begin{align}
X_t={}&0.20X_{t-1}-0.05X_{t-2}
+\{0.55+0.10\cos(2\pi t/1000)\}Y_{t-1}+u_t,
\label{eq:sim-nonlinear-x-supp}\\
Y_t={}&\{0.50+0.25\sin(2\pi t/1000)\}Y_{t-1}-0.10Y_{t-2}
+\beta \sum_{j=1}^3\frac{X_{t-j}^2}{1+X_{t-j}^2}+\varepsilon_t.
\label{eq:sim-nonlinear-y-supp}
\end{align}
Here \(u_t\sim N(0,0.20)\),
\(\varepsilon_t\sim N(0,\sigma_t^2)\), \(\beta=\delta/\sqrt K\), and the
variance follows \eqref{eq:simulation-variance-function}. The specified-alternative benchmark uses the three transformed coordinates $X_{t-1}, X_{t-2}$ and $X_{t-3}$. By contrast, the 40D procedures use the polynomial dictionary, which can approximate but does not contain the transformed coordinates exactly.

The procedures remain well calibrated under the null. Under the
alternatives, adaptive aggregation recovers much of the power lost by using
the full dictionary, and precision weighting becomes increasingly helpful
as heteroscedasticity grows. The specified-alternative benchmark is the most
powerful, as expected, but requires knowledge of the form and location of the
signal.

\medskip
\noindent{\it \textbf{Delayed linear-signal design}.}
\label{sec:simulation-delayed-lags}

The third design has a linear exposure effect at lags 4--5 and
a longer, smoothly time-varying conditioning history. We used the same
equation for \(X_t\) as in the nonlinear distributed-signal design and
generated the outcome according to
\begin{align}
Y_t={}&\{0.40+0.15\sin(2\pi t/1000)\}Y_{t-1}
+0.15Y_{t-2}-0.10Y_{t-3}+0.05Y_{t-4}+\beta 
\Big(\frac{X_{t-4}+X_{t-5}}{\sqrt 2}\Big)
+\varepsilon_t.
\label{eq:sim-delayed-y-supp}
\end{align}
The remaining construction is unchanged. This setting challenges the
adaptive procedure because the smallest leading-lag rectangle containing the
signal must also include the uninformative first three exposure lags. The
specified-alternative benchmark uses only \(X_{t-4}\) and \(X_{t-5}\), while
the 40D procedures use the full lag--degree dictionary.

The proposed procedures again maintain the nominal level.
Adaptation improves on the full 40D test even though the signal begins only
at lag 4, and precision weighting provides a further gain as
heteroscedasticity increases. The estimated precision weights remain close
to their oracle counterparts in both additional designs.

Figure~\ref{fig:simulation-additional-dgps-full} reports the
complete factorial results for the two additional designs. The null rows
confirm that the power gains are not obtained at the cost of size control;
the alternative rows separate the contributions of precision weighting and
adaptive aggregation.

\begin{figure}[htbp]
\centering
\includegraphics[width=0.99\linewidth]
{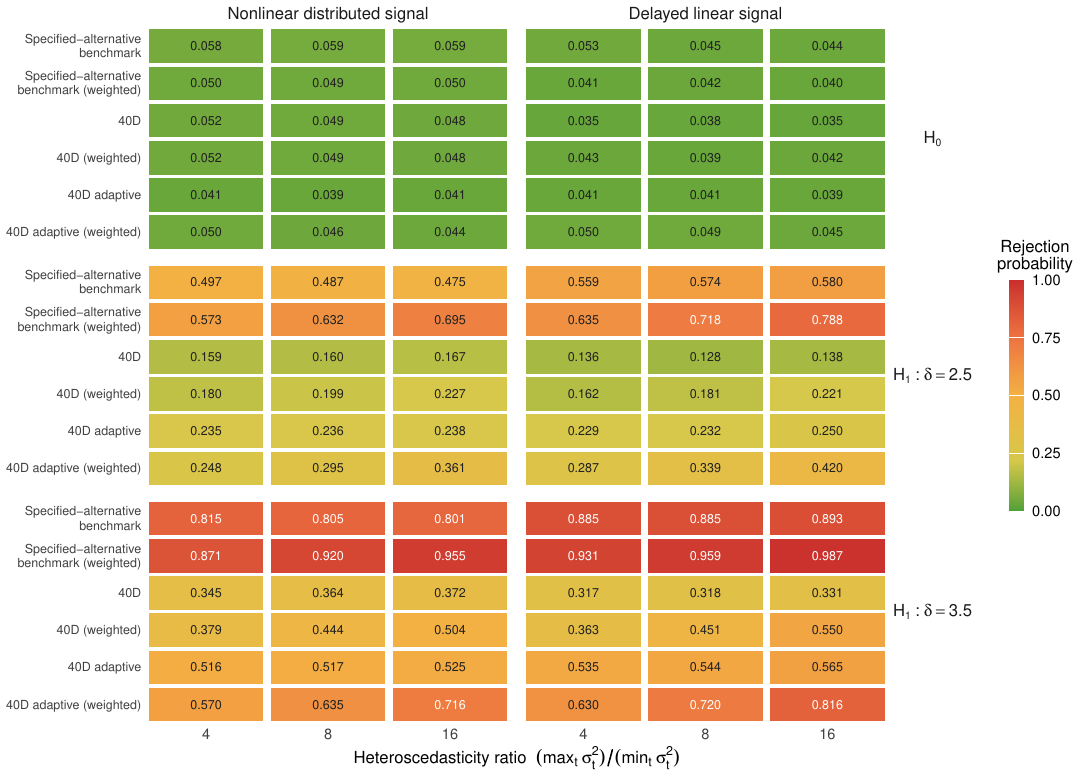}
\caption{Empirical rejection rates for the nonlinear
distributed-signal and delayed linear-signal designs, based on 1000 Monte
Carlo replications. Rows of panels correspond to the null, \(\delta=2.5\),
and \(\delta=3.5\); columns of panels correspond to the two designs. Within
each panel, columns give \(R_\sigma\in\{4,8,16\}\). The
specified-alternative benchmark uses the three active nonlinear coordinates
in the nonlinear distributed-signal design and the two active linear
coordinates in the delayed linear-signal design. ``Weighted'' denotes
estimated precision weighting.}
\label{fig:simulation-additional-dgps-full}
\end{figure}

\section{Additional analyses of the OhioT1DM cohorts}
\label{sec:ohiot1dm_supplement}

\subsection{Cohorts and analysis specification}
\label{sec:ohiot1dm_cohort_details}

Within each participant, the training and test files were joined
chronologically and duplicate boundary events were removed. Measurements
were aligned to a five-minute grid, on which all lags were constructed before
incomplete analysis rows were excluded. The 2018 source was Basis step count;
the 2020 source was five-minute mean Empatica acceleration magnitude. The response was the next five-minute glucose change. The 51-dimensional conditioning vector comprised current glucose; lags \(0,\ldots,7\) of glucose change, basal insulin, bolus insulin, carbohydrate intake and sleep fraction; total insulin and carbohydrate intake over each of the lag ranges \(8{:}11\), \(12{:}23\), \(24{:}47\) and \(48{:}71\); and 24-hour sine and cosine terms. Thus its dimension was \(1+5\times8+2\times4+2=51\). At the current time and seven preceding lags, the source was represented by standardized polynomial terms of degrees one through five, giving 40 coordinates.

Each trajectory was analysed separately. Conditional means and the
conditional response variance were estimated by 300-tree generalized random
forests. Precision weights were truncated and normalized within participant,
and the adaptive statistic was calibrated over the 40 leading lag--degree
subsets using 9,999 Gaussian draws. No nuisance fits, residuals, weights or
covariance estimates were shared between participants.

Table~\ref{tab:ohiot1dm_primary_participant} reports the participant-level
results for the adaptive precision-weighted procedure used for cohort
inference. All six 2018 values attain the Monte Carlo resolution limit. The
2020 evidence is more heterogeneous, plausibly reflecting its different
activity measurement, lower data availability or genuine between-person
variation.

\begin{table}[t]
\centering
\caption{Participant-level adaptive precision-weighted GTCM results}
\label{tab:ohiot1dm_primary_participant}
\begin{threeparttable}
\begin{tabular}{llrr}
\toprule
Cohort & Participant & \(K\) & \(p\)-value \\
\midrule
2018 Basis & 559 & 11{,}484 & 0.0001 \\
           & 563 & 12{,}484 & 0.0001 \\
           & 570 & 12{,}694 & 0.0001 \\
           & 575 & 12{,}267 & 0.0001 \\
           & 588 & 13{,}888 & 0.0001 \\
           & 591 & 11{,}804 & 0.0001 \\
\addlinespace
2020 Empatica & 540 & 5{,}329  & 0.0007 \\
              & 544 & 9{,}317  & 0.3232 \\
              & 552 & 4{,}799  & 0.0696 \\
              & 567 & 5{,}483  & 0.3257 \\
              & 584 & 10{,}194 & 0.0530 \\
              & 596 & 3{,}231  & 0.0011 \\
\bottomrule
\end{tabular}

\begin{tablenotes}[flushleft]
\footnotesize
\item \textit{Note:}
Entries are participant-specific adaptive precision-weighted \(p\)-values.
With 9,999 calibration draws, the minimum attainable \(p\)-value is 0.0001.
\end{tablenotes}
\end{threeparttable}
\end{table}

\subsection{Information accumulation and stability across random windows}
\label{sec:ohiot1dm_random_windows}

The full-record rejection does not reveal whether evidence accumulates across
the follow-up or is driven by a particular part of the records. To distinguish
these possibilities, we sampled 20 contiguous windows containing a fraction
\(\rho\in\{0.2,0.4,0.6,0.8\}\) of each 2018 participant's eligible
observations; the full record, \(\rho=1\), was analysed once. For draw \(b\),
let \(p_{i\rho b}\) be the adaptive precision-weighted \(p\)-value obtained
from participant \(i\)'s sampled window. The cohort evidence was
\begin{equation}
 T_{\rho b}=-2\sum_{i=1}^{6}\log p_{i\rho b},\quad
 p^{\mathrm{global}}_{\rho b}
 =\Pr\{\chi^2_{12}\geq T_{\rho b}\},\quad
 E_{\rho b}=-\log_{10}p^{\mathrm{global}}_{\rho b}.
 \label{eq:ohiot1dm_window_evidence}
\end{equation}
Starting positions were drawn independently across participants and fractions,
and all nuisance regressions, feature standardizations and precision weights
were recomputed within each window. Thus this analysis complements the
full-record test by assessing temporal localization and the accumulation of
evidence. Because sampled windows overlap, their distribution measures
sensitivity to window location, not repeated-sampling uncertainty.

\begin{figure}[t]
\centering
\includegraphics[width=0.80\linewidth]{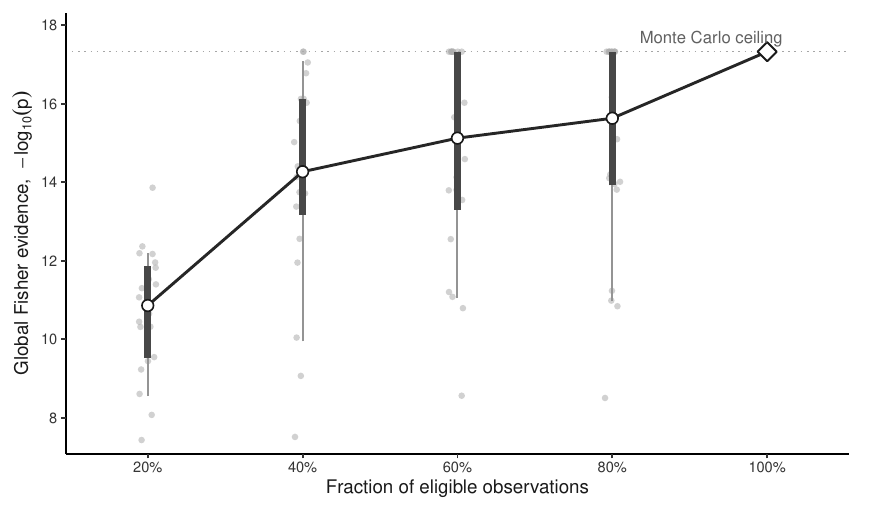}
\caption{Information accumulation for the adaptive precision-weighted GTCM
in the 2018 OhioT1DM cohort. Each pale point is the global Fisher evidence
from one draw of six participant-specific windows. At 20--80\% retention,
open circles show medians over 20 draws, thick bars show interquartile
ranges, and thin bars show 10th--90th percentile ranges. The diamond is the
single full-record result. The dotted line is the evidence ceiling when all
six participant \(p\)-values attain the Monte Carlo minimum \(10^{-4}\).
The 5\% rejection threshold, \(-\log_{10}(0.05)=1.30\), lies below the
plotted range. Bars describe variation across overlapping windows and are
not confidence intervals.}
\label{fig:ohiot1dm_random_window_stability}
\end{figure}

Figure~\ref{fig:ohiot1dm_random_window_stability} shows that median evidence
increases from 10.86 at 20\% retention to 15.63 at 80\% retention. Every
partial-record combination rejects the cohort global null, although the
spread across window locations remains appreciable. The full-record evidence
is 17.33 and reaches the ceiling induced by all six participant \(p\)-values
attaining the Monte Carlo minimum. Thus the cohort conclusion is not confined
to one portion of the records, while the increasing median is consistent
with information accumulating as longer stretches are retained.

\subsection{Independent pseudo-exposure calibration}
\label{sec:ohiot1dm_pseudo_exposure}

The very small observed cohort \(p\)-values do not by themselves establish
that the complete empirical procedure is well calibrated. We therefore used
an independent pseudo-exposure experiment as an application-specific negative
control. For repetition \(b\), observed activity values for participant \(i\)
were replaced according to
\begin{equation}
X^{(b)}_{it}\mathrel{\mathop{\sim}^{\mathrm{ind}}}\widehat F_i,
 \quad t\in\mathcal O_i,
 \label{eq:ohiot1dm_pseudo_exposure}
\end{equation}
where \(\widehat F_i\) is the participant's empirical activity distribution
and \(\mathcal O_i\) is the set of observed activity times. This preserves the
participant-specific marginal distribution and activity-missingness pattern,
including zero inflation where present, while removing temporal association
with glucose and the recorded history. The glucose response and conditioning
history were unchanged.

The 40 source features and their nuisance regressions were reconstructed in
each repetition, and participants were tested separately using the adaptive
precision-weighted GTCM. If \(p_i^{(b)}\) denotes the resulting participant
\(p\)-value, the cohort test and its empirical rejection rate were
\begin{equation}
 T^{(b)}=-2\sum_{i=1}^{6}\log p_i^{(b)},\quad
 p_{\mathrm{global}}^{(b)}=\Pr\{\chi^2_{12}\geq T^{(b)}\},
 \quad
 \widehat\alpha=\frac{1}{500}\sum_{b=1}^{500}
 \mathbf 1\{p_{\mathrm{global}}^{(b)}\leq0.05\}.
 \label{eq:ohiot1dm_pseudo_calibration}
\end{equation}
This calibration analysis complements the observed-data rejection by checking
whether the same fitting and aggregation pipeline produces excess discoveries
when the tested temporal information has been deliberately removed.

\begin{table}[t]
\centering
\caption{Cohort-level calibration under independent pseudo-exposures}
\label{tab:ohiot1dm_pseudo_exposure}
\begin{threeparttable}
\begin{tabular}{lccc}
\toprule
Cohort & Rejections & Rejection rate (95\% interval) & Median global \(p\) \\
\midrule
2018 Basis    & 15/500 & 0.030 (0.017--0.049) & 0.555 \\
2020 Empatica & 18/500 & 0.036 (0.021--0.056) & 0.537 \\
\bottomrule
\end{tabular}

\begin{tablenotes}[flushleft]
\footnotesize
\item \textit{Note:}
Intervals are exact binomial 95\% confidence intervals. Rejection is defined at the cohort level at \(\alpha=0.05\), after Fisher combination of the six participant-specific tests.
\end{tablenotes}
\end{threeparttable}
\end{table}

Table~\ref{tab:ohiot1dm_pseudo_exposure} shows no inflation above the nominal
5\% level.  The 2018 estimate is mildly conservative, while the 2020 interval
contains 0.05.  This experiment is an application-specific negative control:
it assesses calibration after deliberately removing temporal information
from the activity process, rather than positing a complete generative model
for the observed longitudinal records.
\section{Complementary Lemmas}
\label{app:complementary-lemmas}

\subsection{Conditional second-moment criterion}

\begin{lemma}[Conditional second-moment criterion]
\label{lem:conditional-second-moment}
Let \(\theta_K\geq0\) and let \(\mathcal G_K\) be a sigma-field.
If
\[
\E\big[\theta_K\given\mathcal G_K\big]=o_p(1),
\]
then \(\theta_K=o_p(1)\). The same conclusion holds if, for some
\(\eta>0\), there is a sequence \(\bar\theta_K\geq0\) such that
\(\bar\theta_K=o_p(1)\),
\[
\sup_K\E\big[\bar\theta_K^{1+\eta}\big]<\infty,
\qquad
\E\big[\theta_K\big]\leq\E\big[\bar\theta_K\big].
\]
Consequently, for a fixed-dimensional random vector
\(\boldsymbol\theta_K\),
\[
\Big\|
\E\big[\boldsymbol\theta_K\boldsymbol\theta_K^\top
\given\mathcal G_K\big]
\Big\|_{\mathrm{op}}=o_p(1)
\quad\Longrightarrow\quad
\boldsymbol\theta_K=o_p(1).
\]
\end{lemma}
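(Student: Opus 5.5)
The plan is to prove the three assertions of Lemma~\ref{lem:conditional-second-moment} in turn. Each one reduces to Markov's inequality combined with a uniform-integrability argument; no time-series structure is used.

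For the first assertion, fix $\epsilon>0$ and apply the conditional Markov inequality to the nonnegative variable $\theta_K$:
\[
\Pr\big(\theta_K>\epsilon\given\mathcal G_K\big)\leq\min\Big\{1,\ \epsilon^{-1}\E[\theta_K\given\mathcal G_K]\Big\}=:U_K .
\]
By hypothesis $\E[\theta_K\given\mathcal G_K]=o_p(1)$, so $U_K\xrightarrow{p}0$. Moreover $0\leq U_K\leq1$. Bounded convergence in probability then gives $\E[U_K]\to0$, either by splitting on $\{U_K>\eta\}$ or by arguing along subsequences that converge almost surely. Taking expectations in the display yields $\Pr(\theta_K>\epsilon)\leq\E[U_K]\to0$, which is $\theta_K=o_p(1)$. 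The truncation at $1$ is the one point needing care: it is what makes the argument work without any integrability of $\E[\theta_K\given\mathcal G_K]$ itself.

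For the second assertion, the bound $\sup_K\E[\bar\theta_K^{1+\eta}]<\infty$ makes $\{\bar\theta_K\}$ uniformly integrable, by the de la Vallée-Poussin criterion. Together with $\bar\theta_K=o_p(1)$, Vitali's theorem gives $\E[\bar\theta_K]\to0$. Concretely, for any $M$,
\[
\E[\bar\theta_K]\leq \E[\bar\theta_K\mathbf 1\{\bar\theta_K\leq M\}]+M^{-\eta}\E[\bar\theta_K^{1+\eta}].
\]
The first term tends to $0$ by bounded convergence; the second is made small by choosing $M$ large. Hence $\E[\theta_K]\leq\E[\bar\theta_K]\to0$, and Markov's inequality gives $\theta_K=o_p(1)$.

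For the final implication, take $\theta_K:=\|\boldsymbol\theta_K\|_2^2=\Tr(\boldsymbol\theta_K\boldsymbol\theta_K^\top)\geq0$. Using linearity of conditional expectation and the fact that the trace of a $d\times d$ positive semidefinite matrix is at most $d$ times its operator norm,
\[
\E[\|\boldsymbol\theta_K\|_2^2\given\mathcal G_K]=\Tr\,\E[\boldsymbol\theta_K\boldsymbol\theta_K^\top\given\mathcal G_K]\leq d\,\big\|\E[\boldsymbol\theta_K\boldsymbol\theta_K^\top\given\mathcal G_K]\big\|_{\mathrm{op}}=o_p(1).
\]
The first assertion then gives $\|\boldsymbol\theta_K\|_2^2=o_p(1)$, and hence $\boldsymbol\theta_K=o_p(1)$ because $d$ is fixed. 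I expect no real obstacle beyond making the bounded-convergence step in the first assertion rigorous.
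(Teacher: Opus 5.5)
Your proposal is correct and follows the paper's proof essentially step for step. You use the truncated conditional Markov bound with bounded convergence for the first claim, uniform integrability from the $(1+\eta)$-moment followed by Markov for the second, and the bound $\operatorname{tr}\le d\,\|\cdot\|_{\mathrm{op}}$ to reduce the vector claim to the scalar one. Your version is slightly more explicit than the paper's, since you justify $\E[U_K]\to0$ and $\E[\bar\theta_K]\to0$ where the paper only asserts them.
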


\begin{proof}
For every \(\epsilon>0\), conditional Markov's inequality gives
\[
\Pr(\theta_K>\epsilon\given\mathcal G_K)
\leq
\min\Big\{1,
\frac{\E\big[\theta_K\given\mathcal G_K\big]}{\epsilon}\Big\}.
\]
The right-hand side is bounded and converges to zero in probability, so its
expectation converges to zero. This proves the first statement. For the
second, the bounded \((1+\eta)\)-moment and \(\bar\theta_K=o_p(1)\) give
\(\E\big[\bar\theta_K\big]\to0\), and ordinary Markov's inequality then gives
\[
\Pr(\theta_K>\epsilon)
\leq \frac{\E\big[\theta_K\big]}{\epsilon}
\leq \frac{\E\big[\bar\theta_K\big]}{\epsilon}
\longrightarrow0.
\]
For the vector statement, take
\(\theta_K=\|\boldsymbol\theta_K\|_2^2\) and use
\[
\E\big[\|\boldsymbol\theta_K\|_2^2\given\mathcal G_K\big]
=
\operatorname{tr} \Big\{
\E\big[\boldsymbol\theta_K\boldsymbol\theta_K^\top
\given\mathcal G_K\big]
\Big\}
\leq
d\Big\|
\E\big[\boldsymbol\theta_K\boldsymbol\theta_K^\top
\given\mathcal G_K\big]
\Big\|_{\mathrm{op}}.
\]
\end{proof}

\subsection{A weak law of large numbers for mixing sequences}

\begin{lemma}[A weak law of large numbers for mixing sequences]
\label{lemma:WLLN-MS}
Suppose \(\{\zeta_t:t\geq1\}\) is strongly mixing, with coefficients
\[
\alpha(h)=
\sup_t\sup_{A\in\mathcal F(H_t^\zeta),\,
B\in\mathcal F(A_{t+h}^\zeta)}
\big|\Pr(A\cap B)-\Pr(A)\Pr(B)\big|,
\]
and \(\alpha(h)\to0\) as \(h\to\infty\). If
\(\E[\zeta_t]=0\) and
\(\sup_t\E[|\zeta_t|^{1+\eta}]\leq c\) for some \(c,\eta>0\), then
\[
\frac1K\sum_{t=1}^K\zeta_t=o_p(1).
\]
\end{lemma}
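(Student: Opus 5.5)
The plan is a standard truncation plus blocking argument: reduce to bounded summands, then control the variance of the bounded part through the mixing coefficients. Fix $M>0$ and split $\zeta_t=\zeta_t^{(M)}+\bar\zeta_t^{(M)}$. Here
\[
\zeta_t^{(M)}=\zeta_t\mathbf 1\{|\zeta_t|\le M\}-\E[\zeta_t\mathbf 1\{|\zeta_t|\le M\}]
\]
is the centred truncated part, and $\bar\zeta_t^{(M)}$ is the centred tail remainder. Since $\E[\zeta_t]=0$, both pieces are centred. The uniform $(1+\eta)$-moment bound gives
\[
\E|\bar\zeta_t^{(M)}|\le 2\E[|\zeta_t|\mathbf 1\{|\zeta_t|>M\}]\le 2cM^{-\eta}
\]
uniformly in $t$. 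Markov's inequality then yields
\[
\Pr\Big(\Big|\frac1K\sum_t\bar\zeta_t^{(M)}\Big|>\epsilon\Big)\le 2cM^{-\eta}/\epsilon,
\]
uniformly in $K$. So the tail piece can be made arbitrarily small in probability by taking $M$ large, without any use of mixing.

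For the truncated piece, I would bound the second moment directly:
\[
\E\Big(\frac1K\sum_{t=1}^K\zeta_t^{(M)}\Big)^2\le\frac1{K^2}\sum_{t,s}|\Cov(\zeta_t^{(M)},\zeta_s^{(M)})|.
\]
The truncated variables are bounded by $2M$, and $\zeta_t^{(M)}$ is measurable with respect to $\mathcal F(H_t^\zeta)$ while $\zeta_s^{(M)}$ is measurable with respect to $\mathcal F(A_{s-1}^\zeta)$. The covariance inequality for bounded strongly mixing variables (Ibragimov's inequality, $|\Cov(U,V)|\le 4\|U\|_\infty\|V\|_\infty\alpha$) then gives $|\Cov(\zeta_t^{(M)},\zeta_s^{(M)})|\le 16M^2\alpha(|t-s|)$. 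Hence the second moment is at most
\[
\frac{16M^2}{K}\Big\{\alpha(0)+2\sum_{h=1}^{K-1}\alpha(h)\Big\}\cdot\frac{1}{1}\le 16M^2\Big\{\frac{1}{K}+\frac{2}{K}\sum_{h=1}^{K-1}\alpha(h)\Big\},
\]
where for $h=0$ I simply use the bound $4M^2$ on the variance. Because $\alpha(h)\to0$, the Cesàro mean $K^{-1}\sum_{h<K}\alpha(h)$ tends to zero, so for each fixed $M$ the truncated average is $o_p(1)$ by Chebyshev's inequality.

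To conclude, for given $\epsilon,\gamma>0$, I first choose $M$ so that $2cM^{-\eta}/\epsilon<\gamma/2$. With $M$ fixed, I then take $K$ large enough that the Chebyshev bound on the truncated part is below $\gamma/2$. Combining the two bounds with a union bound gives
\[
\limsup_K\Pr\Big(\Big|K^{-1}\sum_t\zeta_t\Big|>2\epsilon\Big)\le\gamma .
\]

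The only delicate point is the covariance inequality. The indexing convention for $\alpha(h)$ must match the separation $|t-s|$, with past $\sigma$-field $\mathcal F(H_t^\zeta)$ and future $\sigma$-field $\mathcal F(A_{t+h}^\zeta)$. An off-by-one shift in this convention is harmless, since it only reindexes the sum and does not affect the Cesàro argument. Notably, only $\alpha(h)\to0$ is needed, with no summability, precisely because the summands have been truncated to be bounded.
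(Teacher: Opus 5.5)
Your proposal is correct and follows essentially the same route as the paper: truncate at level $M$, bound the variance of the centred bounded part using the covariance inequality for bounded strongly mixing variables and a Ces\`aro argument that needs only $\alpha(h)\to0$, control the tail piece by the uniform $(1+\eta)$-moment bound and Markov's inequality, and let $K\to\infty$ before $M\to\infty$. Under the paper's convention $A_{t+h}^\zeta$ begins at index $t+h+1$, so the covariance bound is really $16M^2\alpha(|t-s|-1)$; as you note, this shift does not affect the Ces\`aro step.
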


\begin{proof}
Fix \(M>0\). The bounded transformations
\(\zeta_t\mathbf 1\{|\zeta_t|\leq M\}\) have mixing coefficients no
larger than \(\alpha(h)\). The covariance inequality for bounded strongly
mixing variables therefore gives
\begin{align*}
&\Var\Big\{
\frac1K\sum_{t=1}^K
\Big(\zeta_t\mathbf 1\{|\zeta_t|\leq M\}
-\E[\zeta_t\mathbf 1\{|\zeta_t|\leq M\}]\Big)
\Big\}\\
&\quad\leq
\frac{CM^2}{K}
+\frac{CM^2}{K}\sum_{h=1}^{K-1}\alpha(h)
=o(1),
\end{align*}
where the last equality follows from Ces\`aro convergence. Chebyshev's
inequality consequently makes the centred truncated average \(o_p(1)\)
for every fixed \(M\).

The moment condition also gives, uniformly in \(t\),
\[
\E\big[|\zeta_t|\mathbf 1\{|\zeta_t|>M\}\big]
\leq
M^{-\eta}\E[|\zeta_t|^{1+\eta}]
\leq cM^{-\eta}.
\]
Because \(\E[\zeta_t]=0\), the absolute mean of the truncated average is
at most \(cM^{-\eta}\). Markov's inequality also gives
\[
\Pr\Big\{
\Big|\frac1K\sum_{t=1}^K
\zeta_t\mathbf 1\{|\zeta_t|>M\}\Big|>\epsilon
\Big\}
\leq \frac{c}{\epsilon M^\eta}.
\]
Letting first \(K\to\infty\) and then \(M\to\infty\) proves the result.
\end{proof}

\subsection{Validity of the adaptive Monte Carlo calibration}

{
\begin{proposition}
\label{prop:adaptive_calibration}
Suppose that the conditions of Theorem~\ref{thm:vector_gtcm} hold. Write
$[d]=\{1,\ldots,d\}$, and let
$\mathcal S\subseteq 2^{[d]}\setminus\{\varnothing\}$ be the fixed finite
nonempty collection of candidate coordinate subsets used by
Algorithm~\ref{alg:bootstrap_subset}. Let $B\in\mathbb N$ be the number of
independent Gaussian Monte Carlo draws used by the algorithm. Then, for
every $\alpha\in(0,1)$,
\begin{equation}
\label{eq:adaptive_size_control}
\Pr\{p_{\mathrm{adj}}\leq\alpha\}
\longrightarrow
\frac{\lfloor\alpha(B+1)\rfloor}{B+1}
\leq\alpha.
\end{equation}
For fixed \(B\), the adjusted p-value is discrete and the test is generally
conservative. If \(B=B_K\to\infty\), then
\(p_{\mathrm{adj}}\xrightarrow{d}\operatorname{Unif}(0,1)\) and
$\Pr\{p_{\mathrm{adj}}\leq\alpha\}\to\alpha$. Thus the complete subset
search, rather than any one selected subset alone, is calibrated under the
null.
\end{proposition}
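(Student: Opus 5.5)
The plan is to reduce the calibration to a continuous-mapping argument built on Theorem~\ref{thm:vector_gtcm}. Define the map $\Phi:\mathbb R^d\times\mathcal P_d\to\mathbb R$, where $\mathcal P_d$ is the cone of positive-definite $d\times d$ matrices, by
\[
\Phi(v,S)=\max_{\pmb s\in\mathcal S}\Big[-\log\big\{1-F_{\chi^2_{|\pmb s|}}\big(v_{\pmb s}^\top S_{\pmb s}^{-1}v_{\pmb s}\big)\big\}\Big].
\]
Then $M^{(K)}=\Phi(T^{(K)},\hat\Sigma)$ and $M_b=\Phi(T_b,\hat\Sigma)$. Because principal submatrices of a positive-definite matrix are positive definite, and $\mathcal S$ is finite, $\Phi$ is continuous on $\mathbb R^d\times\mathcal P_d$. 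Theorem~\ref{thm:vector_gtcm} gives $T^{(K)}\xrightarrow{d}\mathcal N_d(0,\Sigma)$ and $\hat\Sigma\xrightarrow{p}\Sigma$ with $\Sigma$ positive definite.

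\textbf{Step 1: joint convergence.} Represent the draws as $T_b=\hat\Sigma^{1/2}G_b$, where $G_1,\ldots,G_B$ are i.i.d.\ $\mathcal N_d(0,I)$ and independent of the data. Since $\hat\Sigma^{1/2}\xrightarrow{p}\Sigma^{1/2}$, Slutsky's theorem together with independence gives
\[
(T^{(K)},T_1,\ldots,T_B,\hat\Sigma)\xrightarrow{d}(U_0,U_1,\ldots,U_B,\Sigma),
\]
where $U_0,\ldots,U_B$ are i.i.d.\ $\mathcal N_d(0,\Sigma)$. The continuous mapping theorem then yields $(M^{(K)},M_1,\ldots,M_B)\xrightarrow{d}(M_0^*,\ldots,M_B^*)$, with $M_b^*=\Phi(U_b,\Sigma)$ i.i.d.

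\textbf{Step 2: exchangeability and the rank.} The event $\{p_{\mathrm{adj}}\le\alpha\}$ is the event that $\sum_b\mathbf 1(M_b\ge M^{(K)})\le\lfloor\alpha(B+1)\rfloor-1$. This is a function of the vector $(M^{(K)},M_1,\ldots,M_B)$ whose discontinuities lie only on the ties $\{M_b=M^{(K)}\}$. I would therefore show that $M_0^*$ has a continuous distribution, so that ties occur with probability zero in the limit, and then apply the portmanteau theorem. In the limit the $M_b^*$ are i.i.d.\ and tie-free, so the rank of $M_0^*$ among $M_0^*,\ldots,M_B^*$ is uniform on $\{1,\ldots,B+1\}$. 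This gives the limit $\lfloor\alpha(B+1)\rfloor/(B+1)\le\alpha$.

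\textbf{Step 3: the case $B=B_K\to\infty$.} Let $F_K$ be the conditional distribution function of $\Phi(\hat\Sigma^{1/2}G,\hat\Sigma)$ given the data. The map $S\mapsto$ (law of $\Phi(S^{1/2}G,S)$) is continuous, and the limit law is continuous, so Pólya's theorem gives $\sup_x|F_K(x)-F(x)|\xrightarrow{p}0$, where $F$ is the law of $M_0^*$. Next, the Dvoretzky--Kiefer--Wolfowitz inequality, applied conditionally, shows that $p_{\mathrm{adj}}=1-\hat F_{B}(M^{(K)}-)+O(1/B)$ and that $p_{\mathrm{adj}}-\{1-F_K(M^{(K)})\}\xrightarrow{p}0$. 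Finally, $1-F(M^{(K)})\xrightarrow{d}\operatorname{Unif}(0,1)$ by the continuous mapping theorem and continuity of $F$.

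\textbf{Main obstacle.} The hardest step is continuity of the law of $M_0^*=\max_{\pmb s}g_{\pmb s}(U_{\pmb s}^\top\Sigma_{\pmb s}^{-1}U_{\pmb s})$. Each $g_{\pmb s}$ is strictly increasing and continuous. For a fixed $c$, the event $\{M_0^*=c\}$ is contained in the finite union over $\pmb s$ of the sets $\{U_{\pmb s}^\top\Sigma_{\pmb s}^{-1}U_{\pmb s}=g_{\pmb s}^{-1}(c)\}$. Each of these is an ellipsoid surface in the nondegenerate Gaussian $U$, which is a nontrivial quadric and hence has Lebesgue measure zero. Since $\Sigma$ is positive definite, $U$ has a density, so every such set has probability zero and $M_0^*$ has no atoms.
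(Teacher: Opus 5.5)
Your proposal is correct and follows essentially the same route as the paper. Both arguments rest on four steps: joint convergence of the observed score and the $B$ Gaussian replicates to $B+1$ i.i.d.\ $\mathcal N_d(0,\Sigma)$ vectors; atomlessness of the limiting maximum (the paper just notes that each $U_{\pmb s}^\top\Sigma_{\pmb s}^{-1}U_{\pmb s}\sim\chi^2_{|\pmb s|}$ is continuous, which gives your union bound in one line); a uniform-rank argument for fixed $B$; and, for $B_K\to\infty$, concentration of the Monte Carlo fraction around the conditional tail probability followed by the probability integral transform.

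Your DKW step is more than you need, since the evaluation point $M^{(K)}$ is data-measurable and the paper's conditional Chebyshev bound $1/(4B_K)$ suffices. Your P\'olya argument, on the other hand, makes explicit the convergence of the conditional tail probability to $1-F(M_0^*)$, which the paper only asserts. One small point: define $\Phi$ arbitrarily off the positive-definite cone, because $\hat\Sigma$ is nonsingular only with probability tending to one.
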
}

\begin{proof}
Since $\hat\Sigma\xrightarrow{p}\Sigma$ and $\Sigma$ is positive definite,
all covariance submatrices indexed by the fixed finite collection
$\mathcal S$ are nonsingular with probability tending to one.

First suppose that $B$ is fixed. By Theorem~\ref{thm:vector_gtcm},
consistency of $\hat\Sigma$, and independence of the Monte Carlo draws, the
observed score and its $B$ Gaussian replicates converge jointly to $B+1$
independent $\mathcal N_d(0,\Sigma)$ random vectors. Since $\mathcal S$ is
finite and the relevant covariance submatrices are nonsingular, the
adaptive statistic is a continuous function of the score and covariance
matrix. Its limiting distribution has no atoms, since it is
the maximum of finitely many transformed chi-squared variables, each
with a continuous distribution. The observed and
simulated adaptive statistics therefore converge jointly to $B+1$ i.i.d.\
continuous random variables.

Consequently, the limiting rank of the observed statistic among the
$B+1$ values is uniform on $\{1,\ldots,B+1\}$. By the definition of
$p_{\mathrm{adj}}$,
\begin{equation}
    \Pr \Big(p_{\mathrm{adj}}\leq\alpha\Big)
    \longrightarrow
    \frac{\lfloor \alpha(B+1)\rfloor}{B+1},
\end{equation}
which proves \eqref{eq:adaptive_size_control}. Now let $B=B_K\to\infty$. Conditional on $T^{(K)}$ and $\hat\Sigma$, the
Monte Carlo fraction entering $p_{\mathrm{adj}}$ is an average of
$B_K$ independent Bernoulli variables, with conditional variance bounded
by $1/(4B_K)$. Thus
\begin{equation}
    p_{\mathrm{adj}}
    =
    \Pr\big(
        T_{\mathrm{ad}}^{*}{\geq} T_{\mathrm{ad}}^{(K)}
        \given
        T^{(K)},\hat\Sigma
    \big)
    +o_p(1),
\end{equation}
where $T_{\mathrm{ad}}^{(K)}$ and $T_{\mathrm{ad}}^{*}$ denote the observed
and Gaussian Monte Carlo adaptive statistics, respectively.

By Theorem~\ref{thm:vector_gtcm}, consistency of $\hat\Sigma$, and
continuity of the limiting adaptive statistic, the conditional probability
on the right converges in distribution to $1-F(T)$, where $F$ is the
continuous distribution function of the limiting adaptive statistic
$T$. Hence, by the probability integral transform,
\begin{equation}
    p_{\mathrm{adj}}
    \xrightarrow{d}
    \operatorname{Unif}(0,1).
\end{equation}
Therefore
$\Pr(p_{\mathrm{adj}}\leq\alpha)\to\alpha$ for every
$\alpha\in(0,1)$.
\end{proof}


\end{document}